    \newtheorem{theorem}{Theorem}
    \newtheorem{lemma}{Lemma}
    \newtheorem{definition}{Definition}
    \newtheorem{claim}[theorem]{\bf Claim}
    \newtheorem{example}{Example}
    \newtheorem{construction}{Construction}
    \let\olddefinition\definition
    \renewcommand{\definition}{\olddefinition\normalfont}
    \let\oldexample\example
    \renewcommand{\example}{\oldexample\normalfont}
    \newtheorem{remark}{Remark}
    \let\oldremark\remark
    \renewcommand{\remark}{\oldremark\normalfont}
    \let\oldassumption\assumption
    \renewcommand{\assumption}{\oldassumption\normalfont}
    \title{On the Optimal Recovery Threshold of Coded Matrix Multiplication}
    \author{ Sanghamitra Dutta$^{*}$, Mohammad Fahim$^{*}$, Farzin Haddadpour$^{*}$, Haewon Jeong$^{*},$ Viveck Cadambe, Pulkit Grover
    \thanks{
    $^{*}$ author ordering in alphabetical order}
    \thanks{
    S. Dutta, H. Jeong and P. Grover are with the Department of Electrical and Computer Engineering, Carnegie Mellon University, Pittsburgh, PA-15213. M. Fahim, F. Haddadpour and V. Cadambe are with the Department of Electrical Engineering, Pennsylvania State University, University Park, PA 16802. } \thanks{This work was presented in part at the Annual Allerton Conference on Communication, Control, and Computing (Allerton) in October 2017 \cite{allerton17}.}
    }
\begin{document}
    \bstctlcite{IEEEexample:BSTcontrol}
    \maketitle
    \begin{abstract}
    We provide novel coded computation strategies for distributed matrix-matrix products that outperform the recent ``Polynomial code'' constructions in recovery threshold, \textit{i.e.}, the required number of successful workers. When $m$-th fraction of each matrix can be stored  in each worker node, Polynomial codes require $m^2$ successful workers, while our MatDot codes only require $2m-1$ successful workers, albeit at a higher communication cost from each worker to the fusion node. We also provide a systematic construction of MatDot codes. Further, we propose ``PolyDot'' coding that interpolates between Polynomial codes and MatDot codes to trade-off communication cost and recovery threshold. Finally, we demonstrate a coding technique for multiplying $n$ matrices ($n \geq 3$) by applying MatDot and PolyDot coding ideas.
    \end{abstract}

    \section{Introduction}
As the era of Big Data advances, massive parallelization has emerged as a natural approach to overcome limitations imposed by saturation of Moore's law (and thereby of single processor compute speeds). However, massive parallelization leads to computational bottlenecks due to faulty nodes and stragglers~\cite{dean2013tail}. Stragglers refer to a few slow or delay-prone processors that can bottleneck the entire computation because one has to wait for all the parallel nodes to finish. The issue of straggling~\cite{dean2013tail} and faulty nodes has been a topic of active interest in the emerging area of ``coded computation'' with several interesting works, e.g.~\cite{lee2018speeding,li2015coded,tandon2016gradient,polynomialcodes,joshi2014delay,gauristraggler,gauriefficient,dutta2016short,azian2017consensus,YaoqingAllerton16,Yaoqing2017ITTrans,Yang_ISTC_16,Salman1,Salman2,Salman3,Salman4,GC2,GC3,Emina1,Emina2,Virtualization,heterogeneousclusters,GC4,Suhas1,Suhas2,NIPS17Yaoqing,Ramtin1,multicore_setups,yu2017fft,jeongFFT,baharav2018straggler,suh2017matrix,mallick2018rateless, wang2018coded, wang2018fundamental,severinson2017block,ye2018communication}. Coded computation not only advances on coding approaches in classical works in Algorithm-Based Fault Tolerance (ABFT)~\cite{Huang_TC_84,faultbook}, but also provides novel analyses of required computation time (e.g. expected time~\cite{lee2018speeding} and deadline exponents~\cite{SanghamitraISIT2017}). Perhaps most importantly, it brings an information-theoretic lens to the problem by examining fundamental limits and comparing them with existing strategies. A broader survey of results and techniques of coded computation is provided in~\cite{NewsletterPaper}.

In this paper, we focus on the problem of coded matrix multiplication. Matrix multiplication is central to many modern computing applications, including machine learning and scientific computing. There is a lot of interest in classical ABFT literature (starting from~\cite{Huang_TC_84,faultbook}) and more recently in coded computation literature (e.g.~\cite{ProductCodes,polynomialcodes}) to make matrix multiplications resilient to faults and delays. In particular, Yu, Maddah-Ali, and Avestimehr~\cite{polynomialcodes} provide novel coded matrix-multiplication constructions called \emph{Polynomial codes} that outperform classical work from ABFT literature in terms of the \emph{recovery threshold}, the minimum number of successful (non-delayed, non-faulty) processing nodes required for completing the computation.

In this work, we consider the standard setup used in \cite{polynomialcodes,ProductCodes} with $P$ worker nodes that perform the computation in a distributed manner and a master node that helps coordinate the computation by performing some low complexity pre-processing on the inputs, distributing the inputs to the workers, and aggregates the results of the workers possibly performing some low complexity post-processing.\footnote{In this paper, we introduce a new type of node, \emph{``a fusion node''}, and delegate master node's result aggregation and post-processing function to the fusion node. Hence, a master node is only responsible for pre-processing and job distribution as a fusion node performs aggregating results and post-processing. However, this is only a conceptual separation that makes our explanation easier throughout the paper. One can think of a master node and a fusion node as one physical machine.} We propose MatDot codes that advance on existing constructions in scaling sense under the setup: when $m$-th fraction of each matrix can be stored in each worker node, Polynomial codes have the recovery threshold of $m^2$, while the recovery threshold of MatDot is only $2m-1$. However, as we note in Section~\ref{sec:complexity}, this comes at an increased per-worker communication cost. We also propose PolyDot codes that interpolates between MatDot and Polynomial code constructions in terms of recovery thresholds and communication costs. 

Our main contributions in this work are as follows:
\begin{itemize}
    \item We present our system model in Section \ref{sec:sysmod}, and describe MatDot codes in Section \ref{sec:main}. While Polynomial codes have a recovery threshold of $\Theta(m^{2}),$ MatDot codes have a recovery threshold of $\Theta(m)$ when each node stores only $m$-th fraction of each matrix multiplicand.
    \item We present a \emph{systematic} version of MatDot codes, where the  operations of the first $m$ worker nodes may be viewed as multiplication in uncoded form, in Section \ref{sec:syscod}.
    \item In Section \ref{sec:polydot}, we propose ``PolyDot codes'',  a unified view of MatDot and Polynomial codes that leads to a trade-off between recovery threshold and communication costs.
    \item In Section \ref{sec:multiple_matrices}, we apply the constructions of Section \ref{sec:main} to study coded computation for multiplying more than two matrices.
\end{itemize}
We note that following the publication of an initial version of this paper~\cite{allerton17}, the works of Yu, Maddah-Ali, and Avestimehr~\cite{entangledpolycodes} and Dutta, Bai, Jeong, Low and Grover~\cite{DNNPaperISIT} obtained constructions that outperform PolyDot codes in trade-offs between communication cost and recovery threshold (although MatDot codes continue to have the smallest recovery threshold for given storage constraints). Importantly, Yu et al.~\cite{entangledpolycodes} also provide interesting converse results that show the optimality of MatDot codes. \\



\section{System model and problem statement} \label{sec:sysmod}

\subsection{System model}\label{sec:model}

The system, illustrated in Fig.~\ref{fig:system}, consists of three different types of nodes, a master node, multiple worker nodes, and a fusion node. These are defined  more formally below.
\begin{figure}[h]
\includegraphics[width=0.48\textwidth]{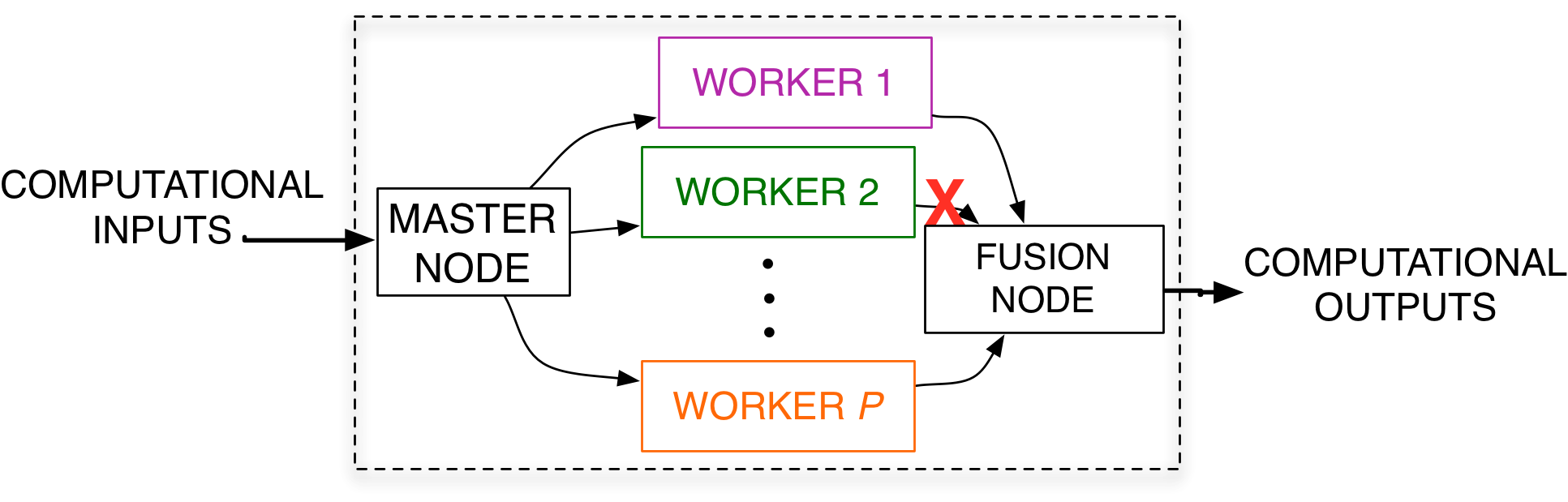}
\centering
\caption{The computational system: Master node receives the computational inputs and sends appropriate tasks to the workers. The workers are prone to faults and delays. The fusion node aggregates the partial computational outputs from the workers and finally produces the desired computational outputs.}
\label{fig:system}
\end{figure}

\begin{definition}[Computational system]\label{def:compsys}
A computational system consists of the following:
\begin{enumerate}
\item[(i)] a \textit{master node} that receives computational inputs, pre-processes them (e.g., encoding), and distributes them to the worker nodes
\item[(ii)] memory-constrained \textit{worker nodes} that perform pre-determined computations on their respective inputs in parallel
\item[(iii)] a \textit{fusion node} that receives outputs from successful worker nodes and performs post-processing (e.g.,decoding) to recover the final computation output.
\end{enumerate}
\end{definition}
For practical utility, it is important to have the amount of processing that the worker nodes perform to be much smaller than the processing at the master and the fusion node.

We assume that any worker node can fail to complete its computation because of faults or delays. Thus, we define a subset of all workers  as the \emph{``successful workers.''}
\begin{definition}[Successful workers]
Workers that finish their computation task successfully and send their output to the fusion node  are called successful {workers.}
\end{definition}

\begin{definition}[Successful computation]
If the computational system on receiving the inputs produces the correct computational output, the computation is said to be successful.
\end{definition}
\begin{definition}[Recovery threshold]\label{def:recovery}
The recovery threshold is the worst-case\footnote{The worst-case here is over all possible configurations of $k$ successful workers.} minimum number of successful workers required by the fusion node to complete the computation successfully.
\end{definition}
We will denote the total number of worker nodes by $P$, and the recovery threshold by $k$. We will be using the term ``row-block'' to denote the submatrices formed when we split a matrix $\mathbf{A}$ horizontally as follows: $\mathbf{A} = \begin{bmatrix} \mathbf{A}_0 \\
\mathbf{A}_1
\end{bmatrix}
$. Similarly, we will be using the term ``column-block'' to denote the submatrices formed when we split a matrix vertically into submatrices as follows: $\mathbf{A} = \begin{bmatrix} \mathbf{A}_0 \ \
\mathbf{A}_1
\end{bmatrix}$.

\subsection{Problem statement}\label{sec:problem}

We are required to compute the multiplication of two square matrices $\mathbf{A}, \mathbf{B} \in \mathbb{F}$ ($|\mathbb{F}| > P$), \textit{i.e.}, $\mathbf{AB}$ using the computational system specified in Section~\ref{sec:model}. Both the matrices are of dimension $N \times N$.  Each worker can receive at most  $2N^2/m$ symbols from the master node, where each symbol is an element of $\mathbb{F}$. For the simplicity, we assume that $m$ divides $N$ and a worker node receives $N^2/m$ symbols from $\mathbf{A}$ and $\mathbf{B}$ each\footnote{We only consider symmetric distribution of matrices $\mathbf{A}$ and $\mathbf{B}$ in this work. A more general problem formulation where one can distribute different number of entries from $\mathbf{A}$ and $\mathbf{B}$ to each worker is an open problem.}. The computational complexities of the master and fusion nodes, in terms of the matrix parameter $N$, is required to be negligible in a scaling sense than the computational complexity at any worker node\footnote{If the master node or the fusion node is allowed to have higher computational complexity, the workers can simply store $\mathbf{A},\mathbf{B}$ using Maximum Distance Separable (MDS) codes to get a recovery threshold of $m$; the fusion node simply recovers $\mathbf{A},\mathbf{B}$ and then multiplies them, essentially performing the whole operation.}. The goal is to perform this matrix-matrix multiplication utilizing faulty or delay-prone workers with minimum recovery threshold.

\section{MatDot Codes}\label{sec:main}
In this section we will describe the distributed matrix-matrix product strategy using MatDot codes, and then examine computation and communication costs of the proposed strategy. Before proceeding further into the detailed construction and analyses of MatDot codes, we will first give some motivating examples which contrast MatDot codes with existing techniques.

\subsection{Motivating examples and summary of previous results}
Consider the problem statement described in Section~\ref{sec:sysmod}. We describe three different strategies as possible solutions to the problem: (i) ABFT matrix multiplication \cite{Huang_TC_84} (also called \emph{product-coded matrices} in \cite{ProductCodes}), (ii) Polynomial codes \cite{polynomialcodes} and then (iii) our proposed construction,  MatDot codes, each progressively improving, i.e., reducing the recovery threshold. We will evaluate the straggler tolerance of a strategy by its recovery threshold, $k$. For all the examples, we consider the most simple case with $m=2$.  Let us begin by describing the first strategy, namely, ABFT matrix multiplication.

\begin{example}[ABFT codes \cite{Huang_TC_84} ($m =2$, $k =  2\sqrt{P}$)]

Consider two $N \times N$ matrices $\mathbf{A}$ and $\mathbf{B}$ that are split as follows: $$\mathbf{A}=\begin{bmatrix}\mathbf{A}_0 \\ \mathbf{A}_{1}\end{bmatrix}, \mathbf{B} = \begin{bmatrix}\mathbf{B}_0 & \mathbf{B}_{1}\end{bmatrix}$$
 where $\mathbf{A}_0, \mathbf{A}_1$ are submatrices (row-blocks) of $\mathbf{A}$ of dimension $N/2 \times N$ and   $\mathbf{B}_0, \mathbf{B}_1$ are submatrices (column-blocks) of $\mathbf{B}$ of dimension $N \times N/2$.
Using ABFT, it is possible to compute $\mathbf{A}\mathbf{B}$ over $P$ nodes such that,
$\mathbf{(i)}$  each node uses $N^2/2$ linear combination of the entries of $\mathbf{A}$ and $N^2/2$ linear combination of the entries of $\mathbf{B}$ and $\mathbf{(ii)}$ the overall computation is tolerant to $P-2\sqrt{P}$ stragglers in the worst case. Thus, any $P-(P-2\sqrt{P})= 2\sqrt{P}$ worker nodes suffice to recover $\mathbf{A}\mathbf{B}$.

ABFT codes use the following strategy: $P$ processors are arranged in a $\sqrt{P} \times \sqrt{P}$ grid. ABFT codes encode two  row-blocks of $\mathbf{A}$ and two column-blocks of $\mathbf{B}$ separately using two systematic $(\sqrt{P},2)$ MDS codes. Then, we distribute the $i$-the encoded column-block of $\mathbf{A}$ to all the worker nodes on the $i$-th row of the grid, and the $j$-th encoded row-blocks to all the worker nodes on the $j$-th column of the grid. Note that here the grid indexing is $i=1,2,\dots,\sqrt{P}$ and $j=1,2,\ldots,\sqrt{P}$. An example for $P=9$ is shown in Fig.~\ref{fig:ABFTmatmult}. The worst case arises when all but one worker node in the lower right $(\sqrt{P}-1) \times (\sqrt{P}-1)$ part of the grid fail. Thus, the worst case recovery threshold is $P-(\sqrt{P}-1)^2+1 = 2\sqrt{P}$. For the example given in Fig. \ref{fig:ABFTmatmult} where $P=9$, recovery threshold is $2\sqrt{P} = 6$. \hfill $\blacksquare$


\newsavebox{\smlmat}
\savebox{\smlmat}{$\begin{bmatrix} \mathbf{A}_{0}\\ \mathbf{A}_{1} \end{bmatrix}$}

\begin{figure}[htb!]
\centering
\includegraphics[width=0.5\textwidth]{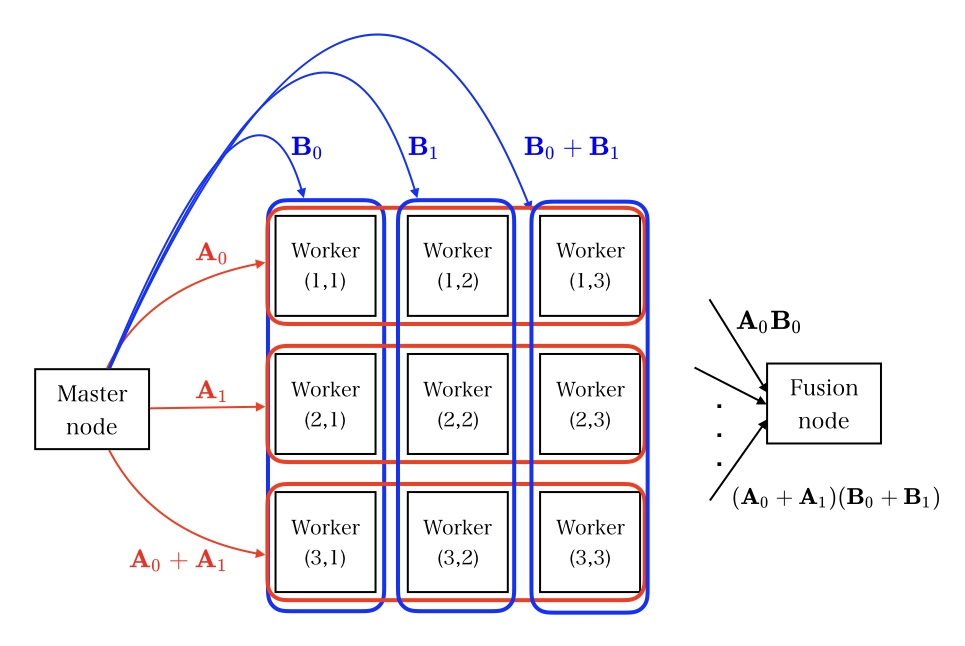}
\caption{ABFT matrix multiplication \cite{Huang_TC_84} for $P=9$ worker nodes with $m=2,$ where $\mathbf{A}=$\usebox{\smlmat}, $\mathbf{B}=[\mathbf{B}_{0}~~\mathbf{B}_{1}]$. The recovery threshold is $6.$ }
\label{fig:ABFTmatmult}
\end{figure}
\end{example}

In the previous example, the recovery threshold was a function of $P$ and thus it required more successful worker nodes as we use more processors. However, as we will show in the next example, Polynomial codes~\cite{polynomialcodes} provide a  superior recovery threshold that does not depend on $P$.

\begin{example}[Polynomial codes \cite{polynomialcodes} ($m = 2$, $k = 4$)]
Consider two $N \times N$ matrices $\mathbf{A}$ and $\mathbf{B}$ that are split as follows: $$\mathbf{A}=\begin{bmatrix}\mathbf{A}_0 \\ \mathbf{A}_{1}\end{bmatrix}, \mathbf{B} = \begin{bmatrix}\mathbf{B}_0 & \mathbf{B}_{1}\end{bmatrix}.$$
Polynomial codes computes $\mathbf{A}\mathbf{B}$ over $P$ nodes such that,
$\mathbf{(i)}$  each node uses $N^2/2$ linear combination of the entries of $\mathbf{A}$ and $N^2/2$ linear combination of the entries of $\mathbf{B}$ and $\mathbf{(ii)}$ the overall computation is tolerant to $P-4$ stragglers, i.e., any
$4$ nodes suffice to recover $\mathbf{A}\mathbf{B}$,
Polynomial codes use the following strategy: Node $i$ computes $(\mathbf{A}_{0}+\mathbf{A}_{1} i)(\mathbf{B}_0+\mathbf{B}_{1}i^{2}), i=1,2, \ldots P,$ so that from any $4$ of the $P$ nodes, the polynomial $p(x) = (\mathbf{A}_{0}\mathbf{B}_0+\mathbf{A}_{1}\mathbf{B}_0 x + \mathbf{A}_0\mathbf{B}_1 x^{2}+ \mathbf{A}_0\mathbf{B}_1 x^{3})$ can be interpolated. Having interpolated the polynomial, $\mathbf{A}\mathbf{B}$ as
$\begin{bmatrix}\mathbf{A}_{0}\mathbf{B}_{0} & \mathbf{A}_{0}\mathbf{B}_{1} \\ \mathbf{A}_{1}\mathbf{B}_{0} & \mathbf{A}_{1}\mathbf{B}_{1}\end{bmatrix}$ can be obtained from the coefficients (matrices) of the polynomial. $\blacksquare$

\newsavebox{\smlmata}
\savebox{\smlmata}{$\begin{bmatrix}\tilde{\mathbf{A}}_{1}\\\tilde{\mathbf{A}}_{2}\end{bmatrix}$}
\begin{figure}[htb!]\label{fig:polycodes}
\centering
\includegraphics[width=0.5\textwidth]{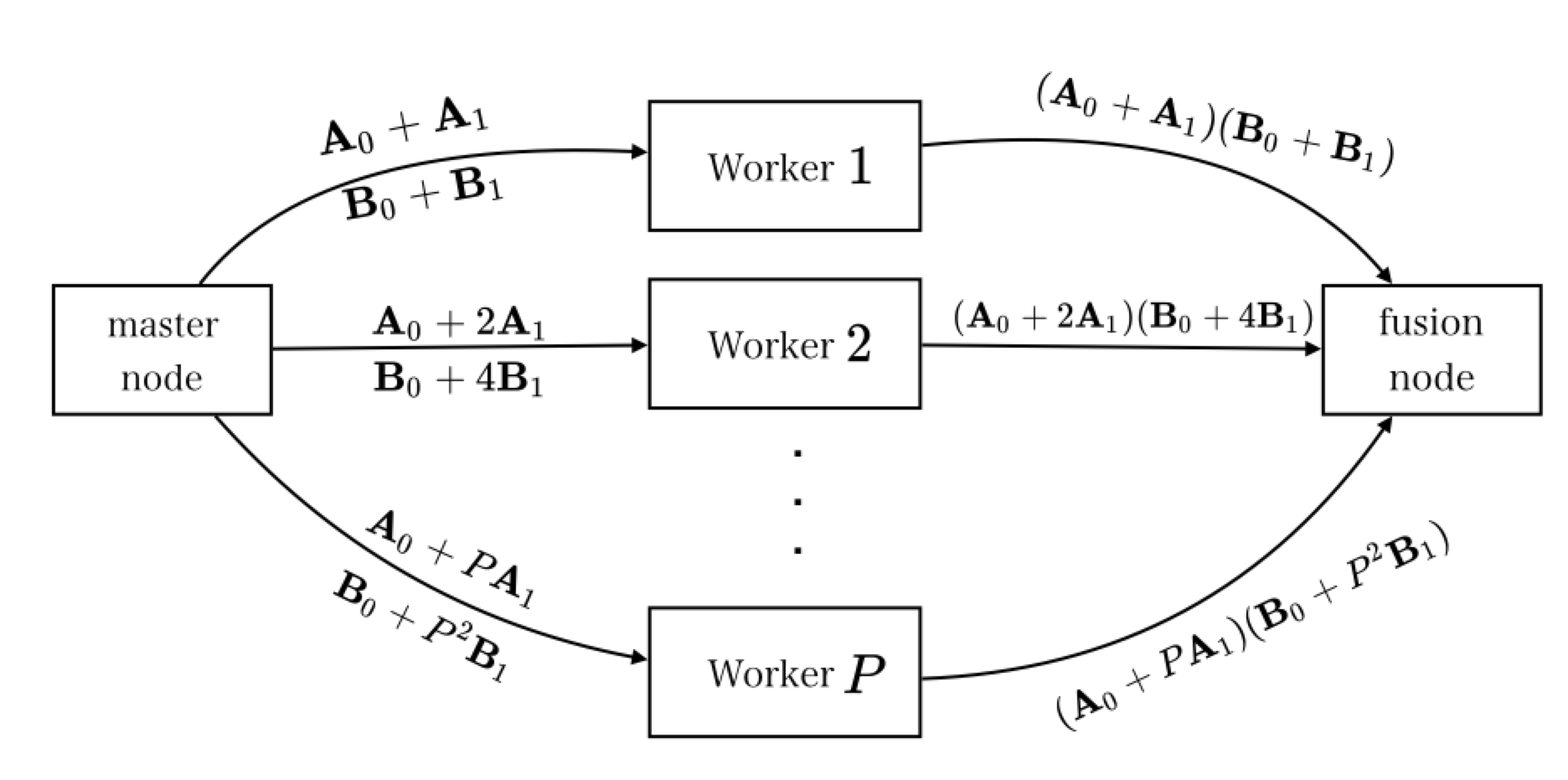}
\caption{Polynomial Codes \cite{polynomialcodes} with $m=2$. The recovery threshold is $4.$}
\end{figure}
\end{example}

Our novel \emph{MatDot} construction achieves a smaller recovery threshold as compared with Polynomial codes. Unlike ABFT and polynomial codes, MatDot divides matrix $\mathbf{A}$ vertically into column-blocks and matrix $\mathbf{B}$ horizontally into row-blocks.

\begin{example}\label{ex:matdot}[MatDot codes ($m=2$, $k=3$)]

MatDot codes compute $\mathbf{A}\mathbf{B}$ over $P$ nodes such that,
$\mathbf{(i)}$  each node uses $N^2/2$ linear combination of the entries of $\mathbf{A}$ and $N^2/2$ linear combination of the entries of $\mathbf{B}$ and $\mathbf{(ii)}$ the overall computation is tolerant to $P-3$ stragglers, i.e., $3$ nodes suffice to recover $\mathbf{A}\mathbf{B}$.
The proposed MatDot codes use the following strategy:
Matrix $\mathbf{A}$ is split vertically and $\mathbf{B}$ is split horizontally as follows:
\begin{equation}
   \mathbf{A} = \left[\mathbf{A}_0 \hspace{3mm} \mathbf{A}_1 \right],\;\;\; \mathbf{B}=\left[\begin{array}c \mathbf{B}_0\\\mathbf{B}_1\end{array}\right],
\end{equation}
where $\mathbf{A}_0, \mathbf{A}_1$ are submatrices (or column-blocks) of $\mathbf{A}$ of dimension $N \times N / 2$ and  $\mathbf{B}_0, \mathbf{B}_1$ are submatrices (or row-blocks) of $\mathbf{B}$ of dimension $N /2 \times N $.

Let $p_\mathbf{A}(x)= \mathbf{A}_0 + \mathbf{A}_1 x$ and $p_\mathbf{B}(x)= \mathbf{B}_0 x + \mathbf{B}_1$.  Let $x_1, x_2, \cdots, x_{P}$ be distinct real numbers, the master node sends $p_\mathbf{A}(x_{r})$ and $p_\mathbf{B}(x_{r})$ to the $r$-th worker node where the $r$-th worker node performs the multiplication $p_\mathbf{A}(x_{r})p_\mathbf{B}(x_{r})$ and sends the output to the fusion node. The exact computations at each worker node are depicted in Fig. \ref{fig:matdot}. We can observe that the fusion node can obtain the product $\mathbf{A}\mathbf{B}$ using the output of any three successful workers as follows: Let the worker nodes $1,2,$ and $3$ be the first three successful worker nodes, then the fusion node obtains the following three matrices:
\begin{align}
p_\mathbf{A}(x_{1})p_\mathbf{B}(x_{1})&= \mathbf{A}_0\mathbf{B}_1+(\mathbf{A}_0\mathbf{B}_0+\mathbf{A}_1\mathbf{B}_1)x_1+\mathbf{A}_1\mathbf{B}_0 x^2_1, \notag\\
p_\mathbf{A}(x_{2})p_\mathbf{B}(x_{2})&=\mathbf{A}_0\mathbf{B}_1+(\mathbf{A}_0\mathbf{B}_0+\mathbf{A}_1\mathbf{B}_1)x_2+\mathbf{A}_1\mathbf{B}_0 x^2_2, \notag\\
p_\mathbf{A}(x_{2})p_\mathbf{B}(x_{3})&=\mathbf{A}_0\mathbf{B}_1+(\mathbf{A}_0\mathbf{B}_0+\mathbf{A}_1\mathbf{B}_1)x_3+\mathbf{A}_1\mathbf{B}_0 x^2_3.\notag
\end{align}
Since these three matrices can be seen as three evaluations of the matrix polynomial $p_\mathbf{A}(x)p_\mathbf{B}(x)$ of degree $2$ at three distinct evaluation points $x_1, x_2, x_3$, the fusion node can obtain the coefficients of $x$ in $p_\mathbf{A}(x)p_\mathbf{B}(x)$ using polynomial interpolation. This includes the coefficient of $x$, which is $\mathbf{A}_0\mathbf{B}_0+\mathbf{A}_1\mathbf{B}_1 = \mathbf{A}\mathbf{B}$. Therefore, the fusion node can recover the matrix product $\mathbf{AB}$. \hfill $\blacksquare$
\end{example}
\begin{figure}[t]
\centering
\includegraphics[width=0.48\textwidth]{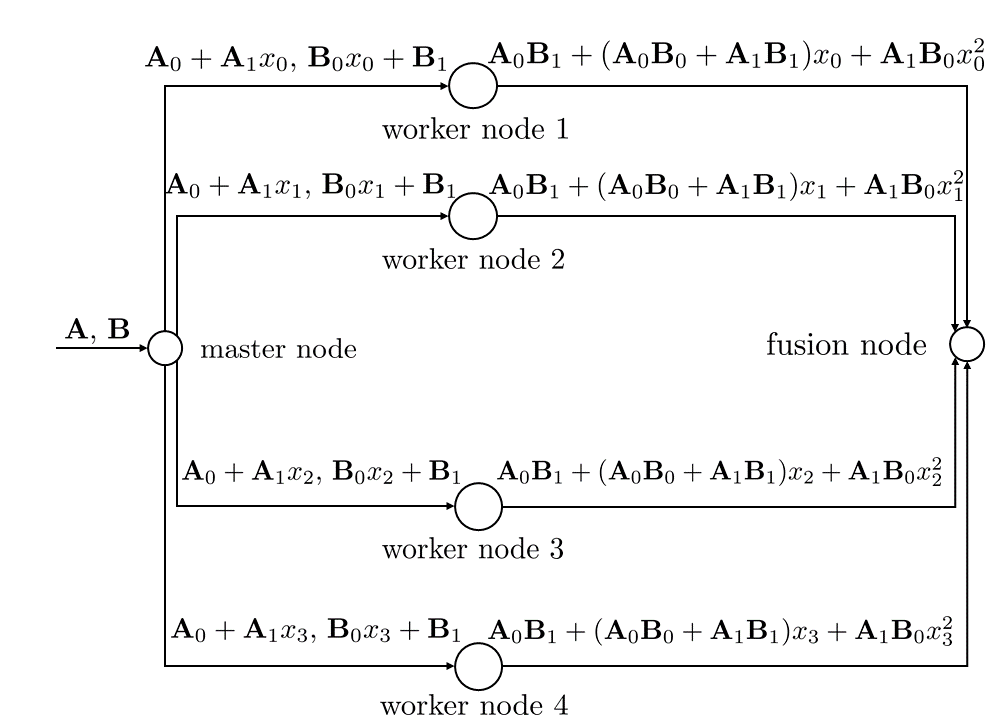}
\caption{An illustration of the computational system  with four worker nodes and applying MatDot codes with $m=2$. The recovery threshold is $3$.}
\label{fig:matdot}
\end{figure}

In the example, we have seen that for $m =2$, the recovery threshold of MatDot codes is $k = 3$ which is lower than Polynomial codes as well as ABFT matrix multiplication. The following theorem shows that for any integer $m$, the recovery threshold of MatDot codes is $k = 2m-1$.
\begin{theorem}\label{thm:matdotA}
For the matrix multiplication problem specified in Section~\ref{sec:problem} computed on the system defined in Definition~\ref{def:compsys},  a recovery threshold of $2m-1$ is achievable where  $m \geq 2$ is any positive integer that divides $N$.
\end{theorem}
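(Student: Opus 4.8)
The plan is to generalize the $m=2$ construction in Example~\ref{ex:matdot} to arbitrary $m$. First I would split $\mathbf{A}$ vertically into $m$ column-blocks $\mathbf{A}_0, \mathbf{A}_1, \dots, \mathbf{A}_{m-1}$, each of dimension $N \times N/m$, and split $\mathbf{B}$ horizontally into $m$ row-blocks $\mathbf{B}_0, \mathbf{B}_1, \dots, \mathbf{B}_{m-1}$, each of dimension $N/m \times N$, so that $\mathbf{A}\mathbf{B} = \sum_{i=0}^{m-1} \mathbf{A}_i \mathbf{B}_i$. This is exactly the block inner-product decomposition that makes the ``dot'' in MatDot work. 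Each worker receives $N^2/m$ symbols from each matrix, matching the storage constraint in Section~\ref{sec:problem}.

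Next I would define the encoding polynomials $p_{\mathbf{A}}(x) = \sum_{i=0}^{m-1} \mathbf{A}_i x^i$ and $p_{\mathbf{B}}(x) = \sum_{j=0}^{m-1} \mathbf{B}_j x^{m-1-j}$. The key observation is that in the product $p_{\mathbf{A}}(x) p_{\mathbf{B}}(x) = \sum_{i=0}^{m-1}\sum_{j=0}^{m-1} \mathbf{A}_i \mathbf{B}_j x^{i + m - 1 - j}$, the coefficient of $x^{m-1}$ collects exactly the terms with $i = j$, namely $\sum_{i=0}^{m-1} \mathbf{A}_i \mathbf{B}_i = \mathbf{A}\mathbf{B}$. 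The polynomial $p_{\mathbf{A}}(x) p_{\mathbf{B}}(x)$ has degree $2m-2$, hence $2m-1$ coefficients. Choosing $P$ distinct evaluation points $x_1, \dots, x_P \in \mathbb{F}$ (possible since $|\mathbb{F}| > P$), the master sends $p_{\mathbf{A}}(x_r), p_{\mathbf{B}}(x_r)$ to worker $r$, which returns $p_{\mathbf{A}}(x_r)p_{\mathbf{B}}(x_r)$. Any $2m-1$ successful workers supply $2m-1$ evaluations of a degree-$(2m-2)$ matrix polynomial at distinct points, so the fusion node recovers all coefficients by polynomial interpolation (equivalently, inverting a $(2m-1)\times(2m-1)$ Vandermonde matrix, applied entrywise), in particular the coefficient of $x^{m-1}$, which is $\mathbf{A}\mathbf{B}$.

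Finally I would check that the complexity conditions are met: the master's encoding is $O(N^2)$ per worker (negligible versus a worker's $\Theta(N^2 \cdot N/m) = \Theta(N^3/m)$ multiplication cost), and the fusion node's interpolation involves inverting a fixed $(2m-1)\times(2m-1)$ Vandermonde matrix and combining $N^2$-sized matrices, costing $O(m^2 N^2)$ — again negligible in $N$ compared to a worker. The only subtlety, and the main point that needs care rather than difficulty, is confirming that the $x^{m-1}$ coefficient isolates precisely the ``diagonal'' sum $\sum_i \mathbf{A}_i\mathbf{B}_i$ with no cross terms $\mathbf{A}_i \mathbf{B}_j$, $i \neq j$, contaminating it; this follows because $i + (m-1-j) = m-1 \iff i = j$, so it is a bookkeeping check on the exponents. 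I do not anticipate a genuine obstacle; the argument is a clean lift of the worked example.
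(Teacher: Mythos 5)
Your proposal is correct and follows essentially the same route as the paper's own proof: the same splitting of $\mathbf{A}$ into column-blocks and $\mathbf{B}$ into row-blocks, the same encoding polynomials $p_{\mathbf{A}}(x)=\sum_i \mathbf{A}_i x^i$ and $p_{\mathbf{B}}(x)=\sum_j \mathbf{B}_j x^{m-1-j}$, and the same argument that the degree-$(2m-2)$ product polynomial is interpolated from any $2m-1$ evaluations, with $\mathbf{A}\mathbf{B}$ read off as the coefficient of $x^{m-1}$. Your explicit exponent bookkeeping ($i+(m-1-j)=m-1 \iff i=j$) and the complexity check match the paper's Construction~\ref{con:matdot} and Section~\ref{sec:complexity}, so there is nothing to correct.
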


Before we prove Theorem \ref{thm:matdotA}, we first describe the construction of MatDot codes.

\begin{construction}\label{con:matdot}[MatDot Codes]

\textbf{Splitting of input matrices}:
 The matrix $\mathbf{A}$  is split vertically into $m$ equal column-blocks (of $N^2/m$ {symbols} each) and $\mathbf{B}$ is split horizontally into $m$ equal row blocks (of $N^2/m$ {symbols} each) as follows:
\begin{equation}\label{eq:spltA}
\mathbf{A} = \left[\mathbf{A}_0 \ \mathbf{A}_1 \ \ldots \ \mathbf{A}_{m-1}\right],\;\;\; \mathbf{B}=\left[\begin{array}c \mathbf{B}_0\\\mathbf{B}_1\\\vdots \\\mathbf{B}_{m-1}\end{array}\right],
\end{equation}
where, for $i \in \{0,\ldots,m-1\}$, and $\mathbf{A}_{i},\mathbf{B}_{i}$ are $N \times N/m$ and $N/m \times N$ dimensional submatrices, respectively.

\textbf{Master node (encoding)}: Let $x_1, x_2, \ldots, x_P$ be  distinct elements in $\mathbb{F}$. Let $p_\mathbf{A}(x)=\sum_{i=0}^{m-1} \mathbf{A}_i x^{i}$ and $p_\mathbf{B}(x)=\sum_{j=0}^{m-1} \mathbf{B}_j x^{m-1-j}.$ The master node sends to the $r$-th worker evaluations of $p_\mathbf{A}(x),p_\mathbf{B}(x)$ at $x=x_r$, that is, it sends $p_\mathbf{A}(x_r),p_\mathbf{B}(x_r)$ to the $r$-th worker.

\textbf{Worker nodes}:  For $r \in \{1,2,\ldots, P\}$, the $r$-th worker node computes the matrix product $p_{\mathbf{C}}(x_r)=p_\mathbf{A}(x_r) p_\mathbf{B}(x_r)$ and sends it to the fusion node on successful completion.

\textbf{Fusion node (decoding)}: The fusion node  uses outputs of any $2m-1$ successful workers to  compute the coefficient of $x^{m-1}$ in the product $p_{\mathbf{C}}(x)=p_\mathbf{A}(x) p_\mathbf{B}(x)$ (the feasibility of this step will be shown later in the proof of Theorem \ref{thm:matdotA}). If the number of successful workers is smaller than $2m-1$, the fusion node declares a failure.
\end{construction}

Notice that in MatDot codes, we have \begin{equation}\label{eq:dot}
\mathbf{AB}= \sum_{i=0}^{m-1} \mathbf{A}_i\mathbf{B}_i,
\end{equation}
where $\mathbf{A}_i$ and $\mathbf{B}_i$ are as defined in (\ref{eq:spltA}).
The simple observation of (\ref{eq:dot}) leads to a different way of computing the matrix product as compared with Polynomial codes based computation. In particular, to compute the product, we only require, for each $i$, the product of $\mathbf{A}_i$ and $\mathbf{B}_i.$ We do not require products of the form $\mathbf{A}_i\mathbf{B}_j$ for $i \neq j$ unlike Polynomial codes, where, after splitting the matrices $\mathbf{A},\mathbf{B}$ in to $m$ parts, \emph{all} $m^2$ cross-products are required to evaluate the overall matrix product. This leads to a significantly smaller recovery threshold for our construction.
\begin{proof}[Proof of Theorem \ref{thm:matdotA}]
To prove the theorem, it suffices to show that in the MatDot code construction described above, the fusion node is able to construct $\mathbf{C}$ from any $2m-1$ worker nodes. Observe that the coefficient of $x^{m-1}$ in:
\begin{equation}
p_{\mathbf{C}}(x) = p_\mathbf{A}(x) p_\mathbf{B}(x) = \left(\sum_{i=0}^{m-1} \mathbf{A}_i x^{i}\right)\left(\sum_{j=0}^{m-1} \mathbf{B}_j x^{m-1-j}\right)
\end{equation}
is $\mathbf{AB}= \sum_{i=0}^{m-1} \mathbf{A}_i\mathbf{B}_i$ (from~\eqref{eq:dot}), which is the desired matrix-matrix product. Thus it is sufficient to compute this coefficient at the fusion node as the computation output for successful computation.  Now, because the polynomial $p_{\mathbf{C}}(x)$ has degree $2m-2$, evaluation of the polynomial at any $2m-1$ distinct points is sufficient to compute all of the coefficients of powers of $x$ in $p_\mathbf{A}(x)p_\mathbf{B}(x)$ using polynomial interpolation. This includes $\mathbf{AB}= \sum_{i=0}^{m-1} \mathbf{A}_i\mathbf{B}_i$, the coefficient of $x^{m-1}$. The next section has a complexity analysis that shows that the master and fusion nodes  have a lower computational complexity as compared to the worker nodes.
\end{proof}

\subsection{Complexity analyses of MatDot codes}\label{sec:complexity}
\textbf{Encoding/decoding complexity}:  Decoding requires interpolating a $2m-2$ degree polynomial for $N^2$ elements. Using polynomial interpolation algorithms of complexity $\mathcal{O}(k \log^2 k)$~\cite{kung1973fast},  where $k=2m-1$, the decoding complexity per matrix element is $\mathcal{O}(m \log^2 m )$. Thus, for $N^2$ elements, the decoding complexity is  $\mathcal{O}(N^2 m \log^2 m).$

Encoding for each worker requires performing two additions, each adding $m$ scaled matrices of size $N^2/m$, for an overall encoding complexity for \textit{each worker} of $\mathcal{O}(m N^2/m)=\mathcal{O}(N^2)$. Thus, the overall computational complexity of encoding for $P$ workers is $\mathcal{O}(N^2P)$.

\textbf{Each worker's computational cost}: Each worker multiplies two matrices of  dimensions $N\times N/m$ and $N/m\times N$, requiring $N^3/m$ operations (using straightforward multiplication algorithms\footnote{More sophisticated algorithms~\cite{strassen} also require super-quadratic complexity in $N$, so the conclusion will hold if those algorithms are used at workers as well.}). Hence, the computational complexity for each worker is $\mathcal{O}(N^3/m)$. Thus, as long as $P\ll N$ (and hence $m\ll N$), encoding and decoding complexity is  smaller than per-worker computational complexity.

\textbf{Communication cost}:\\
The master node communicates $\mathcal{O}(PN^2/m)$ symbols, and the fusion node receives $\mathcal{O}(mN^2)$ symbols from the successful worker nodes. While the master node communication is identical to that in Polynomial codes, the fusion node there only receives $\mathcal{O}(m^2N^2/m^2)=\mathcal{O}(N^2)$ symbols.



\subsection{Why does MatDot exceed the fundamental limits in~\cite{polynomialcodes}}
The fundamental limit in~\cite{polynomialcodes} concludes that the recovery threshold is $\Omega(m^2)$, whereas our recovery threshold is lower: $2m-1$. To understand why this is possible, one needs to carefully examine the derivation of the fundamental limit in~\cite{polynomialcodes}, which uses a cut-set argument to count the number of bits/symbols required for computing the product $\mathbf{AB}$. In doing so, the authors make the assumption that the number of symbols communicated by each worker to the fusion node is  $N^2/m^2$, which is a fallout of a horizontal division of matrix $\mathbf{A}$, and a vertical division of matrix $\mathbf{B}$ (the opposite of the division used here).

The bound  does not apply to our construction because each worker  now communicates $N^2$ symbols to the fusion node. Note that while the amount of information in each worker's transmissions is less, \textit{i.e.},  $O(N^2/m)$ (because the $N \times N$ matrices communicated by the workers can have rank $N^{2}/m$), this is still significantly larger than $N^2/m^2$ assumption made in the fundamental limits in~\cite{polynomialcodes}.

From a communication viewpoint, MatDot requires communicating a total of $(2m-1)N^2$ symbols, which is larger than the $N^2$ symbols in the product $\mathbf{AB}$. This is suggestive of a trade-off between minimal number of workers and minimal (sum-rate) communication from non-straggling workers. Section \ref{sec:polydot} describes a unified view of MatDot and Polynomial codes, which describes the trade-off between worker-fusion communication cost and recovery threshold achieved by our construction.

In practice, whether this increased worker-fusion node communication cost using MatDot codes is worth paying for will depend on the computational fabric and system implementation choices. Even in systems where communication costs may be significant, it is possible that more communication from fewer successful workers is less expensive than requiring more successful workers as required in Polynomial codes. Also note that if $P=\Omega(m^2)$ (e.g. when the system is highly fault prone or the deadline~\cite{SanghamitraISIT2017} is very short), communication complexity at the master node will dominate, and hence MatDot codes may not impose a substantial computing overhead.

\section{Systematic Code Constructions}\label{sec:syscod}

In this section, we provide a systematic code construction for MatDot codes. As the notion of systematic codes in the context of matrix multiplication problem is ambiguous, we will first define systematic codes in our context.
\begin{definition} [Systematic code in distributed matrix-matrix multiplication]
For the problem stated in Section \ref{sec:problem} computed on the system defined in Definition~\ref{def:compsys} such that the matrices $\mathbf{A}$ and $\mathbf{B}$ are split as in (\ref{eq:spltA}), a code is called \emph{systematic} if the output of the $r$-th worker node is the product $\mathbf{A}_{r-1}\mathbf{B}_{r-1}$, for all $r \in \{1, \cdots, m\}$. We refer to the first $m$ worker nodes, that output $\mathbf{A}_{r-1}\mathbf{B}_{r-1}$ for $r \in \{1, \cdots, m\}$, as \emph{systematic worker nodes}.
\end{definition}

Note that the final output $\mathbf{A} \mathbf{B}$ can be obtained by summing up the outputs from the $m$ systematic worker nodes:  $$\mathbf{AB}= \sum_{r=1}^{m} \mathbf{A}_{r-1}\mathbf{B}_{r-1}.$$

The presented systematic code, named \emph{``systematic MatDot code''}, is advantageous over MatDot codes in two aspects. Firstly,  even though both MatDot and systematic MatDot codes have the same recovery threshold, systematic MatDot codes can recover the output as soon as the $m$ systematic worker nodes successfully finish unlike MatDot codes which \emph{always} require $2m-1$ workers to successfully finish to recover the final result. Furthermore, when the $m$ systematic worker nodes successfully finish first, the decoding complexity using systematic MatDot codes is $O(m N^2)$, which is less than the decoding complexity of MatDot codes, \textit{i.e.}, $O(m N^2\log^2 m)$.
Another advantage for systematic MatDot codes over MatDot codes is that the systematic MatDot approach may be useful for \emph{backward-compatibility with current practice}. What this means is that, for systems that are already established and operating with no straggler tolerance, but do an $m$-way parallelization, it is easier to apply the systematic approach as the infrastructure could be appended to additional worker nodes without modifying what the first $m$ nodes are doing.

The following theorem shows that there exists a systematic MatDot code construction that achieves the same recovery threshold as MatDot codes.

\begin{theorem}\label{thm:syscodes}
For the matrix-matrix multiplication problem specified in Section~\ref{sec:problem} computed on the system defined in Definition~\ref{def:compsys}, there exists a systematic code, where the product $\mathbf{A}\mathbf{B}$ is the summation of the output of the first $m$ worker nodes,  that solves this problem with a recovery threshold {of} $2m-1$, where  $m \geq 2$ is any positive integer that divides $N$.
\end{theorem}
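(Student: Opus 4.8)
The plan is to construct the systematic MatDot code by modifying the encoding polynomials so that their product, evaluated at $m$ specially chosen points $x_1, \ldots, x_m$, yields exactly the uncoded products $\mathbf{A}_0\mathbf{B}_0, \ldots, \mathbf{A}_{m-1}\mathbf{B}_{m-1}$, while still allowing interpolation of the coefficient that equals $\mathbf{AB}$. Concretely, I would fix distinct elements $x_1, \ldots, x_m \in \mathbb{F}$ and define, for each $i \in \{0, \ldots, m-1\}$, the Lagrange-type polynomials
\begin{equation}
\ell_i(x) = \prod_{\substack{j=1 \\ j \neq i+1}}^{m} \frac{x - x_j}{x_{i+1} - x_j},
\end{equation}
so that $\ell_i(x_{r}) = \mathbf{1}\{r = i+1\}$ for $r \in \{1, \ldots, m\}$, and each $\ell_i$ has degree $m-1$. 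Then set $p_\mathbf{A}(x) = \sum_{i=0}^{m-1} \mathbf{A}_i \, \ell_i(x)$ and $p_\mathbf{B}(x) = \sum_{j=0}^{m-1} \mathbf{B}_j \, \ell_j(x)$. The master node sends $p_\mathbf{A}(x_r), p_\mathbf{B}(x_r)$ to worker $r$, who computes $p_\mathbf{A}(x_r) p_\mathbf{B}(x_r)$.

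The first thing to verify is the systematic property: for $r \in \{1, \ldots, m\}$, $p_\mathbf{A}(x_r) = \mathbf{A}_{r-1}$ and $p_\mathbf{B}(x_r) = \mathbf{B}_{r-1}$ because only the $(r-1)$-th Lagrange basis polynomial is nonzero at $x_r$; hence worker $r$ outputs $\mathbf{A}_{r-1}\mathbf{B}_{r-1}$, and summing these $m$ outputs gives $\mathbf{AB}$ by \eqref{eq:dot}. Next I would establish the recovery threshold. The product polynomial $p_{\mathbf{C}}(x) = p_\mathbf{A}(x) p_\mathbf{B}(x)$ has degree at most $2m-2$, so from any $2m-1$ successful workers the fusion node can interpolate all coefficients of $p_{\mathbf{C}}(x)$. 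It remains to extract $\mathbf{AB}$ from $p_{\mathbf{C}}$. Here the cleanest route is to observe that $p_{\mathbf{C}}(x_r) = \mathbf{A}_{r-1}\mathbf{B}_{r-1}$ for each $r \in \{1, \ldots, m\}$; once the polynomial $p_{\mathbf{C}}(x)$ is known (via interpolation from any $2m-1$ points), the fusion node simply evaluates it at $x_1, \ldots, x_m$ and sums: $\mathbf{AB} = \sum_{r=1}^{m} p_{\mathbf{C}}(x_r)$. This avoids having to identify $\mathbf{AB}$ with a single coefficient, which in the Lagrange basis it is not.

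I would then briefly address complexity to match the claims preceding the theorem: interpolating a degree-$(2m-2)$ polynomial over $N^2$ entries costs $\mathcal{O}(N^2 m \log^2 m)$ in general, but when the $m$ systematic workers are among the successful ones the fusion node can skip interpolation entirely and just add their $m$ outputs, giving $\mathcal{O}(m N^2)$; encoding per worker is $\mathcal{O}(N^2)$ since each $\ell_i$ is a scalar and we form two linear combinations of $m$ submatrices of size $N^2/m$. The main obstacle, and the point requiring the most care, is confirming that $\deg p_{\mathbf{C}} \le 2m-2$ rather than something larger — this holds because each $\ell_i$ has degree exactly $m-1$ — and ensuring $|\mathbb{F}| > P$ suffices to pick the required $P$ distinct evaluation points including the $m$ systematic ones; no genericity or field-size blow-up beyond what MatDot already assumes is needed, since the construction is fully explicit.
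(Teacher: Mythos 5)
Your proposal is correct and follows essentially the same route as the paper: Lagrange-basis encoding polynomials anchored at the $m$ systematic evaluation points, the systematic property from $\ell_i(x_r)=\mathbf{1}\{r=i+1\}$, the degree-$(2m-2)$ argument giving threshold $2m-1$, and recovery of $\mathbf{AB}$ by interpolating $p_{\mathbf{C}}$, evaluating at $x_1,\ldots,x_m$, and summing. This matches the paper's Construction of systematic MatDot codes together with its Lemma on the systematic property and the proof of Theorem \ref{thm:syscodes}.
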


Before we describe the construction of systematic MatDot codes, that will be used to prove Theorem \ref{thm:syscodes}, we first present a simple example to illustrate the idea of systematic MatDot codes.

\begin{example}\label{ex:syscod}[Systematic MatDot code, $m=2,k=3$]

{Matrix $\mathbf{A}$ is split vertically into two submatrices (column-blocks) $\mathbf{A}_0$ and $\mathbf{A}_1$, each of dimension $N \times \frac{N}{2}$ and matrix $\mathbf{B}$ is split horizontally into two submatrices (row-blocks) $\mathbf{B}_0$ and  $\mathbf{B}_1$, each of dimension $\frac{N}{2} \times N $  as follows: }
\begin{equation}
   \mathbf{A} = \left[ \ \mathbf{A}_0 \  \ \mathbf{A}_1 \ \right],\;\;\; \mathbf{B}=\left[\begin{array}c \mathbf{B}_0\\\mathbf{B}_1\end{array}\right].
\end{equation}

Now, we define the encoding functions $p_\mathbf{A}(x)$ and $p_\mathbf{B}(x)$ as $p_\mathbf{A}(x)= \mathbf{A}_0  \frac{x-x_2}{x_1 - x_2}+ \mathbf{A}_1  \frac{x-x_1}{x_2 - x_1}$ and $p_\mathbf{B}(x)= \mathbf{B}_0  \frac{x-x_2}{x_1 - x_2} + \mathbf{B}_1 \frac{x-x_1}{x_2 - x_1}$, for distinct $x_1, x_2 \in \mathbb{F}$.  Let $x_3,  \cdots, x_{P}$ be  elements of $\mathbb{F}$ such that $x_1, x_2, x_3, \cdots, x_{P}$ are distinct, the master node sends $p_\mathbf{A}(x_{r})$ and $p_\mathbf{B}(x_{r})$ to the $r$-th worker node, $r \in \{1, \cdots, P\}$, where the $r$-th worker node performs the multiplication $p_\mathbf{A}(x_{r}) p_\mathbf{B}(x_{r})$ and sends the output to the fusion node. The exact computations at each worker node are depicted in Fig. \ref{fig:syscod}.

We can observe that the outputs of the worker nodes $1,2$ are $\mathbf{A}_0\mathbf{B}_0, \mathbf{A}_1\mathbf{B}_1$, respectively, and hence this code is systematic. Let us consider a scenario where the systematic worker nodes, \textit{i.e.}, worker nodes $1$ and $2$ complete their computations first. In this scenario, the fusion node does not require a decoding step and can obtain the product $\mathbf{A}\mathbf{B}$ by simply performing the summation of the two outputs it has received: $\mathbf{A}_0\mathbf{B}_0+ \mathbf{A}_1\mathbf{B}_1$. Now let us consider a different scenario where worker nodes $1,3,4$ are the first three successful workers. Then, the fusion node receives three matrices,
$p_\mathbf{A}(x_{1})p_\mathbf{B}(x_{1}), p_\mathbf{A}(x_{3})p_\mathbf{B}(x_{3}),$ and $p_\mathbf{A}(x_{4})p_\mathbf{B}(x_{4})$. Since these three matrices can be seen as three evaluations of the polynomial $p_\mathbf{A}(x)p_\mathbf{B}(x)$ of degree $2$ at three distinct evaluation points $x_1, x_3, x_4$, the coefficients of the polynomial $p_\mathbf{A}(x)p_\mathbf{B}(x)$ can be obtained using polynomial interpolation. Finally, to obtain the product $\mathbf{A}\mathbf{B}$, we evaluate   $p_\mathbf{A}(x)p_\mathbf{B}(x)$ at $x = x_1, x_2$ and sum them up:
\begin{align*}
p_\mathbf{A}(x_{1})p_\mathbf{B}(x_{1})+p_\mathbf{A}(x_{2})p_\mathbf{B}(x_{2})
=\mathbf{A}_0\mathbf{B}_0+\mathbf{A}_1 \mathbf{B}_1
= \mathbf{A}\mathbf{B}.
\end{align*}\hfill $\blacksquare $
\end{example}

\begin{figure}[t]
\centering
\includegraphics[width=0.48\textwidth]{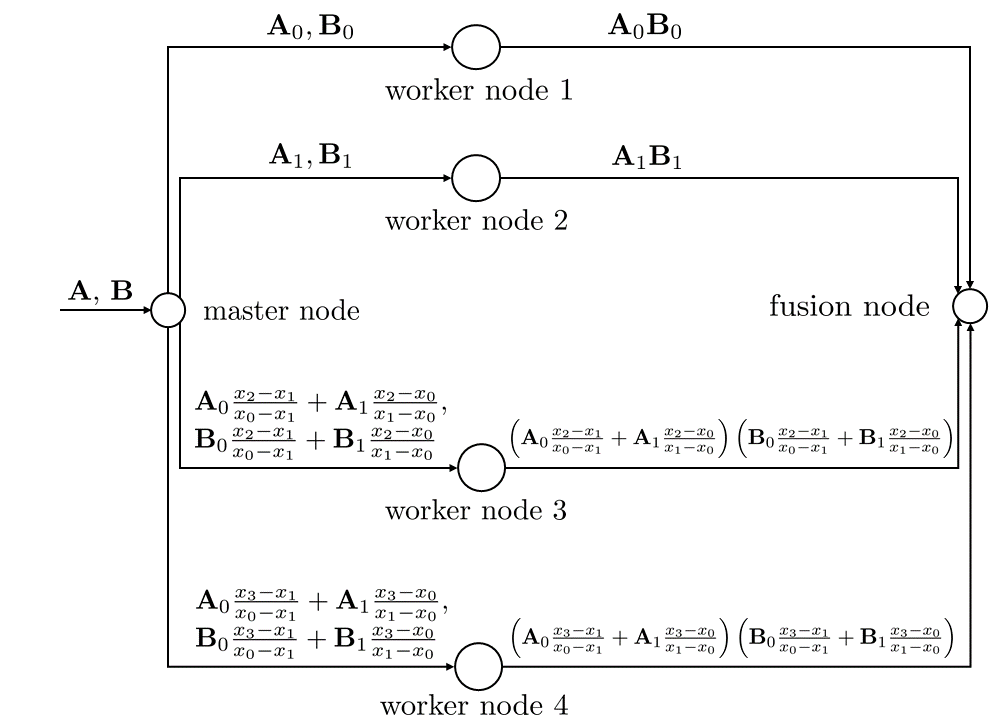}
\caption{An illustration of the computational system  with four worker nodes and applying systematic MatDot codes with $m=2$. The recovery threshold is $3$.}
\label{fig:syscod}
\end{figure}

The following describes the general construction of the systematic MatDot codes for matrix-matrix multiplication.

\begin{construction}\label{con:syscod}[Systematic MatDot codes]
\text{}

\textbf{Splitting of input matrices}:
 $\mathbf{A}$ and $\mathbf{B}$ are split as in (\ref{eq:spltA}). 

\textbf{Master node (encoding)}: Let $x_1, x_2, \ldots, x_{P}$ be arbitrary distinct elements of $\mathbb{F}$. Let $p_\mathbf{A}(x)=\sum_{i=1}^{m} \mathbf{A}_{i-1} L_{i}(x)$ and $p_\mathbf{B}(x)=\sum_{i=1}^{m} \mathbf{B}_{i-1} L_{i}(x)$ where $L_i(x)$ is defined as follows for $i \in \{ 1,\cdots, m \}$:
\begin{equation}\label{eq:lpol}
    L_i(x)=\displaystyle\prod\limits_{\substack{  j\in \{1, \cdots, m\} \setminus \{i\}}}\frac{ x-x_j}{x_i-x_j}.
\end{equation}

The master node sends to the $r$-th worker the evaluations of $p_\mathbf{A}(x),p_\mathbf{B}(x)$ at $x=x_{r}$, \textit{i.e.}, it sends $p_\mathbf{A}(x_{r}),p_\mathbf{B}(x_{r})$ to the $r$-th worker for $r \in \{1,2,\ldots, P\}$.

\textbf{Worker nodes}:  For $r \in \{1,2,\ldots, P\}$, the $r$-th worker node computes the matrix product $\mathbf{C}(x_{r})=p_\mathbf{A}(x_{r}) p_\mathbf{B}(x_{r})$ and sends it to the fusion node on successful completion.

\textbf{Fusion node (decoding)}:  For any $k$ such that $m \leq k \leq 2m-1$, whenever the outputs of the first $k$ successful workers contain the outputs  of the systematic worker nodes $1, \cdots, m$, \textit{i.e.}, $\{\mathbf{C}(x_{r})\}_{r \in \{1, \cdots, m\}}$ is contained in the set of the first $k$ outputs received by the fusion node, the fusion node performs the summation $\sum_{r=1}^{m}\mathbf{C}(x_r)$. Otherwise, if $\{\mathbf{C}(x_{r})\}_{r \in \{1, \cdots, m\}}$ is not contained in the set of the first $2m-1$ evaluations received by the fusion node, the  fusion node performs the following steps: (i) interpolates the polynomial $\mathbf{C}(x)=p_{\mathbf{A}}(x)p_{\mathbf{B}}(x)$ (the feasibility of this step will be shown later in the proof of Theorem \ref{thm:syscodes}), (ii)  evaluates  $\mathbf{C}(x)$ at $x_1, \cdots, x_{m}$, (iii)   performs the summation $\sum_{r=1}^{m} \mathbf{C}(x_r)$.

If the number of successful worker nodes is smaller than $2m-1$ and the first $m$ worker nodes are not included in the successful worker nodes, the fusion node declares a failure.
\end{construction}

The following lemma proves that the construction given here is systematic.

\begin{lemma}\label{rmk:syscod1}
For Construction \ref{con:syscod}, the output of the $r$-th worker node, for $r \in \{1, \cdots, m\}$, is the product $\mathbf{A}_{r-1}\mathbf{B}_{r-1}$. That is, Construction \ref{con:syscod} is a systematic code for matrix-matrix multiplication.
\end{lemma}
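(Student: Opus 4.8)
\textbf{Proof proposal for Lemma \ref{rmk:syscod1}.}
The plan is to exploit the defining interpolation property of the Lagrange basis polynomials $L_i(x)$ appearing in Construction \ref{con:syscod}. The single fact I would establish first is that for all $i,r \in \{1,\dots,m\}$ we have $L_i(x_r) = \delta_{i,r}$ (equal to $1$ if $i = r$ and $0$ otherwise). This is immediate from \eqref{eq:lpol}: if $r = i$, every factor in the product is $\tfrac{x_i - x_j}{x_i - x_j} = 1$, so $L_i(x_i) = 1$; if $r \neq i$ but $r \in \{1,\dots,m\}$, then $r$ is one of the indices $j$ over which the product runs, and the corresponding factor is $\tfrac{x_r - x_r}{x_i - x_r} = 0$ (the denominator being nonzero since the $x_j$ are distinct), so $L_i(x_r) = 0$.

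Next I would substitute $x = x_r$, for $r \in \{1,\dots,m\}$, into the encoding polynomials. Using $p_\mathbf{A}(x) = \sum_{i=1}^{m} \mathbf{A}_{i-1} L_i(x)$ and the property just shown,
\begin{equation}
p_\mathbf{A}(x_r) = \sum_{i=1}^{m} \mathbf{A}_{i-1} L_i(x_r) = \sum_{i=1}^{m} \mathbf{A}_{i-1}\, \delta_{i,r} = \mathbf{A}_{r-1},
\end{equation}
and likewise $p_\mathbf{B}(x_r) = \mathbf{B}_{r-1}$. Hence the product computed by the $r$-th worker node is $\mathbf{C}(x_r) = p_\mathbf{A}(x_r)\, p_\mathbf{B}(x_r) = \mathbf{A}_{r-1}\mathbf{B}_{r-1}$, which is exactly the condition in the definition of a systematic code for matrix-matrix multiplication. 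This proves the lemma.

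There is no real obstacle here: the statement is a direct consequence of the cardinal (delta) property of Lagrange interpolation basis polynomials, and the only thing to be careful about is that the evaluation points $x_1,\dots,x_m$ are precisely the nodes used to define the $L_i$, so the delta property applies; for the remaining workers $r \in \{m+1,\dots,P\}$ the $L_i(x_r)$ need not be $0$ or $1$, but those workers are not required to be systematic. One could optionally remark that, since $p_\mathbf{A}(x)$ and $p_\mathbf{B}(x)$ both have degree at most $m-1$, this is the unique pair of encoding polynomials of that degree interpolating the column-blocks and row-blocks at $x_1,\dots,x_m$, which ties the systematic construction back to Construction \ref{con:matdot}.
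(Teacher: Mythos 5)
Your proposal is correct and follows essentially the same route as the paper's proof: both rest on the cardinal property $L_i(x_r)=\delta_{i,r}$ of the Lagrange basis polynomials at the nodes $x_1,\dots,x_m$, from which $p_\mathbf{A}(x_r)p_\mathbf{B}(x_r)=\mathbf{A}_{r-1}\mathbf{B}_{r-1}$ follows immediately. The only difference is that you explicitly verify the delta property and evaluate the two encoding polynomials separately, whereas the paper cites the property and collapses the double sum in one step.
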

\begin{proof}[Proof of Lemma \ref{rmk:syscod1}]
The lemma follows from noting that the output of the $r$-th worker, for $r \in \{1, \cdots, m\}$, can be written as
\begin{align}
p_{\mathbf{C}}(x_{r})&=p_\mathbf{A}(x_{r}) p_\mathbf{B}(x_{r})\notag\\
&=\sum_{i=1}^{m} \mathbf{A}_{i-1} L_{i}(x_{r})\sum_{i=1}^{m} \mathbf{B}_{i-1} L_{i}(x_{r})\notag\\
&\stackrel{}{=}\mathbf{A}_{r-1}\mathbf{B}_{r-1},
\end{align}
where the last equality follows from the property of $L_i(x)$:
\begin{equation}
    L_{i}(x_j)=\begin{cases}
      1 & \text{if } j=i, \\
      0 & \text{if } j \in \{1, \cdots, m \} \setminus \{i\},
\end{cases}
\end{equation}
for $i \in \{ 1, \cdots, m\}$.
\end{proof}

Now, we  proceed with the proof of Theorem \ref{thm:syscodes}.
\begin{proof}[Proof of Theorem \ref{thm:syscodes}]
Since Construction \ref{con:syscod} is a systematic code for matrix-matrix multiplication (Lemma \ref{rmk:syscod1}), in order to prove the theorem, it suffices to show that  Construction \ref{con:syscod} is a valid construction with a recovery threshold $k = 2m-1$. From (\ref{eq:lpol}), observe that the polynomials $L_i(x)$, $i \in \{1, \cdots, m \}$, have degrees $m-1$ each. Therefore, each of $p_{\mathbf{A}}(x)=\sum_{i=1}^{m} \mathbf{A}_{i-1} L_{i}(x)$ and $p_{\mathbf{B}}(x)=\sum_{i=1}^{m} \mathbf{B}_{i-1} L_{i}(x)$ has a degree of $m-1$ as well. Consequently, $p_{\mathbf{C}}(x)=p_{\mathbf{A}}(x)p_{\mathbf{B}}(x)$ has a degree of $2m-2$.
 Now, because the polynomial $p_{\mathbf{C}}(x)$ has degree $2m-2$, evaluation of the polynomial at any $2m-1$ distinct points is sufficient to interpolate $\mathbf{C}(x)$ using polynomial interpolation algorithm.
 Now, since Construction \ref{con:syscod} is systematic (Lemma \ref{rmk:syscod1}), the product $\mathbf{A}\mathbf{B}$ is the summation of the outputs of the first m workers, \textit{i.e.}, $\mathbf{A}\mathbf{B}= \sum_{r=1}^{m} p_{\mathbf{C}}(x_{r})$. Therefore, after the fusion node interpolates $\mathbf{C}(x)$, evaluating $p_{\mathbf{C}}(x)$ at $x_1, \cdots, x_{m}$, and  performing the summation $\sum_{r=1}^{m} p_{\mathbf{C}}(x_r)$ yields the product $\mathbf{A}\mathbf{B}$.
\end{proof}
\subsection{Complexity analyses of the systematic codes}\label{sec:syscomplexity}
Apart from  the encoding/decoding complexity, the complexity analyses of sytstematic MatDot codes are the same as their MatDot codes counterparts. In the following, we investigate the encoding/decoding complexity of Construction \ref{con:syscod}.

\textbf{Encoding/decoding Complexity}:
Encoding for each worker first requires  performing evaluations of polynomials $L_i(x)$ for all $i\in \{1, \cdots,m \}$, with each evaluation requiring $\mathcal{O}(m)$ operations. This gives $\mathcal{O}(m^2)$ operations for all polynomial evaluations. Afterwards, two additions are performed, each adding $m$ scaled matrices of size $N^2/m$, with complexity $\mathcal{O}(m N^2/m)=\mathcal{O}(N^2)$. Therefore, the overall encoding complexity for \textit{each worker} is $\mathcal{O}(\max(N^2,m^2))=\mathcal{O}(N^2)$. Thus, the overall computational complexity of encoding for $P$ workers is $\mathcal{O}(N^2P)$.

For decoding, first, for the interpolation step, we interpolate a $2m-2$ degree polynomial for $N^2$ elements. Using polynomial interpolation algorithms of complexity $\mathcal{O}(k \log^2 k)$~\cite{kung1973fast},  where $k=2m-1$, the interpolation complexity per matrix element is $\mathcal{O}(m \log^2 m )$. Thus, for $N^2$ elements, the interpolation complexity is  $\mathcal{O}(N^2 m \log^2 m).$  For the evaluation of $p_{\mathbf{C}}(x)$ at $x_1, \cdots, x_{m}$, each evaluation involves
adding $2m-1$ scaled matrices of size $N^2$ with a complexity of $\mathcal{O}(mN^2)$. Hence, for all $m$ evaluations the complexity is $\mathcal{O}(m^2N^2)$. Finally, the complexity of the final  addition of  $m$ matrices of size $N^2$ is $\mathcal{O}(m N^2)$. Hence, the overall decoding complexity is $\mathcal{O}(m^2N^2)$.

\section{Unifying Matdot and Polynomial Codes: Trade-off between communication cost and recovery threshold}\label{sec:polydot}

In this section, we present a code construction, named \emph{PolyDot}, that provides a trade-off between communication costs and recovery thresholds. Polynomial codes \cite{polynomialcodes} have a higher recovery threshold of $m^2$, but have a lower communication cost of $\mathcal{O}(N^2/m^2)$ per worker node. On the other hand, MatDot codes have a lower recovery threshold of $2m-1$, but have a higher communication cost of $\mathcal{O}(N^2)$ per worker node. Here, our goal is to construct a code that bridges the gap between Polynomial codes and MatDot codes so that we can get intermediate communication costs and recovery thresholds, with Polynomial and MatDot codes as two special cases. For this goal, we propose PolyDot codes, which may be viewed as an interpolation of MatDot codes and Polynomial codes -- one extreme of the interpolation is MatDot codes and the other extreme is Polynomial codes.

We follow the same problem setup and system assumptions in \ref{sec:problem}.
The following theorem shows the recovery threhsold of PolyDot codes.

\begin{theorem}\label{thm:tdoff}
For the matrix multiplication problem specified in Section~\ref{sec:problem} computed on the system defined in Definition~\ref{def:compsys}, there exist codes with a recovery threshold of  $t^2(2s-1)$ and a communication cost from each worker node to the fusion node bounded by $\mathcal{O}(N^2/t^2)$ for any positive integers $s$, $t$ such that $st = m$ and both $s$ and $t$ divide $N$.
\end{theorem}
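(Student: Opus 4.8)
The plan is to generalize the MatDot construction by combining the "dot-product" splitting (which gave MatDot its low threshold) with the "outer-product" splitting (which gave Polynomial codes its low communication cost). Concretely, I would split $\mathbf{A}$ into a $t \times s$ grid of blocks $\mathbf{A}_{i,j}$ (each of size $N/t \times N/s$) and $\mathbf{B}$ into an $s \times t$ grid of blocks $\mathbf{B}_{j,k}$ (each of size $N/s \times N/t$), so that the desired product has blocks $\mathbf{C}_{i,k} = \sum_{j=0}^{s-1} \mathbf{A}_{i,j}\mathbf{B}_{j,k}$. The $j$-index is the "contraction" index, to be handled MatDot-style (summation recovered as a single polynomial coefficient), while the $i$- and $k$-indices are the "free" indices, to be handled Polynomial-code-style (each $(i,k)$ pair sits at its own band of powers of $x$). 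Each worker then stores only $st = m$ sub-blocks worth of each matrix, i.e. $N^2/m$ symbols from each of $\mathbf{A}$ and $\mathbf{B}$, meeting the storage constraint, and returns an $N/t \times N/t$ matrix, giving the claimed $\mathcal{O}(N^2/t^2)$ per-worker communication.

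Next I would exhibit the encoding polynomials. Set
\begin{align}
p_\mathbf{A}(x) &= \sum_{i=0}^{t-1}\sum_{j=0}^{s-1} \mathbf{A}_{i,j}\, x^{\,i + jt}, \notag\\
p_\mathbf{B}(x) &= \sum_{j=0}^{s-1}\sum_{k=0}^{t-1} \mathbf{B}_{j,k}\, x^{\,(s-1-j)t + k s t}, \notag
\end{align}
and have worker $r$ compute $p_\mathbf{A}(x_r)p_\mathbf{B}(x_r)$ for distinct $x_r \in \mathbb{F}$. The key algebraic step is to check that in the product $p_\mathbf{A}(x)p_\mathbf{B}(x)$ the monomial $x^{\,i + (s-1)t + k s t}$ has coefficient exactly $\mathbf{C}_{i,k} = \sum_j \mathbf{A}_{i,j}\mathbf{B}_{j,k}$: the exponent $i+jt$ from $p_\mathbf{A}$ plus $(s-1-j')t + k s t$ from $p_\mathbf{B}$ equals $i + (s-1)t + kst$ precisely when $i=i$, $j=j'$ and $k=k$, with the $j$-summation collapsing into that one coefficient just as in MatDot. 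I would verify that the chosen exponent offsets ($t$ for $j$, $1$ for $i$, $st$ for $k$) make the map $(i,j,k)\mapsto$ (exponent) behave like a mixed-radix expansion on the relevant ranges, so that no unwanted cross terms alias onto the target coefficients — this is the main obstacle and the place where the exponent bookkeeping must be done carefully (and where one recovers MatDot by $t=1$ and Polynomial codes by $s=1$).

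Finally, I would bound the degree of $p_\mathbf{C}(x) = p_\mathbf{A}(x)p_\mathbf{B}(x)$. The maximum exponent in $p_\mathbf{A}$ is $(t-1)+(s-1)t$ and in $p_\mathbf{B}$ is $(s-1)t + (t-1)st$, so $\deg p_\mathbf{C} = 2(s-1)t + (t-1) + (t-1)st = t^2(2s-1) - 1$, hence evaluations at any $t^2(2s-1)$ distinct points determine all coefficients of $p_\mathbf{C}$ by polynomial interpolation (this is why we need $|\mathbb{F}|>P \ge t^2(2s-1)$). The fusion node interpolates $p_\mathbf{C}(x)$, reads off the $t^2$ coefficients corresponding to the monomials $x^{\,i+(s-1)t+kst}$ for $0\le i,k \le t-1$, and assembles them into $\mathbf{AB}$. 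A brief remark that the master's encoding cost and the fusion node's interpolation cost remain lower-order in $N$ than the per-worker multiplication cost $\mathcal{O}(N^3 s/t^2)$ completes the argument, and substituting $(s,t)=(m,1)$ and $(s,t)=(1,m)$ recovers Theorem \ref{thm:matdotA} and the Polynomial-code result respectively.
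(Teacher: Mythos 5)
Your construction follows the paper's strategy (same $t\times s$ / $s\times t$ splitting, polynomial encoding, interpolation at the fusion node) but with a different exponent assignment: you give the contraction index $j$ stride $t$ and the second free index $k$ stride $st$, whereas the paper's Construction 3 substitutes $y=x^{t}$, $z=x^{t(2s-1)}$, i.e.\ gives $k$ the larger stride $t(2s-1)$. That larger stride is exactly what makes the paper's Lemma~2 work: the map $(\alpha,\beta,\gamma)\mapsto \alpha+t\beta+t(2s-1)\gamma$ on $\{0,\dots,t-1\}\times\{0,\dots,2s-2\}\times\{0,\dots,t-1\}$ is a bijection, so no aliasing occurs anywhere and the product polynomial has degree exactly $t^2(2s-1)-1$. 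With your stride $st$ this global ``mixed-radix'' property, which you say you would verify, is in fact false whenever $s,t\ge 2$: for example $\alpha=0,\ \beta=s,\ \gamma=0$ and $\alpha=0,\ \beta=0,\ \gamma=1$ both give exponent $st$. So the verification step you outline cannot be carried out as stated; what is true, and what you actually need, is the weaker statement that nothing aliases onto the \emph{target} exponents $i+(s-1)t+kst$. This does hold: reducing the exponent mod $t$ forces $\alpha=i$, and then $\beta+s\gamma=(s-1)+sk$ with $\beta\in[0,2s-2]$ forces $\beta\equiv s-1 \pmod{s}$, hence $\beta=s-1$ and $\gamma=k$, since the only other candidates $\beta=-1$ and $\beta=2s-1$ lie outside the range. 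Your proof needs this argument in place of the claimed bijection.

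There is also an arithmetic slip in the degree count: $(t-1)+2(s-1)t+(t-1)st = st^2+st-t-1$, which equals $t^2(2s-1)-1$ only when $s=1$ or $t=1$. The error is harmless in direction -- your polynomial has degree at most $t^2(2s-1)-1$, so $t^2(2s-1)$ evaluations still suffice and the theorem as stated follows; in fact your substitution needs only $st^2+st-t$ successful workers, which is strictly better than $t^2(2s-1)$ for $s,t\ge2$ and is essentially the improvement (via the substitution $z=x^{st}$) that the paper attributes to the later Generalized PolyDot / Entangled Polynomial constructions in Remark~1. In short: your code is valid and even slightly stronger than the claim, but the written justification must replace the false global-injectivity claim with the targeted no-aliasing argument above and correct the degree computation.
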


Before describing the PolyDot code construction and prove Theorem \ref{thm:tdoff}, we first introduce the following simple PolyDot code example with $m=4$ and $s = t= 2$.

\begin{example}\label{ex:polydot}[PolyDot codes ($m=4, s=2, k=12$)]

Matrix $\mathbf{A}$ is split into submatrices $\mathbf{A}_{0,0}, \mathbf{A}_{0,1},\mathbf{A}_{1,0},\mathbf{A}_{1,1}$  each of dimension $N /2 \times N / 2$. Similarly,  matrix $\mathbf{B}$ is split into submatrices $\mathbf{B}_{0,0}, \mathbf{B}_{0,1},\mathbf{B}_{1,0},\mathbf{B}_{1,1}$  each of dimension $N /2 \times N / 2$  as follows:
\begin{align}\label{eq:exinterAnB}
\mathbf{A} &= \left[\begin{array}{cc} \mathbf{A}_{0,0} &\mathbf{A}_{0,1}\\\mathbf{A}_{1,0}& \mathbf{A}_{1,1}\end{array}\right], \mathbf{B} = \left[\begin{array}{cc} \mathbf{B}_{0,0}  &\mathbf{B}_{0,1}\\\mathbf{B}_{1,0}& \mathbf{B}_{1,1}\end{array}\right].
\end{align}
Notice that, from (\ref{eq:exinterAnB}), the product $\mathbf{A}\mathbf{B}$ can be written as
\begin{align}\label{eq:exinterAB}
\mathbf{A} \mathbf{B} &= \left[\begin{array}{cc} \sum_{i=0}^1 \mathbf{A}_{0,i}\mathbf{B}_{i,0} &\sum_{i=0}^1 \mathbf{A}_{0,i}\mathbf{B}_{i,1}\\\sum_{i=0}^1 \mathbf{A}_{1,i}\mathbf{B}_{i,0}& \sum_{i=0}^1 \mathbf{A}_{1,i}\mathbf{B}_{i,1}\end{array}\right].
\end{align}

Now, we define the encoding functions $p_\mathbf{A}(x)$ and $p_\mathbf{B}(x)$ as
$$p_\mathbf{A}(x)= \mathbf{A}_{0,0}+\mathbf{A}_{1,0} x+\mathbf{A}_{0,1} x^2  + \mathbf{A}_{1,1} x^3,$$  $$p_\mathbf{B}(x)= \mathbf{B}_{0,0} x^2+\mathbf{B}_{1,0}  +\mathbf{B}_{0,1} x^8 +\mathbf{B}_{1,1} x^6.$$

Observe the following:
\begin{enumerate}[label=(\roman*)]
    \item the coefficient of $x^{2}$ in $p_\mathbf{A}(x)p_\mathbf{B}(x)$ is $\sum_{i=0}^1 \mathbf{A}_{0,i}\mathbf{B}_{i,0}$,
    \item the coefficient of $x^{8}$ in $p_\mathbf{A}(x)p_\mathbf{B}(x)$ is $\sum_{i=0}^1 \mathbf{A}_{0,i}\mathbf{B}_{i,1}$,
    \item the coefficient of $x^{3}$ in $p_\mathbf{A}(x)p_\mathbf{B}(x)$ is $\sum_{i=0}^1 \mathbf{A}_{1,i}\mathbf{B}_{i,0}$, and
    \item the coefficient of $x^{9}$ in $p_\mathbf{A}(x)p_\mathbf{B}(x)$ is $\sum_{i=0}^1 \mathbf{A}_{1,i}\mathbf{B}_{i,1}$.
\end{enumerate}

Let $x_1, \cdots, x_{P}$ be distinct elements of $\mathbb{F}$. The master node sends $p_\mathbf{A}(x_{r})$ and $p_\mathbf{B}(x_{r})$ to the $r$-th worker node, $r \in \{1, \cdots, P\}$, and the $r$-th worker node performs the multiplication $p_\mathbf{A}(x_{r})p_\mathbf{B}(x_{r})$ and sends the result to the fusion node.


 Let worker nodes $1, \cdots, 12$ be the first $12$ worker nodes to send their computation outputs to the fusion node, then the fusion node obtains the matrices
$p_\mathbf{A}(x_{r})p_\mathbf{B}(x_{r})$ for all  $r \in \{1, \cdots, 12\}$. Since these $12$ matrices can be seen as twelve evaluations of the matrix polynomial $p_\mathbf{A}(x)p_\mathbf{B}(x)$ of degree $11$ at twelve distinct points $x_1, \cdots, x_{12}$, the coefficients of the matrix polynomial $p_\mathbf{A}(x) p_\mathbf{B}(x)$ can be obtained using polynomial interpolation. This includes the coefficients of $x^{i+2+6j}$ for all $i,j \in \{0,1\}$, i.e., $\sum_{k=0}^1 \mathbf{A}_{i,k}\mathbf{B}_{k,j}$ for all $i,j \in \{0,1\}$. Once the matrices $\sum_{k=0}^1 \mathbf{A}_{i,k} \mathbf{B}_{k,j}$ for all $i,j \in \{0,1\}$ are obtained, the product $\mathbf{A} \mathbf{B}$ is obtained by (\ref{eq:exinterAB}).\hfill $\blacksquare$
\end{example}

The recovery threshold for $m=4$ in Example~\ref{ex:polydot} is $k = 12$. This is larger than the recovery threshold of MatDot codes, which is $k = 2m-1 = 9$, and smaller then the recovery threshold of Polynomial codes, which is $k= m^2 = 16$. Hence, we can see that the recovery thresholds of PolyDot codes are somewhere between those of MatDot codes and Polynomial codes.

The following describes the general construction of PolyDot($m,s,t$) codes. Note that although two parameters $m$ and $s$ are sufficient to characterize a PolyDot code, we include $t$ in the parameters for better readability.

\begin{construction}\label{con:polydot}[PolyDot($m,s,t$) codes]

\textbf{Splitting of input matrices}: $\mathbf{A}$ and $\mathbf{B}$  are split both horizontally and vertically:
\begin{align}\label{eq:interAnB}
\mathbf{A} &= \left[\begin{array}{ccc} \mathbf{A}_{0,0} & \cdots &\mathbf{A}_{0,s-1}\\\vdots & \ddots & \vdots \\\mathbf{A}_{t-1,0}&\cdots& \mathbf{A}_{t-1,s-1}\end{array}\right],\notag\\\notag\\ \mathbf{B} &= \left[\begin{array}{ccc} \mathbf{B}_{0,0} & \cdots &\mathbf{B}_{0,t-1}\\\vdots & \ddots & \vdots \\\mathbf{B}_{s-1,0}&\cdots& \mathbf{B}_{s-1,t-1}\end{array}\right],
\end{align}
where, for $i = 0, \cdots, s-1, j = 0, \cdots, t-1$, $\mathbf{A}_{j,i}$'s are $N/t \times N/s$ submatrices of $\mathbf{A}$ and $\mathbf{B}_{i,j}$'s are $N/s \times N/t$ submatrices of $\mathbf{B}$. We choose $s$ and $t$ such that   both $s$ and $t$ divide $N$ and $st = m$.

\textbf{Master node (encoding)}: Define the encoding polynomials as:
\begin{align}
p_\mathbf{A}(x, y) &= \sum_{i=0}^{t-1}\sum_{j=0}^{s-1} \mathbf{A}_{i,j} x^{i} y^{j}, \nonumber \\
p_\mathbf{B}(y, z) &= \sum_{k=0}^{s-1}\sum_{l=0}^{t-1} \mathbf{B}_{k,l} y^{s-1-k} z^{l}.\label{eq:polydot_AB}
\end{align}
The master node sends to the $r$-th worker the evaluations of $p_\mathbf{A}(x, y), p_\mathbf{B}(y, z)$ at $x=x_r, y= x_r^t, z= x_r^{t(2s-1)}$ where $x_r$'s are all distinct for $r \in \{1,2,\ldots, P\}$. By this substitution, we are transforming the three-variable polynomial to a single-variable polynomial as follows\footnote{An alternate substitution can reduce the recovery threshold further as mentioned in subsequent works \cite{DNNPaperISIT, entangledpolycodes}. We will clarify this in Remark~\ref{rem:recovery_threshold}.}:
\begin{equation*}
p_{\mathbf{C}}(x, y, z) = p_{\mathbf{C}}(x) = \sum_{i,j,k,l} \mathbf{A}_{i,j} \mathbf{B}_{k,l} x^{i + t(s-1+j-k) + t(2s-1)l},
\end{equation*}
and evaluate the polynomial $\mathbf{C}(x)$ at $x_r$ for $r= 1, \cdots, P$. In  Lemma~\ref{lem:bijection} that this transformation is one-to-one.

\textbf{Worker nodes}:  For $r \in \{1,2,\ldots, P\}$, the $r$-th worker node computes the matrix product $p_{\mathbf{C}}(x_r, y_r, z_r)=p_\mathbf{A}(x_r, y_r) p_\mathbf{B}(y_r,z_r)$ and sends it to the fusion node on successful completion.

\textbf{Fusion node (decoding)}: The fusion node  uses outputs of the first $t^2(2s-1)$ successful workers to  compute the coefficient of $x^{i-1} y^{s-1} z^{l-1}$ in $\mathbf{C}(x,y,z)=p_\mathbf{A}(x, y) p_\mathbf{B}(y,z)$.  That is, it computes the coefficient of $x^{i-1+(s-1)t+ (2s-1)t(l-1)}$ of the transformed single-variable polynomial. The proof of Theorem \ref{thm:tdoff} shows that this is indeed possible. If the number of successful workers is smaller than $t^2(2s-1)$, the fusion node declares a failure.
\end{construction}

\begin{figure}[t]
\centering
\includegraphics[width=0.5\textwidth]{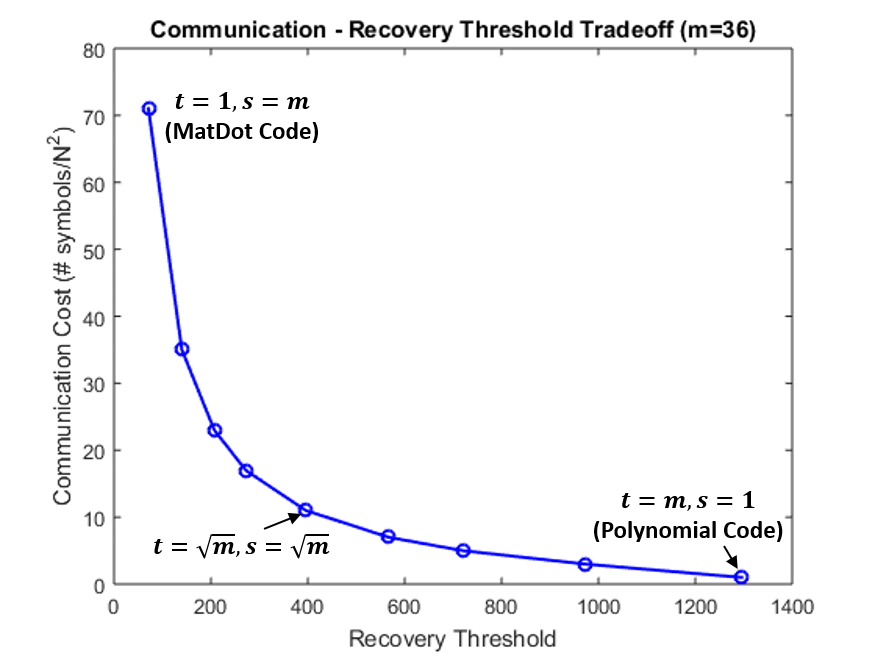}
\caption{An illustration of the trade-off between communication cost (from the workers to the fusion node) and recovery threshold of PolyDot codes by varying $s$ and $t$ for a fixed $m$ ($m = 36$). The minimum communication cost is $N^2$, corresponding to polynomial codes, that have the largest recovery threshold. It is important to note here that in the above, we are \textit{only including the communication cost from the workers to the fusion node}. The communication from the master node to the workers is not included, and it can dominate in situations when the workers are highly unreliable. }
\label{fig:tradeoff}
\end{figure}
Before we prove the theorem, let us discuss the utility of PolyDot codes. By choosing different $s$ and $t$, we can trade off communication cost and recovery threshold. For $s =m$ and $t =1$, PolyDot$(m, s=m, t=1)$ code is a MatDot code which has a low recovery threshold but high communication cost. At the other extreme, for $s =1$ and $t =m$, PolyDot$(m, s=1, t=m)$ code is a Polynomial code. Now let us consider a code with intermediate $s$ and $t$ values such as $s = \sqrt{m}$ and $t = \sqrt{m}$.
A PolyDot$(m, s=\sqrt{m}, t= \sqrt{m})$ code has a recovery threshold of $m(2\sqrt{m}-1) = \Theta(m^{1.5})$, and the total number of symbols to be communicated to the fusion node is $\Theta\left( (N/\sqrt{m})^2 \cdot m^{1.5} \right) = \Theta(\sqrt{m} N^2)$, which is smaller than $\Theta(m N^2)$ required by MatDot codes but larger than $\Theta(N^2)$ required by Polynomial codes. This trade-off is illustrated in Fig.~\ref{fig:tradeoff} for $m = 36$.

To prove Theorem \ref{thm:tdoff}, we need the following lemma.

\begin{lemma} \label{lem:bijection}
The following function
\begin{align}
\nonumber f: &\{0,\cdots, t-1\} \times \{0, \cdots, 2s-2\} \times \{0,\cdots, t-1\}  \to \{0,\cdots, t^2(2s-1)-1\} \nonumber\\
&(\alpha, \beta, \gamma) \mapsto \alpha + t\beta + t(2s-1)\gamma
\end{align}
is a bijection.
\end{lemma}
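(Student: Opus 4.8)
The plan is to show that $f$ is injective; since its domain and codomain are both finite of the same size $t \cdot (2s-1) \cdot t = t^2(2s-1)$, injectivity will immediately give bijectivity. So the whole argument reduces to: if $f(\alpha,\beta,\gamma) = f(\alpha',\beta',\gamma')$ then $(\alpha,\beta,\gamma) = (\alpha',\beta',\gamma')$.

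The key observation is that the expression $\alpha + t\beta + t(2s-1)\gamma$ is exactly a mixed-radix representation of an integer, with ``digits'' $\alpha$ in base $t$ (since $0 \le \alpha \le t-1$), then $\beta$ in base $2s-1$ (since $0 \le \beta \le 2s-2$), then $\gamma$ in base $t$ again. First I would rewrite $f(\alpha,\beta,\gamma) = \alpha + t\big(\beta + (2s-1)\gamma\big)$. Reducing mod $t$ and using $0 \le \alpha, \alpha' \le t-1$ forces $\alpha = \alpha'$; cancelling $\alpha$ and dividing by $t$ gives $\beta + (2s-1)\gamma = \beta' + (2s-1)\gamma'$. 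Now reduce mod $2s-1$: since $0 \le \beta, \beta' \le 2s-2$, this forces $\beta = \beta'$; cancelling and dividing by $2s-1$ gives $\gamma = \gamma'$. Hence $f$ is injective, therefore a bijection. (Equivalently, one can exhibit the inverse explicitly via the division algorithm: given $n \in \{0,\dots,t^2(2s-1)-1\}$, set $\alpha = n \bmod t$, $q = \lfloor n/t \rfloor$, $\beta = q \bmod (2s-1)$, $\gamma = \lfloor q/(2s-1)\rfloor$, and check $\gamma \le t-1$ from the range of $n$.)

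There is essentially no obstacle here — the only thing to be slightly careful about is confirming that the cardinalities match ($|\{0,\dots,t-1\}| \cdot |\{0,\dots,2s-2\}| \cdot |\{0,\dots,t-1\}| = t^2(2s-1) = |\{0,\dots,t^2(2s-1)-1\}|$) so that injective $\Rightarrow$ bijective, and that each modular-reduction step is applied to ``digits'' that genuinely lie in the correct range, which is exactly what the domain restrictions guarantee. This is why the bound $\beta \le 2s-2$ (not $\beta \le 2s-1$) matters: the middle digit must be strictly below its radix $2s-1$.
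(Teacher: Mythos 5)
Your proposal is correct and follows essentially the same route as the paper: both prove injectivity by reducing modulo $t$ to pin down $\alpha$, then modulo the next radix to pin down $\beta$, and finally deduce $\gamma$, with bijectivity following since the domain and codomain have the same cardinality $t^2(2s-1)$ (a counting step you make explicit, while the paper leaves it implicit). Your added remark about the explicit inverse via the division algorithm is a harmless refinement of the same mixed-radix idea.
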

\begin{proof}
Let us assume, for the sake of contradiction, that for some $(\alpha', \beta', \gamma') \ne (\alpha, \beta, \gamma)$, $f(\alpha', \beta', \gamma') = f(\alpha, \beta, \gamma)$. Then $(f(\alpha, \beta, \gamma) \textnormal{ mod } t ) = \alpha = ( f(\alpha', \beta', \gamma') \textnormal{ mod } t ) = \alpha'$ and hence $\alpha = \alpha'$.
Similarly, $( f(\alpha, \beta, \gamma) \textnormal{ mod } t(2s-1) ) = ( f(\alpha', \beta', \gamma') \textnormal{ mod } t(2s-1)) $ gives $\alpha+t\beta = \alpha'+t\beta'$, and thus $\beta=\beta'$ (because $\alpha = \alpha'$). Now, because $\alpha=\alpha'$ and $\beta = \beta'$, as we just established,  $f(\alpha, \beta, \gamma) = f(\alpha', \beta', \gamma')$ from our assumption, it follows that $\gamma = \gamma'$. This contradicts our assumption that $(\alpha, \beta, \gamma) \ne (\alpha', \beta', \gamma')$.
\end{proof}

\begin{proof}[Proof of Theorem \ref{thm:tdoff}]
The product of $p_\mathbf{A}(x, y)$ and $p_\mathbf{B}(y,z)$ can be written as follows:
\begin{align}
p_{\mathbf{C}}(x,y,z) &= p_\mathbf{A}(x,y) p_\mathbf{B}(y,z)  \nonumber \\
& = \left(  \sum_{i=0}^{t-1}\sum_{j=0}^{s-1} \mathbf{A}_{i,j} x^{i} y^{j} \right)\left(\sum_{k=0}^{s-1}\sum_{l=0}^{t-1} \mathbf{B}_{k,l} y^{s-1-k} z^{l}\right) \nonumber \\
&= \sum_{i,j,k,l} \mathbf{A}_{i,j} \mathbf{B}_{k,l} x^{i} y^{s-1+j-k} z^{l}.
\end{align}

Note that the coefficient of $x^{i-1} y^{s-1} z^{l-1}$ in $p_{\mathbf{C}}(x,y,z)$  is equal to $\mathbf{C}_{i,l} = \sum_{k=0}^{s-1} \mathbf{A}_{i,k} \mathbf{B}_{k,l}$.
By our choice of $y = x^{t}$ and $z = x^{t(2s-1)}$ we can further simplify $p_{\mathbf{C}}(x, x^t, x^{t(2s-1)})$:
\begin{equation}
p_{\mathbf{C}}(x, y, z) = p_{\mathbf{C}}(x) = \sum_{i,j,k,l} \mathbf{A}_{i,j} \mathbf{B}_{k,l} x^{i + t(s-1+j-k) + t(2s-1)l}. \label{eq:C_poly}
\end{equation}

The maximum degree of this polynomial is when $i = t-1, j-k = s-1$ and $l = t-1$, which is $(t-1)+(2s-2)t+t(2s-1)(t-1) = t^2(2s-1)-1$.
Furthermore, if we let $\alpha = i, \beta = s-1+j-k, \gamma = l$, the function $f(\alpha, \beta, \gamma)$ in Lemma~\ref{lem:bijection} is the degree of $x$ in (\ref{eq:C_poly}). This implies that for different pairs of $(i, j-k, l)$, we get different powers of $x$. When $j-k = 0$, we obtain $(\sum_{k=0}^{s-1} \mathbf{A}_{i,k} \mathbf{B}_{k,l}) x^{i+t(s-1)+t(2s-1)l} = \mathbf{C}_{i,l} x^{i+t(s-1)+t(2s-1)l}$ which is the desired product we want to recover.

This implies that if we have $t^2(2s-1)$ successful worker nodes, we can compute all the coefficients in (\ref{eq:C_poly}) by polynomial interpolation. Hence, we can recover all $\mathbf{C}_{i,l}$'s, \textit{i.e.}, the coefficients of $x^{i+t(s-1)+t(2s-1)l}$, for $i, l = 0,\cdots, t-1$.
\end{proof}

\begin{remark}
\label{rem:recovery_threshold}
PolyDot codes essentially introduce a general framework which transforms the matrix-matrix multiplication problem into a polynomial interpolation problem with three variables $x,y,z$.
For the PolyDot codes proposed in the initial version of this work \cite{allerton17}, we used the substitution $ y=x^t$ and $z=x^{t (2s-1)} $ to convert the polynomial in three variables to a polynomial in a single variable, and obtained the recovery threshold of $t^2(2s-1) $. However, in subsequent works \cite{DNNPaperISIT, entangledpolycodes} following the initial version of this work\cite{allerton17}, by using a different substitution, $x=y^t, z=y^{st}$, the recovery threshold has been improved to $st^2+s-1$ which is an improvement within a factor of $2$. A comparison between different coding strategies for matrix multiplication framework is summarized in Table~\ref{tb:polydot_frame1}. In the table, we clarify how different strategies can be obtained from our general PolyDot framework by different substitutions into $x,y,z$. Polynomial codes \cite{polynomialcodes} and MatDot codes \cite{allerton17} are special cases with $s=1,t=m$ and $s=m,t=1$ respectively.
\begin{table}[h!]
\caption{Comparison of different strategies for multiplying two matrices using different substitutions in the general PolyDot setup.}
\centering
\begin{tabu} to \textwidth { | X[c] | X[c] | X[c] | X[c] | X[c] | }
\hline
{} & Polynomial Code\cite{polynomialcodes}
($st=m$, $s=1$, $t=m$)& MatDot Code \cite{allerton17} ($st=m$, $s=m$, $t=1$) & PolyDot Code \cite{allerton17} ($st=m$) & Generalized PolyDot Code \cite{DNNPaperISIT}, Entangled Poly \cite{entangledpolycodes} ($st=m$) \\ \hline
Substitution &
$x = x, y = x^m, z = x^m$ &
$x = x, y = x, z = x^{2m-1}$ &
$x = x, y = x^t, z = x^{t(2s-1)}$ &
$x = y^t, y= y, z = y^{st}$ \\ \hline
Recovery Threshold &
$m^2 - 1$ &
$2m-1$ &
$t^2(2s-1)$ &
$t^2 s -s -1$ \\ \hline
\end{tabu}
 \label{tb:polydot_frame1}
\end{table}
\end{remark}

\subsection{Complexity analyses of PolyDot codes}\label{sec:complexitytrdoff}
\textbf{Encoding/decoding complexity}:
Encoding a matrix for one worker requires scaling $m$ matrices with $N^2/m$ elements and adding them up. This requires computational complexity of $\mathcal{O}(m \cdot N^2/m)$. Thus, the overall computational complexity of encoding for $P$ worker nodes is $\mathcal{O}(N^2 P)$.

Decoding requires interpolating a polynomial of degree $t^2(2s-1)-1$ for each element in $\mathbf{C}$.
Using polynomial interpolation algorithms of complexity $\mathcal{O}(k \log^2 k)$ \cite{li2000arithmetic, kung1973fast}  where $k=t^2(2s-1)$, the decoding complexity \textit{per matrix element is} $\mathcal{O}(t^2(2s-1) \log^2t^2(2s-1))$. As $\mathbf{C}$ has $N^2$ elements, the overall decoding complexity is  $\mathcal{O}(N^2 t^2(2s-1) \log^2 t^2(2s-1))$

\textbf{Each worker's computational complexity}: Multiplication of matrices of size $N/t \times N/s$ and $N/s \times N/t$ requires $\mathcal{O}(\frac{N^3}{st^2}) = \mathcal{O}(\frac{N^3}{mt})$ computations.

\textbf{Communication complexity}: Master node communicates $\mathcal{O}(N^2/ts) = \mathcal{O}(N^2/m)$ symbols to each worker, hence total outgoing symbols from the master node will be $\mathcal{O}(P N^2/M)$. For decoding, each node sends $\mathcal{O}(N^2/t^2)$ symbols to the fusion node and recovery threshold is $\mathcal{O}(t^2(2s-1))$. Total number of symbols communicated to the fusion node is $\mathcal{O}((2s-1) N^2)$.

\section{Multiplying more than two matrices}
\label{sec:multiple_matrices}
We now present a coding technique for multiplying $n$ matrices (\emph{$n$-matrix multiplication}), \textit{i.e.}, computing $$\mathbf{C} = \mathbf{D}^{(1)} \mathbf{D}^{(2)} \cdots \mathbf{D}^{(n)}.$$
We state the problem formally first and then explain why this is different from multiplying two matrices.
Then we provide a new code construction called \emph{$n$-matrix code} which applies MatDot codes and Polynomial codes in an alternating fashion. With this construction, we show that we can achieve recovery threshold of $\Theta(m^{\lceil n/2 \rceil})$. Later in the section, we describe a generalized $n$-matrix code which has a flexible structure that can trade off communication costs for recovery thresholds.

\subsection{Problem Statement}\label{subsec:nmatprobst}

Compute  the product $\mathbf{C}=\prod_{i=1}^{n} \mathbf{D^{(i)}}$ of  $N\times N$ square matrices, $\mathbf{D^{(1)}}, \cdots, \mathbf{D^{(n)}}$,  in the computational system specified in Section~\ref{sec:model}.
As we will treat odd and even indices of $\mathbf{D}_i$'s differently, we will denote the odd indices of $\mathbf{D^{(i)}}$'s as $\mathbf{A^{(\lceil i/2 \rceil)}}$ and the even indices of $\mathbf{D^{(i)}}$'s as $\mathbf{B^{(i/2)}}$
for all $i \in \{1, \cdots,n\}$. Using this notation, $\mathbf{C}$ can be written as:
    \begin{align}\label{eq:CCC}
\mathbf{C}=\begin{cases} \prod_{i=1}^{\frac{n}{2}}\mathbf{A^{(i)}} \mathbf{B^{(i)}}& \text{ if $n$ is even}  ,\\\left(\prod_{i=1}^{\lfloor{\frac{n}{2}}\rfloor} \mathbf{ A^{(i)}} \mathbf{ B^{(i)}} \right) \mathbf{A^{(\lceil{\frac{n}{2}}\rceil)}} & \text{ if $n$ is odd}. \end{cases}
\end{align}

Each worker can receive at most  $n N^2/m$ symbols from the master node, where each symbol is an element of $\mathbb{F}$.
Similar to Section \ref{sec:problem}, the computational complexities of the operations at master and fusion nodes, in terms of the matrix parameter $N$,  are  required to be strictly less than the computational complexity at any worker node. The goal is to perform this matrix product utilizing faulty or straggling workers with as low recovery threshold as possible. Again, in the following discussion, we will assume that $|\mathbb{F}| > P$.

\begin{remark}
As $n$-matrix multiplication is a chain of $(n-1)$ matrix-matrix multiplications, one may think that we can apply the coding techniques developed in the previous sections to each pairwise matrix multiplication instead of developing a new coding technique for $n$-matrix multiplication. For example, let us consider computing $\mathbf{C} = \mathbf{A}^{(1)} \mathbf{B}^{(1)} \mathbf{A}^{(2)} $. A master node can first encode $\mathbf{A^{(1)}}$ and $\mathbf{B}^{(1)}$ using MatDot codes and distribute encoded matrices to all the worker nodes and the fusion node can decode $\mathbf{E}  = \mathbf{A}^{(1)} \mathbf{B}^{(1)} $ from the output of successful worker nodes. Then we again encode $\mathbf{E}$ and $\mathbf{A}^{(2)}$ using MatDot code and distribute encoded matrices to the worker nodes. Finally, the fusion node can reconstruct $\mathbf{C}$ by decoding the outputs of successful worker nodes. As you can see from this example,  simply applying MatDot codes on each matrix-matrix multiplication requires two rounds of communication after computing $\mathbf{E}  = \mathbf{A}^{(1)} \mathbf{B}^{(1)}$ and $\mathbf{C} = \mathbf{E} \mathbf{A}^{(2)}$. For $n$-matrix multiplication, it requires $n-1$ rounds of communication. This can be inefficient in the systems when the communication cost increases with number of rounds of communication (e.g., due to large communication set up overheads) .

What we propose in this section is a coded $n$-matrix multiplication strategy which requires only one round of communication. Our main result in Theorem~\ref{thm:nmat} shows that $n$-matrix codes need $\Theta(m^{\lceil n/2 \rceil})$ successful nodes to recover the computation result. On the other hand, successively applying MatDot codes requires $\Theta(m)$ nodes to successfully recover the final result, which is is in scaling sense smaller than $\Theta(m^{\lceil n/2 \rceil})$ for large $n$. This suggests that  $n$-matrix codes avoid intermediate communications at the cost of larger recovery threshold. When communication start-up cost is the main source of delay, one should use $n$-matrix code, and when number of computation nodes is limited, one should sequentially apply coding strategy for two-matrix multiplication such as MatDot or PolyDot codes.
\end{remark}

\subsection{Codes for $n$-matrix multiplication}\label{subsec:nmat}
\begin{theorem}[Recovery threshold for multiple matrix multiplications]\label{thm:nmat}
For the matrix multiplication problem specified in Section \ref{subsec:nmatprobst} computed on the system defined in Definition~\ref{def:compsys}, there exists a code with a recovery threshold of
    \begin{align}
    {k(n,m)=}\left\{ \begin{array}{ll}
        {2m^{{n}/{2}}-1} & \text{if $n$ is even,} \\
        {(m+1)m^{\lfloor\frac{n}{2}\rfloor}-1} &\text{if $n$ is odd.}
        \end{array}\right.\label{expmult}
        \end{align}
\end{theorem}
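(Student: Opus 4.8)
The plan is to extend the MatDot construction by coding all $n$ inputs with single-variable polynomials whose exponents are arranged so that the $\lfloor n/2\rfloor$ ``paired'' multiplications $\mathbf{A}^{(i)}\mathbf{B}^{(i)}$ behave like MatDot codes (only matched blocks survive), while each multiplication that splices consecutive pairs together---and, when $n$ is odd, the trailing factor $\mathbf{A}^{(\lceil n/2\rceil)}$---behaves like a Polynomial code (all cross-terms survive); this is the ``alternating'' structure announced above. Concretely, split each $\mathbf{A}^{(i)}=\mathbf{D}^{(2i-1)}$ vertically into $m$ column-blocks $\mathbf{A}^{(i)}_0,\dots,\mathbf{A}^{(i)}_{m-1}$ and each $\mathbf{B}^{(i)}=\mathbf{D}^{(2i)}$ horizontally into $m$ row-blocks, and set $p_{\mathbf{A}^{(i)}}(x)=\sum_{a=0}^{m-1}\mathbf{A}^{(i)}_a\,x^{a\,m^{i-1}}$ and $p_{\mathbf{B}^{(i)}}(x)=\sum_{b=0}^{m-1}\mathbf{B}^{(i)}_b\,x^{(m-1-b)m^{i-1}}$, i.e.\ a MatDot pair at ``scale'' $m^{i-1}$. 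When $n$ is odd, additionally split $\mathbf{A}^{(\lceil n/2\rceil)}=\mathbf{D}^{(n)}$ vertically into $m$ column-blocks and set $p_{\mathbf{A}^{(\lceil n/2\rceil)}}(x)=\sum_{j=0}^{m-1}\mathbf{A}^{(\lceil n/2\rceil)}_j\,x^{j\,m^{\lfloor n/2\rfloor}}$, a Polynomial-code-style factor. The master sends worker $r$ the evaluations of all of these polynomials at a distinct point $x_r\in\mathbb{F}$; worker $r$ multiplies the received matrices in the order in which they occur in $\mathbf{C}$ and returns $p_{\mathbf{C}}(x_r)$, where $p_{\mathbf{C}}(x)$ is the ordered product of all the encoding polynomials. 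Each worker stores one $N\times N/m$ or $N/m\times N$ evaluation per input, so the budget $nN^2/m$ is met, and---as in Theorem~\ref{thm:matdotA}---encoding and decoding remain cheaper than the per-worker chain multiplication whenever the recovery threshold is small compared to $N$.

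Next I would expand $p_{\mathbf{C}}(x)$ term by term. Using $\mathbf{A}^{(i)}\mathbf{B}^{(i)}=\sum_k \mathbf{A}^{(i)}_k\mathbf{B}^{(i)}_k$ (the analogue of~\eqref{eq:dot}), for even $n$ one obtains $\mathbf{C}=\sum_{k_1,\dots,k_{n/2}}\prod_i \mathbf{A}^{(i)}_{k_i}\mathbf{B}^{(i)}_{k_i}$, and exactly this list of block products appears in $p_{\mathbf{C}}(x)$ as the coefficient of $x^{\,m^{n/2}-1}$; for odd $n$ the same expansion exhibits each column-block $\mathbf{C}_j=\big(\prod_i \mathbf{A}^{(i)}\mathbf{B}^{(i)}\big)\mathbf{A}^{(\lceil n/2\rceil)}_j$ as the coefficient of $x^{\,m^{\lfloor n/2\rfloor}-1+j\,m^{\lfloor n/2\rfloor}}$. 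A short computation of the largest exponent (every pair-mismatch variable at its extreme value $2m-2$, and $j=m-1$ in the odd case) gives $\deg p_{\mathbf{C}}=2m^{n/2}-2$ for even $n$ and $(m+1)m^{\lfloor n/2\rfloor}-2$ for odd $n$. Hence $k(n,m)=\deg p_{\mathbf{C}}+1$ distinct successful evaluations let the fusion node interpolate every coefficient of $p_{\mathbf{C}}(x)$, in particular the one(s) carrying $\mathbf{C}$ (or the blocks $\mathbf{C}_j$), after which $\mathbf{C}$ is reconstructed, trivially by concatenating the $\mathbf{C}_j$ in the odd case.

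I expect the only genuine obstacle to be verifying that no unwanted monomial of $p_{\mathbf{C}}(x)$ lands on a wanted exponent, so that the interpolated coefficient equals $\mathbf{C}$ (resp.\ $\mathbf{C}_j$) with no spurious cross-terms. A generic monomial of $p_{\mathbf{C}}(x)$ has exponent $\sum_{i=1}^{\lfloor n/2\rfloor} u_i\,m^{i-1}$ (plus $j'\,m^{\lfloor n/2\rfloor}$ for odd $n$), where $u_i=a_i-b_i+(m-1)\in\{0,\dots,2m-2\}$ records the block mismatch inside pair $i$, and the monomial is ``good'' exactly when every $u_i=m-1$. The key lemma I would prove is: if $u_1,\dots,u_p\in\{0,\dots,2m-2\}$ and $\sum_{i=1}^{p} u_i\,m^{i-1}=m^{p}-1$, then $u_i=m-1$ for every $i$. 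This follows by induction on $p$: reducing modulo $m$, the only element of $\{0,\dots,2m-2\}$ congruent to $m-1\pmod m$ is $m-1$ itself (because $2m-1>2m-2$), so $u_1=m-1$; subtracting and dividing by $m$ reduces to the case $p-1$. For odd $n$, this lemma together with the spacing $m^{\lfloor n/2\rfloor}$ of the $\mathbf{A}^{(\lceil n/2\rceil)}$-exponents shows the $m$ target exponents are pairwise distinct and are missed by every bad monomial, since any bad value $U=\sum_i u_i m^{i-1}\in[0,\,2m^{\lfloor n/2\rfloor}-2]$ differs from $m^{\lfloor n/2\rfloor}-1$ by strictly less than $m^{\lfloor n/2\rfloor}$ and therefore cannot be moved onto a target by a multiple of $m^{\lfloor n/2\rfloor}$. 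Everything else---that each worker really evaluates the single matrix polynomial $p_{\mathbf{C}}$, that the non-commutative product expands term by term as stated, and the complexity accounting---mirrors the proof of Theorem~\ref{thm:matdotA}.
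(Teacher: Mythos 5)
Your construction is the paper's Construction~\ref{con:nmatcod} in disguise: your encoding polynomials are exactly the paper's $p_{\mathbf{A}^{(i)}},p_{\mathbf{B}^{(i)}}$ with the evaluation-point substitution $x\mapsto x^{m^{i-1}}$ already folded into the exponents, and your target exponents ($m^{n/2}-1$ for even $n$, $(j+1)m^{\lfloor n/2\rfloor}-1$ for odd $n$), degree counts, and interpolation step all match the paper's. Where you genuinely differ is in certifying that no unwanted monomial lands on a wanted exponent. The paper does this iteratively: Lemmas~\ref{usefuleven} and~\ref{usefulodd} isolate the relevant coefficient of a product $p(x)\,q(x^{d^{i-1}})$ by splitting each factor into low-, middle-, and high-degree parts, and Claims~\ref{cl:nmat1} and~\ref{cl:nmat2} then peel off one pair $p_{\mathbf{C}^{(i)}}$ at a time by induction on the number of pairs. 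You instead expand the whole product at once and prove a single digit lemma---if $u_i\in\{0,\dots,2m-2\}$ and $\sum_{i=1}^{p}u_i m^{i-1}=m^{p}-1$ then every $u_i=m-1$---by reduction modulo $m$ and induction, plus the observation that for odd $n$ a shift by $j'm^{\lfloor n/2\rfloor}$ can only reach the target $(j+1)m^{\lfloor n/2\rfloor}-1$ when $j'=j$, since $\bigl|U-(m^{\lfloor n/2\rfloor}-1)\bigr|<m^{\lfloor n/2\rfloor}$. Both routes are correct and yield the same thresholds $2m^{n/2}-1$ and $(m+1)m^{\lfloor n/2\rfloor}-1$; your argument is more direct and is essentially the mixed-radix/digit-uniqueness technique the paper itself deploys in Appendix~\ref{app:proof2} for the generalized codes of Theorem~\ref{thm:nmat2}, so it has the merit of unifying the two proofs, whereas the paper's pairwise lemmas make the alternating MatDot/Polynomial structure visible one stage at a time.
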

\begin{proof}
See Appendix~\ref{app:proof1}.
\end{proof}

Before we describe the general  code construction that will be used to prove Theorem \ref{thm:nmat}, we first present simple examples for even and odd $n$. The first example shows the example for even $n$.

\begin{example} [Multiplying $4$ matrices  ($n=4, m=2, k=7$)] \label{ex:ABCD}

    Here, we give an example of multiplying $4$ matrices and  show that a recovery threshold of $7$ is achievable. For $i\in\{1,2\}$, matrix $\mathbf{A}^{(i)}$ is split vertically into submatrices $\mathbf{A}^{(i)}_0, \mathbf{A}^{(i)}_1$ each of dimension $N \times \frac{N}{2}$ as follows:  $\mathbf{A}^{(i)} =\left[\mathbf{A}^{(i)}_0 \hspace{2pt} \mathbf{A}^{(i)}_1 \right]$,  while, for $i \in\{1,2\}$, matrix $\mathbf{B}^{(i)}$ is split horizontally into submatrices $\mathbf{B}^{(i)}_0, \mathbf{B}^{(i)}_1$ each of dimension $\frac{N}{2} \times N$ as follows:
    \begin{align}\label{eq:exnmat}
\mathbf{B}^{(i)} &= \left[\begin{array}{c} \mathbf{B}^{(i)}_0\\\mathbf{B}^{(i)}_1\end{array}\right].
\end{align}

    Notice that the product $\mathbf{C} = \prod_{i=1}^2 \mathbf{A}^{(i)} \mathbf{B}^{(i)}$ can now be written as
    \begin{align}\label{eq:nmatprod}
        \prod_{i=1}^2 \mathbf{A}^{(i)} \mathbf{B}^{(i)} &=
    \left(\mathbf{A}^{(1)} \mathbf{B}^{(1)} \right)    \left( \mathbf{A}^{(2)} \mathbf{B}^{(2)} \right) = \left( \mathbf{A}^{(1)}_0 \mathbf{B}^{(1)}_0 + \mathbf{A}^{(1)}_1 \mathbf{B}^{(1)}_1 \right)\left( \mathbf{A}^{(2)}_0 \mathbf{B}^{(2)}_0 + \mathbf{A}^{(2)}_1 \mathbf{B}^{(2)}_1\right).
    \end{align}

    Now, we define the encoding polynomials $p_{\mathbf{A}_i}(x), p_{\mathbf{B}_i}(x)$, $i \in \{1,2\}$ as follows:
    \begin{align}\label{eq:nmatenc1}
        p_{\mathbf{A}^{(1)}}(x)&= \mathbf{A}^{(1)}_0 + \mathbf{A}^{(1)}_1 x,\notag\\
        p_{\mathbf{B}^{(1)}}(x)&= \mathbf{B}^{(1)}_0 x + \mathbf{B}^{(1)}_1, \notag\\
        p_{\mathbf{A}^{(2)}}(x)&= \mathbf{A}^{(2)}_0 + \mathbf{A}^{(2)}_1 x, \notag\\
        p_{\mathbf{B}^{(2)}}(x)&= \mathbf{B}^{(2)}_0 x + \mathbf{B}^{(2)}_1.
\end{align}

From (\ref{eq:nmatenc1}), we have
\begin{align}\label{eq:nmatenc}
        p_{\mathbf{A}^{(1)}}(x)& p_{\mathbf{B}^{(1)}}(x) = \mathbf{A}^{(1)}_0 \mathbf{B}^{(1)}_1 + (\mathbf{A}^{(1)}_0 \mathbf{B}^{(1)}_0 + \mathbf{A}^{(1)}_1 \mathbf{B}^{(1)}_1)x+ \mathbf{A}^{(1)}_1 \mathbf{B}^{(1)}_0 x^2, \notag\\
        p_{\mathbf{A}^{(2)}}(x)& p_{\mathbf{B}^{(2)}}(x) = \mathbf{A}^{(2)}_0 \mathbf{B}^{(2)}_1 + (\mathbf{A}^{(2)}_0 \mathbf{B}^{(2)}_0 + \mathbf{A}^{(2)}_1 \mathbf{B}^{(2)}_1)x+ \mathbf{A}^{(2)}_1 \mathbf{B}^{(2)}_0 x^2.
\end{align}

From (\ref{eq:nmatprod}) along with (\ref{eq:nmatenc}), we can observe the following:
\begin{enumerate}[label=(\roman*)]
    \item the coefficient of $x$ in $p_{\mathbf{A}^{(1)}}(x) p_{\mathbf{B}^{(1)}}(x)$ is  $ \mathbf{A}^{(1)}_0 \mathbf{B}^{(1)}_0 + \mathbf{A}^{(1)}_1 \mathbf{B}^{(1)}_1 =\mathbf{A}^{(1)} \mathbf{B}^{(1)}$,
    \item the coefficient of $x^2$ in $p_{\mathbf{A}^{(2)}}(x^2) p_{\mathbf{B}^{(2)}}(x^2)$ is the product $\mathbf{A}^{(2)}_0 \mathbf{B}^{(2)}_0 + \mathbf{A}^{(2)}_1 \mathbf{B}^{(2)}_1 =\mathbf{A}^{(2)} \mathbf{B}^{(2)}$, and
\item the coefficient of $x^3$ in $p_{\mathbf{A}^{(1)}}(x) p_{\mathbf{B}^{(1)}}(x) p_{\mathbf{A}^{(2)}}(x^2) p_{\mathbf{B}^{(2)}}(x^2) $ is the product $\prod_{i=1}^2 \mathbf{A^{(i)}} \mathbf{B^{(i)}}$ (our desired output).
\end{enumerate}

Let $x_1, \cdots, x_{P}$ be distinct elements of $\mathbb{F}$, the master node sends $p_{\mathbf{A^{(i)}}}(x^i_{r})$ and $p_{\mathbf{B^{(i)}}}(x^i_{r})$, for all $i\in\{1,2\}$, to the $r$-th worker node, $r \in \{1, \cdots, P\}$, and the $r$-th worker node performs the multiplication $\prod_{i=1}^2 p_{\mathbf{A^{(i)}}}(x^i_{r}) p_{\mathbf{B^{(i)}}}(x^i_{r})$ and sends the output to the fusion node.

Let worker nodes $1, \cdots, 7$ be the first $7$ worker nodes to send their computation outputs to the fusion node, then the fusion node receives the matrices
$\prod_{i=1}^2 p_{\mathbf{A^{(i)}}}(x^i_{r})p_{\mathbf{B^{(i)}}}(x^i_{r})$ for all  $r \in \{1, \cdots, 7\}$. Since these $7$ matrices can be seen as $7$ evaluations of the matrix polynomial $\prod_{i=1}^2 p_{\mathbf{A^{(i)}}}(x^i) p_{\mathbf{B^{(i)}}} (x^i)$ of degree $6$ at $7$ distinct evaluation points $x_1, \cdots, x_{7}$, the coefficients of the matrix polynomial $\prod_{i=1}^2 p_{\mathbf{A^{(i)}}} (x^i)p_{\mathbf{B^{(i)}}} (x^i)$ can be obtained using polynomial interpolation. This includes the coefficient of $x^3$, \textit{i.e.}, $\prod_{i=1}^2 \mathbf{A^{(i)}} \mathbf{B^{(i)}}$. \hfill $\blacksquare$
\end{example}

Now we show an example for odd $n$.

\begin{example} [Multiplying $3$ matrices ($n =3, m=2, k=5$)] \label{ex:ABC}

Here, we give an example of multiplying $3$ matrices and show that a recovery threshold of $5$ is achievable. In this example, we have three input matrices $\mathbf{A}^{(1)}$, $\mathbf{B}^{(1)}$, and $\mathbf{A}^{(2)}$, each of dimension $N \times N$ and need to compute the product $\mathbf{A}^{(1)} \mathbf{B}^{(1)} \mathbf{A}^{(2)}$. First, the three input matrices are split in the same way as in Example~\ref{ex:ABCD}. The product $\mathbf{A}^{(1)} \mathbf{B}^{(1)} \mathbf{A}^{(2)}$ can now be written as
    \begin{equation}\label{eq:nmatprod2}
\mathbf{C} = \mathbf{A}^{(1)} \mathbf{B}^{(1)} \mathbf{A}^{(2)} = \left[\mathbf{A}^{(1)} \mathbf{B}^{(1)} \mathbf{A}^{(2)}_0  \hspace{3mm} \mathbf{A}^{(1)} \mathbf{B}^{(1)} \mathbf{A}^{(2)}_1 \right],
    \end{equation}
where $\mathbf{A}^{(1)} \mathbf{B}^{(1)}= \mathbf{A}^{(1)}_{0}\mathbf{B}^{(1)}_{0}+ \mathbf{A}^{(1)}_{1} \mathbf{B}^{(1)}_{1}$.

    Now, we define the encoding polynomials $p_{\mathbf{A}_1}(x), p_{\mathbf{B}_1}(x), p_{\mathbf{A}_2}(x)$ as follows:
    \begin{align}
        p_{\mathbf{A}^{(1)}}(x)&= \mathbf{A}^{(1)}_0 + \mathbf{A}^{(1)}_1 x,\notag\\
        p_{\mathbf{B}^{(1)}}(x)&= \mathbf{B}^{(1)}_0 x + \mathbf{B}^{(1)}_1, \notag\\
        p_{\mathbf{A}^{(2)}}(x)&= \mathbf{A}^{(2)}_0 + \mathbf{A}^{(2)}_1 x. \label{eq:nmatenc2}
\end{align}
From (\ref{eq:nmatenc2}), we have

\begin{align} \label{eq:nmatprod3}
        p_{\mathbf{A^{(1)}}}(x) p_{\mathbf{B^{(1)}}}(x) p_{\mathbf{A^{(2)}}}(x^2) &=  \mathbf{A}^{(1)}_0 \mathbf{B}^{(1)}_1 \mathbf{A}^{(2)}_0 +(\mathbf{A}^{(1)}_0 \mathbf{B}^{(1)}_0 + \mathbf{A}^{(1)}_1 \mathbf{B}^{(1)}_1) \mathbf{A}^{(2)}_0 x +(\mathbf{A}^{(1)}_1 \mathbf{B}^{(1)}_0 \mathbf{A}^{(2)}_0 +
    \mathbf{A}^{(1)}_0 \mathbf{B}^{(1)}_1 \mathbf{A}^{(2)}_1)x^2 \notag\\
        &+(\mathbf{A}^{(1)}_0 \mathbf{B}^{(1)}_0 + \mathbf{A}^{(1)}_1 \mathbf{B}^{(1)}_1) \mathbf{A}^{(2)}_1 x^3+ \mathbf{A}^{(1)}_1 \mathbf{B}^{(1)}_0 \mathbf{A}^{(2)}_1 x^4.
\end{align}

From (\ref{eq:nmatprod3}),  we can observe the following:
\begin{enumerate}[label=(\roman*)]
    \item the coefficient of $x$ in $p_{\mathbf{A}^{(1)}}(x) p_{\mathbf{B}^{(1)}}(x) p_{\mathbf{A}^{(2)}}(x^2)$ is the product $\mathbf{A}^{(1)} \mathbf{B}^{(1)} \mathbf{A}^{(2)}_0$, and
\item the coefficient of $x^3$ in $p_{\mathbf{A}^{(1)}}(x) p_{\mathbf{B}^{(1)}}(x) p_{\mathbf{A}^{(2)}}(x^2)$ is the product $\mathbf{A}^{(1)} \mathbf{B}^{(1)} \mathbf{A}^{(2)}_1$.
\end{enumerate}

From (\ref{eq:nmatprod2}), these two coefficients suffice to recover $\mathbf{C}$. Let $x_1, \cdots, x_{P}$ be distinct elements of $\mathbb{F}$, the master node sends $p_{\mathbf{A}^{(i)}}(x^i_{r})$, for all $i\in\{1,2\}$, and $p_{\mathbf{B}_1}(x_{r})$  to the $r$-th worker node, $r \in \{1, \cdots, P\}$, where the $r$-th worker node performs the multiplication $ p_{\mathbf{A}^{(1)}}(x_{r})p_{\mathbf{B}^{(1)}}(x_{r})p_{\mathbf{A}^{(2)}}(x^2_{r})$ and sends the output to the fusion node.

Let worker nodes $1, \cdots, 5$ be the first $5$ worker nodes to send their computation outputs to the fusion node, then the fusion node receives the matrices
$p_{\mathbf{A}^{(1)}}(x_{r})p_{\mathbf{B}^{(1)}}(x_{r})p_{\mathbf{A}^{(2)}}(x^2_{r})$ for all  $r \in \{1, \cdots, 5\}$. Since these $5$ matrices can be seen as $5$ evaluations of the  polynomial $p_{\mathbf{A}^{(1)}}(x)p_{\mathbf{B}^{(1)}}(x)p_{\mathbf{A}^{(2)}}(x^2)$ of degree $4$ at five distinct evaluation points $x_1, \cdots, x_{5}$, the coefficients of the matrix polynomial $p_{\mathbf{A}^{(1)}}(x_{r})p_{\mathbf{B}^{(1)}}(x_{r})p_{\mathbf{A}^{(2)}}(x^2_{r})$ can be obtained using polynomial interpolation. This includes the coefficients of $x$ and $x^3$, \textit{i.e.}, $\mathbf{A}^{(1)} \mathbf{B}^{(1)} \mathbf{A}^{(2)}_0$ and $\mathbf{A}^{(1)} \mathbf{B}^{(1)} \mathbf{A}^{(2)}_1$. \hfill $\blacksquare$
\end{example}

In the following, we present a code construction for $n$-matrix multiplication for general $n$ and $m$.

\begin{construction}\label{con:nmatcod}[An $n$-matrix multiplication code]
\text{}

\textbf{Splitting of input matrices}:
for every $i \in \{1, \cdots, \lceil\frac{n}{2}\rceil\}$ and $j \in \{1, \cdots, \lfloor\frac{n}{2}\rfloor\}$, $\mathbf{A}_i$ and $\mathbf{B}_j$ are split as follows
\begin{equation}
\mathbf{A}^{(i)} = \left[\mathbf{A}^{(i)}_{1} \ \mathbf{A}^{(i)}_{2} \ \ldots \ \mathbf{A}^{(i)}_{m} \right],\;\;\; \mathbf{B}^{(j)} =\left[\begin{array}c \mathbf{B}^{(j)}_{1}\\\mathbf{B}^{(j)}_{2}\\\vdots \\\mathbf{B}^{(j)}_{m}\end{array}\right],
\end{equation}
where, for $k \in \{1,\ldots,m\}$,  $\mathbf{A}^{(i)}_{k},\mathbf{B}^{(j)}_{k}$ are $N \times N/m$ and $N/m \times N$ dimensional matrices, respectively.

\textbf{Master node (encoding)}: Let $x_1, x_2, \ldots, x_{P-1}$ be arbitrary distinct elements of $\mathbb{F}$.  For $i \in \{1, \cdots, \lceil\frac{n}{2}\rceil\}$, define $p_{\mathbf{A}^{(i)}}(x)=\sum_{j=1}^{m}\mathbf{A}^{(i)}_{j}x^{j-1}$, and, for $i \in \{1, \cdots, \lfloor\frac{n}{2}\rfloor\}$, define $p_{\mathbf{B}^{(i)}}(x)= \sum_{j=1}^{m}\mathbf{B}^{(i)}_{j}x^{m-j}.$
For $r\in\{1,2,\dots,P\}$, the master node sends to the $r$-th worker the evaluations, $p_{\mathbf{A}^{(i)}}(x_{r}^{m^{i-1}})$ and $p_{\mathbf{B}^{(j)}}(x_{r}^{m^{j-1}})$, for all   $i\in \{1, \cdots, \lceil \frac{n} {2}\rceil\}$ and $j \in \{1, \cdots, \lfloor \frac{n} {2}\rfloor\}$.

\textbf{Worker nodes}: For $i \in \{1, \cdots, \lceil\frac{n}{2}\rceil\}$,  define
\begin{align}\label{eq:pc}
p_{\mathbf{C}^{(i)}}(x)=\begin{cases}p_{\mathbf{A}^{(i)}}(x)p_{\mathbf{B}^{(i)}}(x)& \text{ if } i\in \{1, \cdots, \lfloor\frac{n}{2}\rfloor\},\\p_{\mathbf{A}^{(i)}}(x)& \text{ if $n$ is odd and } i=\lceil\frac{n}{2}\rceil . \end{cases}
\end{align}
For $r \in \{1,2,\ldots, P\}$, the $r$-th worker node computes the matrix product $\Pi_{i=1}^{\lceil\frac{n}{2}\rceil}p_{\mathbf{C}^{(i)}}(x_{r}^{m^{i-1}})$ and sends it to the fusion node on successful completion.

\textbf{Fusion node (decoding)}:
If $n$ is even, the fusion node  uses outputs of any $2m^{\frac{n}{2}}-1$ successful workers to  compute the coefficient of  $x^{m^{ {n}/{2}}-1}$ in  the matrix polynomial $\Pi_{i=1}^{\frac{n}{2}}p_{\mathbf{C}^{(i)}}(x^{m^{i-1}})$, and if $n$ is odd, the fusion node  uses outputs of any $m^{\lfloor \frac{n}{2} \rfloor}(m+1)-1$ successful workers to  compute the coefficients  of $x^{j m^{\lfloor \frac{n}{2}\rfloor}-1}$, for all $j \in \{1,\cdots,m\}$,  in the matrix polynomial $\Pi_{i=1}^{\lceil\frac{n}{2}\rceil} p_{\mathbf{C}^{(i)}}(x^{m^{i-1}})$  (the feasibility of this step will be shown later in the proof of Theorem \ref{thm:nmat}).

If the number of successful workers is smaller than $2m^{\frac{n}{2}}-1$ for even $n$ or smaller than  $m^{\lfloor \frac{n}{2} \rfloor}(m+1)-1$ for odd $n$, the fusion node declares a failure.
\end{construction}
\begin{remark}\label{rmk:nmat}
The coefficient of $x^{m^i-m^{i-1}}$ in $p_{\mathbf{C}_i}(x^{m^{i-1}})$, for any $i \in \{1, \cdots, \lfloor \frac{n}{2} \rfloor\}$, is $\sum_{j=1}^{m} \mathbf{A}^{(i)}_{j}\mathbf{B}^{(i)}_{j}=\mathbf{A}^{(i)} \mathbf{B}^{(i)}$.
\end{remark}

\subsection{Complexity analyses of Construction \ref{con:nmatcod}}\label{sec:nmatcomplexity}
\textbf{Encoding/decoding complexity}:  Decoding requires interpolating a $2m^{n/2}-2$ degree polynomial if $n$ is even or a $m^{\lfloor\frac{n}{2}\rfloor}(m+1)-2$ degree polynomial if $n$ is odd for each element in the matrix. Using polynomial interpolation algorithms of complexity $\mathcal{O}(k \log^2 k)$~\cite{kung1973fast},  where $k=k(n,m)$, defined in (\ref{expmult}), complexity per matrix element is $\mathcal{O}(m^{\lceil\frac{n}{2}\rceil} \log^2 m^{\lceil\frac{n}{2}\rceil})$. Thus, for $N^2$ elements, the decoding complexity is  $\mathcal{O}(N^2 m^{\lceil\frac{n}{2}\rceil} \log^2 m^{\lceil\frac{n}{2}\rceil}).$

Encoding for each worker requires performing $n$ additions, each adding $m$ scaled matrices of size $N^2/m$, for an overall encoding complexity for \textit{each worker} of $\mathcal{O}(mnN^2/m)=\mathcal{O}(nN^2)$. Thus, the overall computational complexity of encoding for $P$ workers is $\mathcal{O}(nN^2P)$.


\textbf{Each worker's computational cost}: Each worker multiplies $n$ matrices of  dimensions $N\times N/m$ and $N/m\times N$. For any worker $r$ with $r \in \{1, \cdots, P\}$, the multiplication can be performed as follows:

\textbf{Case $1$:} $n$ is even

In this case, worker $r$ wishes to compute $p_{\mathbf{A}^{(1)}}(x_{r})p_{\mathbf{B}^{(1)}}(x_{r})p_{\mathbf{A}^{(2)}}(x_{r}^m)p_{\mathbf{B}^{(2)}}(x_{r}^m) \cdots p_{\mathbf{A}^{(n/2)}}(x_{r}^{m^{n/2-1}}) p_{\mathbf{B}^{(n/2)}}(x_{r}^{m^{n/2-1}})$. Worker $r$ does this multiplication in the following order:
\begin{itemize}
    \item[1.] Compute $p_{\mathbf{B}^{(i)}}(x^{m^{i-1}}_{r})p_{\mathbf{A}^{(i+1)}}(x_{r}^{m^i})$ for all $i \in \{1, \cdots, n/2-1\}$ with a total complexity of $\mathcal{O}(nN^3/m^2)$.
    \item[2.] Compute the product of the output matrices of the previous step with a total complexity of $\mathcal{O}(nN^3/m^3)$. Call this product matrix $\mathbf{D}$. Notice that $\mathbf{D}$ has a dimension of $N/m \times N/m$.
    \item[3.] Compute $p_{\mathbf{A}^{(1)}}(x_{r}) \mathbf{D}$ with complexity $\mathcal{O}(N^3/m^2)$. Call this product matrix $\mathbf{E}$. Notice that $\mathbf{E}$ has a dimension of $N \times N/m$.
    \item[4.] Compute $\mathbf{E}\hspace{1mm} p_{\mathbf{B}^{(n/2)}}(x_{r}^{m^{n/2-1}})$ with complexity $\mathcal{O}(N^3/m)$.
\end{itemize}

Hence, the overall computational complexity per worker for even $n$ is $\mathcal{O}(\max(nN^3/m^2, nN^3/m^3, N^3/m^2, N^3/m))=\mathcal{O}(\max(nN^3/m^2, N^3/m))$.

\textbf{Case $2$:} $n$ is odd

In this case, worker $r$ wishes to compute $$p_{\mathbf{A}^{(1)}}(x_{r})p_{\mathbf{B}^{(1)}}(x_{r})p_{\mathbf{A}^{(2)}}(x_{r}^m)p_{\mathbf{B}^{(2)}}(x_{r}^m) \cdots p_{\mathbf{A}^{((n-1)/2)}}(x_{r}^{m^{(n-3)/2}}) p_{\mathbf{B}^{((n-1)/2)}}(x_{r}^{m^{(n-3)/2}})p_{\mathbf{A}^{((n+1)/2)}}(x_{r}^{m^{(n-1)/2}}).$$ Worker $r$ does this multiplication in the following order:
\begin{itemize}
    \item[1.] Compute $p_{\mathbf{B}^{(i)}}(x^{m^{i-1}}_{r})p_{\mathbf{A}^{(i+1)}}(x_{r}^{m^i})$ for all $i \in \{1, \cdots, (n-1)/2\}$ with a total complexity of $\mathcal{O}(nN^3/m^2)$.
    \item[2.] Compute the product of the output matrices of the previous step with a total complexity of $\mathcal{O}(nN^3/m^3)$. Call this product matrix $\mathbf{D}$. Notice that $\mathbf{D}$ has a dimension of $N/m \times N/m$.
    \item[3.] Compute $p_{\mathbf{A}^{(1)}}(x_{r}) \mathbf{D}$ with complexity $\mathcal{O}(N^3/m^2)$.
\end{itemize}

Hence, the overall computational complexity per worker for odd $n$ is $\mathcal{O}(\max(nN^3/m^2, nN^3/m^3, N^3/m^2))=\mathcal{O}(nN^3/m^2)$.

In conclusion, the  computational complexity per worker is $\mathcal{O}(\max(nN^3/m^2, N^3/m))$ if $n$ is even, and $\mathcal{O}(nN^3/m^2)$ if $n$ is odd\footnote{The expressions for even $n$ and odd $n$ are different due to the last step in the even $n$ case where we compute the matrix multiplication of dimension $N\times N/m$ and $N/m \times N$, which has computational complexity of $\mathcal{O}(N^3/m)$}.



\textbf{Communication cost}:
The master node communicates total of $\mathcal{O}(nPN^2/m)$ symbols to the worker nodes, and the fusion node receives $\mathcal{O}(m^{\lfloor\frac{n}{2}\rfloor}N^2)$ symbols from the successful worker nodes. 

\subsection{Generalized $n$-matrix multiplication}
Here, we give a new code construction which is a generalization of the code construction given in the previous section. The new construction let us split input matrices more flexibly and trade off communication and computation (similar to Section~\ref{sec:polydot} for two matrices). This result is an improvement on \cite{allerton17} and builds on techniques from~\cite{SanghamitraISIT2017,entangledpolycodes}.

\begin{theorem}
[Generalized recovery threshold for multiple matrix multiplications]\label{thm:nmat2}
For the matrix multiplication problem specified in Section \ref{subsec:nmatprobst} computed on the system defined in Definition~\ref{def:compsys}, there exists a code with a recovery threshold of
    \begin{align}
    {k(n,s,t)=}\left\{ \begin{array}{ll}
        {s^{\frac{n}{2}}t^{\frac{n}{2}+1} + s^{\frac{n}{2}}t^{\frac{n}{2}}-t} & \text{if $n$ is even,} \\
        {s^{\frac{n+1}{2}}t^{\frac{n+1}{2}}+s^{\frac{n-1}{2}} t^{\frac{n+1}{2}}-t} &\text{if $n$ is odd}
        \end{array}\right.\label{expmult2}
    \end{align}
for any integers $s, t$ that satisfy $m = st$.
\end{theorem}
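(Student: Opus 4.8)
The plan is to generalize the $n$-matrix construction (Construction~\ref{con:nmatcod}) so that each input matrix is split in \emph{both} dimensions, exactly as PolyDot codes generalize MatDot codes, and then carefully compose these bilinear encodings along the chain so that the exponents of a single variable $x$ separate all the needed products. First I would set up the splitting: for each odd-indexed matrix $\mathbf{A}^{(i)}$ we split it into a $t\times s$ grid of blocks of size $N/t\times N/s$, and for each even-indexed matrix $\mathbf{B}^{(i)}$ into an $s\times t$ grid of blocks of size $N/s\times N/t$; note that the ``inner'' dimension shared by $\mathbf{A}^{(i)}$ and $\mathbf{B}^{(i)}$ is split into $s$ pieces (the contraction that MatDot/PolyDot handle with the $y^{s-1-k}$ trick) and the ``outer'' dimensions into $t$ pieces (the Polynomial-code part). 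The product $\mathbf{C}=\prod \mathbf{A}^{(i)}\mathbf{B}^{(i)}$ is then a sum of chained block products indexed by one contraction index per factor (ranging over $\{0,\ldots,s-1\}$) and one ``junction'' index per matrix-matrix seam and at the two ends (ranging over $\{0,\ldots,t-1\}$).

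Next I would define, for each factor, a two-variable encoding polynomial in the style of~\eqref{eq:polydot_AB}: $p_{\mathbf{A}^{(i)}}$ carries $x^{\text{(outer-left)}} y^{\text{(contraction)}}$ and $p_{\mathbf{B}^{(i)}}$ carries $y^{\,s-1-\text{(contraction)}} z^{\text{(outer-right)}}$, so that in the product $p_{\mathbf{A}^{(i)}}p_{\mathbf{B}^{(i)}}$ the coefficient of $y^{s-1}$ is precisely the contracted block of $\mathbf{A}^{(i)}\mathbf{B}^{(i)}$, while the $z$-variable of $\mathbf{B}^{(i)}$ and the $x$-variable of $\mathbf{A}^{(i+1)}$ must be identified at the seam. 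Then, as in Construction~\ref{con:nmatcod}, each worker evaluates the \emph{whole chain product} at a single scalar $x_r$, with the variables of the $i$-th factor substituted by appropriate powers $x_r^{\,m^{i-1}\cdot(\,\cdot\,)}$ — more precisely, I would pick the substitution so that (a) within each factor the three variables $x,y,z$ become powers of $x_r$ with a ``base'' $st$-type spacing (the PolyDot substitution, improved per Remark~\ref{rem:recovery_threshold}), and (b) successive factors are scaled by a large enough power that their exponent ranges do not overlap. The key combinatorial lemma — the analogue of Lemma~\ref{lem:bijection} — is that the resulting map from (all contraction indices, all junction indices) to the exponent of $x$ is injective; the maximal exponent then gives the degree, hence the recovery threshold $t^2(2s-1)$-style count iterated $\lceil n/2\rceil$ times, which after collecting the $-t$ correction from the improved substitution yields exactly~\eqref{expmult2}.

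Concretely, the steps in order: (1) state the block splitting and rewrite $\mathbf{C}$ as a sum over index tuples; (2) define the bilinear encoding polynomials and the scalar substitution $x\mapsto x_r,\ y\mapsto x_r^{?},\ z\mapsto x_r^{?}$ for each factor, with inter-factor scaling by powers of $st$; (3) expand the chain product and read off the exponent of $x_r$ attached to each monomial as an explicit affine function of the index tuple; (4) prove this exponent map is a bijection onto an integer interval $\{0,\ldots,k(n,s,t)-1\}$ by the same mod-by-increasing-radix argument as in Lemma~\ref{lem:bijection} (peel off the smallest-radix index via $\bmod$, induct); (5) conclude that $k(n,s,t)$ evaluations suffice for interpolation, and that the target coefficients (those with every contraction index in its ``$s-1$'' slot and every junction index free) are among the recovered ones; (6) verify the communication/storage constraints — each worker stores $n N^2/m$ symbols since $st=m$, and sends $O(N^2/t^2)$ symbols — and that the substitution specializes to Theorem~\ref{thm:nmat} when $t=1$ and to the two-matrix PolyDot result when $n=2$.

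The main obstacle I expect is step (4) — pinning down the \emph{right} substitution exponents so that the bijection holds with the tightest possible top exponent. Unlike the two-matrix case, here the $z$-variable of factor $i$ and the $x$-variable of factor $i+1$ interact across the seam, so the contraction indices and the junction indices interleave in a way that must be ordered by radix size just so; getting the bookkeeping so that the final degree is exactly $s^{n/2}t^{n/2+1}+s^{n/2}t^{n/2}-t-1$ (even $n$) rather than something a constant factor larger requires the improved $x=y^t,\ z=y^{st}$-type substitution of Remark~\ref{rem:recovery_threshold} applied \emph{consistently along the whole chain}, and checking that the ``$-t$'' slack survives the composition. The rest is routine PolyDot-style polynomial interpolation bookkeeping together with the complexity accounting already sketched for Construction~\ref{con:nmatcod}.
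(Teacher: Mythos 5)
Your construction is essentially the paper's: the same $t\times s$ / $s\times t$ grid splitting of the $\mathbf{A}^{(i)}$'s and $\mathbf{B}^{(i)}$'s, the same chain of two-variable encodings sharing a seam variable between consecutive factors, reduction to a single variable by geometrically scaled exponents, a degree count, and a mixed-radix ``peel off digits by successive mods'' argument to identify the coefficients carrying $\mathbf{C}_{i,j}$ (this is exactly Construction~\ref{const:nmatrix2} and Appendix~\ref{app:proof2}). Two corrections, though. First, you have the substitutions backwards: the threshold \eqref{expmult2} of Theorem~\ref{thm:nmat2} is what the \emph{plain} successive substitution $z_1=x,\ z_2=x^{t},\ z_3=x^{st},\ z_4=x^{st^2},\ldots,z_{n+1}=x^{s^{\lfloor n/2\rfloor}t^{\lceil n/2\rceil}}$ already gives --- the maximum degree works out to $(t-1)+t(2s-2)+st(2t-2)+\cdots+s^{n/2-1}t^{n/2}(2s-2)+s^{n/2}t^{n/2}(t-1)=s^{n/2}t^{n/2+1}+s^{n/2}t^{n/2}-t-1$ for even $n$, so the ``$-t$'' is not a consequence of any improved substitution. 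The improved assignment of Remark~\ref{rem:recovery_threshold} (giving the largest radices to the two end variables $z_1,z_{n+1}$) yields the strictly smaller threshold $s^{n/2}t^{n/2+1}+s^{n/2}t^{n/2-1}-1$, which is Theorem~\ref{thm:nmat3}, a different statement; if you carry out your bookkeeping with that substitution ``consistently along the whole chain'' you will not land on the formula you are asked to prove (you would prove a stronger achievability, but your own degree computation as written is internally inconsistent).

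Second, your step (4) overclaims: the map from raw index tuples to exponents is \emph{not} injective and not onto $\{0,\ldots,k(n,s,t)-1\}$ --- distinct tuples with the same seam differences share an exponent, and that collision is exactly what produces the sums $\sum_{j_1,\ldots,j_{n-1}}\mathbf{A}^{(1)}_{i,j_1}\mathbf{B}^{(1)}_{j_1,j_2}\cdots\mathbf{B}^{(n/2)}_{j_{n-1},j}$ you want. What the argument needs (and what the paper proves via the alternating-radix digit expansion of the degree $d$) is only that each degree determines its digit vector uniquely, and that the target degrees $d(n,i,j)$, whose middle digits all equal $s-1$ (or $t-1$), are attained precisely by the terms with matched junction indices. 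Your step (5) shows you understand this, so the fix is a rewording rather than a new idea, but as stated the bijection claim is false.
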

\begin{proof}
See Appendix~\ref{app:proof2}.
\end{proof}

\begin{remark}
If we substitute $st = m$ in (\ref{expmult}), we get:
\begin{align}
    {k(n,s,t)=}\left\{ \begin{array}{ll}
    {m^{\frac{n}{2}}(t+1)-t} & \text{if $n$ is even,} \\
    {m^{\frac{n-1}{2}}(m+t)-t} &\text{if $n$ is odd}
    \end{array} \right.\label{expmult3}
\end{align}

By plugging in $s = m, t = 1$, we can see that $k(n,s,t) = 2m^{n/2}-1$ for $n$ even, and $k(n,s,t) = m^{\frac{n+1}{2}} + m^{\frac{n-1}{2}} -1$ for $n$ odd. This matches the recovery threshold given in (\ref{expmult}).
\end{remark}

We now give a construction of generalized $n$-matrix code.

\begin{construction}[Generalized $n$-matrix multiplication code] \label{const:nmatrix2}
\text{}

\textbf{Splitting of input matrices}: We split $A_i$'s and $B_i$'s as follows:
\begin{align}\label{eq:AiBi}
\mathbf{A}^{(i)} = \left[\begin{array}{ccc} \mathbf{A}^{(i)}_{0,0} & \cdots &\mathbf{A}^{(i)}_{0,s-1}\\\vdots & \ddots & \vdots \\\mathbf{A}^{(i)}_{t-1,0} &\cdots& \mathbf{A}^{(i)}_{t-1,s-1}\end{array}\right], \quad \mathbf{B}^{(i)} = \left[\begin{array}{ccc} \mathbf{B}^{(i)}_{0,0} & \cdots &\mathbf{B}^{(i)}_{0,t-1}\\\vdots & \ddots & \vdots \\\mathbf{B}^{(i)}_{s-1,0}&\cdots& \mathbf{B}^{(i)}_{s-1,t-1}\end{array}\right],
\end{align}
where $A^{(i)}_{j,k}$'s have dimension $N/t \times N/s$ and $B^{(i)}_{j,k}$'s have dimension $N/s \times N/t$.

\textbf{Master node (encoding)}: Define the encoding polynomials as
\begin{align*}
p_{\mathbf{A}^{(1)}}(z_1, z_2) &= \sum_{i=0}^{t-1}\sum_{j=0}^{s-1} \mathbf{A}^{(1)}_{i,j} z_1^{i} z_2^{j}, \\
p_{\mathbf{B}^{(1)}}(z_2, z_3) &= \sum_{i=0}^{s-1}\sum_{j=0}^{t-1} \mathbf{B}^{(1)}_{i,j} z_2^{s-1-i} z_3^{j}, \\
&\vdots, \\
p_{\mathbf{B}^{(n/2)}}(z_n, z_{n+1}) &= \sum_{i=0}^{s-1}\sum_{j=0}^{t-1} \mathbf{B}^{(n/2)}_{i,j} z_n^{s-1-i} z_{n+1}^{j}.
\end{align*}
for $n$ even, and
\begin{align*}
p_{\mathbf{A}^{(1)}}(z_1, z_2) &= \sum_{i=0}^{t-1}\sum_{j=0}^{s-1} \mathbf{A}^{(1)}_{i,j} z_1^{i} z_2^{j}, \\
&\vdots, \\
p_{\mathbf{B}^{((n-1)/2)}}(z_{n-1}, z_n) &= \sum_{i=0}^{s-1}\sum_{j=0}^{t-1} \mathbf{B}^{((n-1)/2)}_{i,j} z_{n-1}^{s-1-i} z_n^{j}, \\
p_{\mathbf{A}^{((n+1)/2)}}(z_n, z_{n+1}) &= \sum_{i=0}^{t-1}\sum_{j=0}^{s-1} \mathbf{A}^{((n-1)/2)}_{i,j} z_n^{t-1-i} z_{n+1}^{j},
\end{align*}
for $n$ odd.

The master node sends to the $r$-th worker evaluations of $p_{\mathbf{A}^{(i)}}$'s, and $p_{\mathbf{B}^{(i)}}$'s at $z_1=x_r, z_2= x_r^t, z_3= x_r^{st}, z_4 = x_r^{st^2}, z_5 = z_r^{s^2 t^2}, \cdots, z_{n+1} = x_r^{s^{\lfloor n/2 \rfloor} t^{\lceil n/2 \rceil}}$. $x_r$'s are all distinct for $r \in \{1,2,\ldots, P\}$.

\textbf{Worker nodes}:  For $r \in \{1,2,\ldots, P\}$, the $r$-th worker node computes the matrix product $p_{\mathbf{C}}(x_r)= \prod_i p_{\mathbf{A}^{(i)}}(x_r) p_{\mathbf{B}^{(i)}}(x_r)$ and sends it to the fusion node on successful completion.

\textbf{Fusion node (decoding)}: The fusion node  uses outputs of any $k(n,s,t)$ successful workers (given in (\ref{expmult2})) to  compute the coefficients of $p_{\mathbf{C}}(z)$. If the number of successful workers is smaller than $k(n,s,t)$, the fusion node declares a failure.
\end{construction}

We can further optimize the result in Theorem~\ref{thm:nmat2} by choosing a different substitution as below:
\begin{align}
z_1 = x^{s^{n/2}t^{n/2-1}}, z_2 = x, z_3 = x^{s}, \cdots, z_{n} = x^{s^{n/2-1}t^{n/2-1}}, z_{n+1} = x^{s^{n/2}t^{n/2}} &\textnormal{ for } n \textnormal{ even,} \\
z_1 = x^{s^{(n-1)/2}t^{(n-1)/2}}, z_2 = x, x_3 = x^{s}, \cdots, z_{n} = x^{s^{(n-1)/2}t^{(n-3)/2}}, z_{n+1} = x^{s^{(n-1)/2}t^{(n+1)/2}} &\textnormal{ for } n \textnormal{ odd.}
\end{align}
The main intuition behind this substitution is that for $z_1$ and $z_{n+1}$, their powers grow from $0$ to $t-1$ (or $s-1$), while all the other terms have powers growing from $0$ to $2s-2$ (or $2t-2$). Hence, to minimize the maximum degree of the product polynomial, it is best to assign high powers of $x$ to $z_1$ and $z_{n+1}$. This assignment gives us the improved recovery threshold:
\begin{theorem}
[Improved recovery threshold for multiple matrix multiplications]\label{thm:nmat3}
For the matrix multiplication problem specified in Section \ref{subsec:nmatprobst} computed on the system defined in Definition~\ref{def:compsys}, there exists a code with a recovery threshold of
    \begin{align}
    {k(n,s,t)=}\left\{ \begin{array}{ll}
        {s^{\frac{n}{2}}t^{\frac{n}{2}+1} + s^{\frac{n}{2}}t^{\frac{n}{2}-1}-1} & \text{if $n$ is even,} \\
        {s^{\frac{n+1}{2}}t^{\frac{n+1}{2}}+s^{\frac{n-1}{2}} t^{\frac{n-1}{2}}-1} &\text{if $n$ is odd}
        \end{array}\right.\label{eq:nmat3}
    \end{align}
for any integers $s, t$ that satisfy $m = st$.
\end{theorem}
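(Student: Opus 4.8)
The plan is to keep Construction~\ref{const:nmatrix2} unchanged and only replace the way the auxiliary variables $z_1,\dots,z_{n+1}$ are collapsed onto a single $x$, then redo the degree count and the injectivity check with the new exponents. First I would expand the product polynomial $p_{\mathbf{C}}(z_1,\dots,z_{n+1})=\prod_i p_{\mathbf{A}^{(i)}}(z_{2i-1},z_{2i})\,p_{\mathbf{B}^{(i)}}(z_{2i},z_{2i+1})$ (with a trailing factor $p_{\mathbf{A}^{((n+1)/2)}}(z_n,z_{n+1})$ when $n$ is odd) into monomials $c_{\mathbf{v}}\,z_1^{v_1}\cdots z_{n+1}^{v_{n+1}}$ and record the exponent range of each variable. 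The two boundary variables carry the free row/column-block indices of $\mathbf{C}$, so $v_1\in\{0,\dots,t-1\}$ always, while $v_{n+1}\in\{0,\dots,t-1\}$ for even $n$ and $v_{n+1}\in\{0,\dots,s-1\}$ for odd $n$. Each ``within-pair'' shared variable $z_{2i}$ is used in MatDot fashion over the $s$-block contraction inside $\mathbf{A}^{(i)}\mathbf{B}^{(i)}$, so $v_{2i}\in\{0,\dots,2s-2\}$ with that product carried by the middle exponent $v_{2i}=s-1$; each ``between-pair'' shared variable $z_{2i+1}$ (and $z_n$ in the odd case) is used in MatDot fashion over the $t$-block contraction that links consecutive partial products, so $v_{2i+1}\in\{0,\dots,2t-2\}$ with that contraction carried by $v_{2i+1}=t-1$. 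Hence the block entries of $\mathbf{C}$ are precisely the coefficients $c_{\mathbf{v}^{\star}}$ of the ``target'' monomials $\mathbf{v}^{\star}$ in which every interior exponent is pinned to its middle value and only $v_1,v_{n+1}$ are free; this is the chain version of Remark~\ref{rmk:nmat}.

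Next I would substitute $z_k=x^{e_k}$ with the exponents displayed in the theorem and establish the one combinatorial fact the argument needs: among the target monomials the map $\mathbf{v}^{\star}\mapsto\sum_k e_k v_k^{\star}$ is injective, and moreover no non-target monomial shares a power of $x$ with a target one. The $e_k$ form a mixed-radix system whose digits, in increasing place value, are $z_2,z_3,\dots,z_n$ with radices $s,t,s,t,\dots$ — the sizes of the \emph{middle} contractions, not the full exponent widths $2s-1$ and $2t-1$ — followed by $z_1$ and finally $z_{n+1}$; one checks that the prescribed $e_k$ are exactly the running products of these radices in this order. Collision-freeness then follows by the residue-peeling argument already used in Lemma~\ref{lem:bijection} (the entangled-polynomial trick): reading $\sum_k e_k v_k$ modulo $s$ forces $v_2\equiv s-1\pmod{s}$, and since $0\le v_2\le 2s-2$ this pins $v_2=s-1$; dividing out and reading modulo $t$ pins $v_3=t-1$; iterating down the chain pins every interior exponent to its target value, leaving only $e_1 v_1+e_{n+1}v_{n+1}$, whose injectivity is immediate since $v_1$ and $v_{n+1}$ already lie strictly below their radices. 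This step is exactly where exploiting ``only the middle coefficient is needed'' buys the near factor-two improvement over Theorem~\ref{thm:nmat2}, whose substitution instead used the full widths $2s-1$ and $2t-1$ as radices.

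Then I would compute the degree of the univariate polynomial $p_{\mathbf{C}}(x)=\sum_{\mathbf{v}}c_{\mathbf{v}}\,x^{\sum_k e_k v_k}$, namely $\sum_k e_k\,(\max v_k)$, substitute the $e_k$ together with the maxima $t-1$ for the boundary variables, $2s-2$ for the within-pair variables, and $2t-2$ for the between-pair variables (and $s-1$ for $v_{n+1}$ when $n$ is odd), and simplify: the within-pair and between-pair contributions are two geometric series that telescope and the two boundary terms collapse, leaving the degree equal to $k(n,s,t)-1$ with $k(n,s,t)$ as in \eqref{eq:nmat3}. Since $p_{\mathbf{C}}(x)$ has degree $k(n,s,t)-1$, the outputs of any $k(n,s,t)$ successful workers are $k(n,s,t)$ evaluations at distinct points and therefore determine all coefficients by interpolation; by collision-freeness the fusion node reads off every $c_{\mathbf{v}^{\star}}$, and these are the blocks of $\mathbf{C}$. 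A short closing paragraph, analogous to Section~\ref{sec:nmatcomplexity}, checks that encoding and interpolation stay lower-order than a worker's matrix multiplication.

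The main obstacle I anticipate is making the asymmetric, order-dependent mixed-radix accounting airtight: one must peel the interior digits \emph{first} and the boundary digits $z_1,z_{n+1}$ \emph{last}, precisely because the latter are deliberately assigned the largest exponents, and one must verify that with this ordering each partial sum divided out is genuinely divisible by the next radix, so the modulo-$s$ / modulo-$t$ peeling is valid at every step. The odd-$n$ case is a notational variant — the trailing factor $p_{\mathbf{A}^{((n+1)/2)}}(z_n,z_{n+1})$ makes $z_n$ a between-pair-type variable and $z_{n+1}$ a boundary variable of width $s$ rather than $t$, which changes only one term in the degree sum — and introduces no new idea.
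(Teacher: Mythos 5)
Your proposal is correct and is essentially the argument the paper intends: the paper only states the new substitution together with the intuition that $z_1,z_{n+1}$ should receive the largest powers, and relies on the same mixed-radix machinery used in Appendix B for Theorem~\ref{thm:nmat2}, which is exactly what you adapt — the exponents as running products of the radices $s,t,s,t,\dots$ for the interior variables with $z_1$ and $z_{n+1}$ on top, residue-peeling to pin each interior exponent to its middle value $s-1$ or $t-1$ before reading off $v_1,v_{n+1}$, and the telescoping degree count yielding $k(n,s,t)-1$. I verified your degree computation matches \eqref{eq:nmat3} in both parities, so no gap remains.
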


\begin{remark}
    In this paper, we present three different strategies for multiplying $n$ matrices. All of them can be understood in our general PolyDot setup -- they all differ in the substitutions for the variables $x_1, \cdots, x_{n+1}$ to convert the polynomial in $n$ variables into a polynomial in a single variable (See Table~\ref{tb:polydot_frame2}) for the ease of interpolation.

\begin{table}[h!]
\caption{Comparison of different strategies for multiplying $n$ matrices using different substitutions in the general PolyDot setup when $n$ is even. }
\centering
\begin{tabu} to 0.8\textwidth { | X[c] | X[c] | X[c] | X[c] | }
\hline
{} & $n$-matrix code & Generalized $n$-matrix code & Improved $n$-matrix code \\ \hline
Substitution &
$z_1 = z_2 = x, z_3 = z_4 = x^{m}, \cdots, z_{n-1} = x_{n} = x^{m^{n/2-1}}, z_{n+1} =  x^{m^{n/2}}$ &
$z_1 = x, z_2 = x^t, z_3 = x^{st}, \cdots, z_{n+1} = x^{s^{n/2} t^{n/2}}$ &
$z_1 = x^{s^{n/2}t^{n/2-1}}, z_2 = x, x_3 = x^{s}, \cdots, z_{n} = x^{s^{n/2-1}t^{n/2-1}}, z_{n+1} = x^{s^{n/2}t^{n/2}}$ \\ \hline
Recovery Threshold &
$2m^{n/2}-1$ &
$s^{\frac{n}{2}}t^{\frac{n}{2}+1} + s^{\frac{n}{2}}t^{\frac{n}{2}}-t$ &
$s^{\frac{n}{2}}t^{\frac{n}{2}+1} + s^{\frac{n}{2}}t^{\frac{n}{2}-1}-1$ \\ \hline
\end{tabu}
    \label{tb:polydot_frame2}
\end{table}

Also, note that if we consider the substitutions given here for the particular case of two-matrix multiplication, \textit{i.e.}, $n=2$, the recovery thresholds that we obtain are actually better than the recovery thresholds proposed in the initial version of this work in \cite{allerton17}, and match with the recovery threshold for two matrix multiplication in the subsequent works \cite{entangledpolycodes,DNNPaperISIT}.

\end{remark}

\subsection{Complexity analyses of generalized $n$-matrix codes}
\textbf{Encoding/decoding complexity}: Encoding communication cost is $\mathcal{O}(nN^2P)$ as in Section~\ref{sec:nmatcomplexity}. Decoding complexity is $\mathcal{O}(N^2 k(n,s,t) \log^2 k(n,s,t) )$.

\textbf{Communication Complexity}: The master node sends out $\mathcal{O}(nPN^2/m)$ encoded symbols in the beginning.
After the completion of computation, each node has to send $\mathcal{O}(N^2/t^2)$ symbols to the fusion node. Hence, total number of symbols the fusion node receives is $k(n,s,t) \cdot N^2 /t^2$.
Let us first consider the case when $n$ is even. By substituting (\ref{eq:nmat3}), we obtain $k(n,s,t) N^2 /t^2 = \mathcal{O}(m^{n/2}/t)$. This is the same trade-off we observed using PolyDot codes for single matrix-matrix multiplication. For a fixed $m$, recovery threshold $k(n, s,t)$ grows linearly with $t$ while communication cost is inversely related to $t$ (See Fig.\ref{fig:tradeoff}).
When $n$ is odd, we do not see such trade-off.
Recovery threshold is always $m^{(n-1)/2}(m+t)-t = \mathcal{O}(m^{(n+1)/2})$ regardless of the choice of $t$.
Communication cost on the other hand is $k(n,s,t) N^2 /t^2 = \mathcal{O}(m^{(n+1)/2}/t^2 +m^{(n-1)/2}/t)$ which decreases with growing $t$. For instance, if $t=1$, communication cost is $\mathcal{O}(m^{(n+1)/2})$, and when $t = m$, communication cost is $\mathcal{O}(m^{(n-3)/2})$. This suggests that when $n$ is odd, it is always better to choose $t=m$ as $m$ grows to infinity.

\textbf{Each worker's computation cost}: Using the similar technique shown in Section~\ref{sec:nmatcomplexity}, we can show that each worker's computation complexity is at most $\mathcal{O}(\max (nN^3/m^{1.5}, N^3/m))$ for any choice of $s, t$. If we compare the computation complexity for encoding/decoding and the computation complexity at each worker node, we can see that as long as $N > \mathcal{O}(m^{n/2-1.5}\log m)$, encoding/decoding computation overhead is amortized.

\begin{remark}
Our result given here splits $\mathbf{A}^{(i)}$'s in to $s \times t$ grid of blocks and $\mathbf{B}^{(i)}$'s into $t \times s$ grid of blocks. However, it is not necessary that all matrices have to be split in the same fashion. For instance, $\mathbf{A}^{(1)}$ can be divided into $t_1 \times s_1$ grid and $\mathbf{B}^{(1)}$ can be divided into $s_1 \times t_2$ grid, and so on. In this more general setting $\mathbf{A}^{(i)}$'s are split into $t_i \times s_i$ grid and $\mathbf{B}^{(i)}$'s are split into $s_i \times t_{i+1}$ grid. Let us denote $\mathbf{s} = [s_1, \cdots, s_{n/2}], \mathbf{t} = [t_1, \cdots, t_{n/2+1}]$. Then Theorem~\ref{thm:nmat3} can be rewritten as follows.

\begin{equation}
k(n, \mathbf{s}, \mathbf{t}) = \begin{cases} (t_{n/2+1}-1/t_1) \prod_{i=1}^{n/2} s_i  t_i - 1 \textnormal{ if } n \textnormal{ even, } \\
(t_1 s_{(n+1)/2} + 1) \prod_{i=1}^{(n-1)/2} s_i t_i - 1 \textnormal{ if } n \textnormal{ odd. }
\end{cases}
\end{equation}

In this work we assumed that all matrices have size $N \times N$ for simplicity. However, this assumption is not necessary in  the results presented here. When we have matrices with different dimensions to multiply, splitting each matrix in a different way would be more beneficial. For example, when we multiply matrices $\mathbf{A}, \mathbf{B}$ with dimensions $N \times N$ and $N \times 2$, we can divide $\mathbf{A}$ into $t \times s$ grid and divide $\mathbf{B}$ into $s \times 1$ grid.
\end{remark}

\section{Discussion and Future Work}
\label{sec:discussion}
We provide a novel MatDot code construction for coded matrix multiplication with a recovery threshold of $2m-1$. We also present the systematic MatDot construction achieving the same recovery threshold of $2m-1$. A recent converse of Yu et al.~\cite{entangledpolycodes} shows that the recovery thresholds of MatDot codes are in fact optimal. This paper also provides full proofs of results that appeared in~\cite{allerton17}, including PolyDot constructions which allow a trade-off between communication cost and recovery threshold. Finally, we provide code constructions for multiplying more than two matrices.

We conclude with a discussion that uses an important open problem, namely coded tensor products, to demonstrate how focusing exclusively on recovery thresholds, and ignoring encoding/decoding costs in coded computing problems, can provide impractical solutions.

\subsection{When is coded computing useful? An example of coded tensor products}

 Consider the problem of computing the tensor product of two $N\times N$ square matrices $\mathbf{A}$ and $\mathbf{B}$, \textit{i.e.}, $\mathbf{A}\otimes \mathbf{B}$, using $P$ workers in the system defined in Section~\ref{sec:sysmod}. As usual, our goal is to implement this in a parallelized fashion with a low recovery threshold. For this problem, we show (below) that an application of Polynomial codes \cite{polynomialcodes} yields a recovery threshold of $m^{2}$. However, we also show that this makes the decoding complexity at the fusion node comparable to (or sometimes even larger than) the overall per-worker computational complexity. This can be undesirable when coded computing is performed to address straggling because now the fusion node itself becomes the bottleneck. This leads to two interesting questions for future work:
\begin{itemize}
\item Is there an application where the high decoding cost at the fusion node can be justified?
\item Are there alternative techniques of coding tensor products with reduced decoding overhead?
\end{itemize}


To be concrete, the Polynomial coded tensor-product strategy is as follows. We split the two matrices $\mathbf{A}$ and $\mathbf{B}$ as follows:
$$\mathbf{A}=\begin{bmatrix}\mathbf{A}_0 & \mathbf{A}_1& \dots& \mathbf{A}_{m-1}\end{bmatrix},\qquad \mathbf{B}=\begin{bmatrix}\mathbf{B}_0 & \mathbf{B}_1 &\dots &\mathbf{B}_{m-1}\end{bmatrix}.$$
Note that
\begin{align*}
\mathbf{A} \otimes \mathbf{B} &= [\mathbf{A}_0 \otimes \mathbf{B} ~~ \ldots  ~~ \mathbf{A}_{m-1} \otimes \mathbf{B}] \\
&= [\mathbf{A}_0 \otimes \mathbf{B}_0\;\; \mathbf{A}_0 \otimes \mathbf{B}_1 \;  \cdots \mathbf{A}_{0} \otimes \mathbf{B}_{m-1} ~~ \cdots ~~ \mathbf{A}_{m-1} \otimes \mathbf{B}_{0}  \; \cdots \;   \mathbf{A}_{m-1} \otimes \mathbf{B}_{m-1}  ]
\end{align*}
and thus it suffices to compute all terms of the form $\mathbf{A}_i \otimes \mathbf{B}_j$ for $i,j = 0, \cdots, m-1$ to obtain $\mathbf{A} \otimes \mathbf{B}$.

Let us define $p_\mathbf{A}(x) = \sum_{i=0}^{m-1} \mathbf{A}_i x^{i}$ and $p_\mathbf{B}(x) = \sum_{j=0}^{m-1} \mathbf{B}_j x^{mj}$ respectively.
Let us also choose distinct scalars $x_1, x_2, \ldots, x_{P}$ for each worker. Each worker receives the evaluation of $p_\mathbf{A}(x)$ and $p_\mathbf{B}(x)$ at distinct scalar values, \textit{i.e.}, at $x=x_1,x_2,\ldots,x_P$ respectively. The worker then computes the tensor product $p_\mathbf{A}(x) \otimes p_\mathbf{B}(x)$ that we will denote as $p_{\mathbf{A} \otimes \mathbf{B}}(x)$ at a distinct scalar value of $x$. Thus,  worker $r$ computes $p_\mathbf{A}(x_r) \otimes p_\mathbf{B}(x_r)$ for $r=1,2,\ldots,P$.

Observe that $p_{\mathbf{A} \otimes \mathbf{B}}(x)$ is a polynomial of degree $m^2-1$.
$$p_{\mathbf{A} \otimes \mathbf{B}}(x)= p_\mathbf{A}(x) \otimes p_\mathbf{B}(x) =  \left( \sum_{i=0}^{m-1} \mathbf{A}_i x^{i} \right) \otimes \left( \sum_{j=0}^{m-1} \mathbf{B}_j x^{mj} \right) = \sum_{i=0}^{m-1}\sum_{j=0}^{m-1} (\mathbf{A}_i \otimes \mathbf{B}_j ) x^{i+mj} .$$
The coefficient of $x^{i+mj}$ in $p_{\mathbf{A} \otimes \mathbf{B}}(x)$ is in fact $\mathbf{A}_{i} \otimes \mathbf{B}_{j},$ for {$0 \leq i,j \leq m-1$}. Thus, if the fusion node is able to interpolate all the coefficients of the polynomial $p_{\mathbf{A} \otimes \mathbf{B}}(x)$, it can successfully recover all the matrices in the set $\{\mathbf{A}_{i}\otimes \mathbf{B}_{j}, i,j=0,1,\ldots, m-1\},$ and therefore $\mathbf{A} \otimes \mathbf{B}.$ Because the polynomial is of degree $m^2-1$, the fusion node needs $m^2$ evaluations of the polynomial at distinct values. Worker node $r$ produces an evaluation of the polynomial $p_{\mathbf{A} \otimes \mathbf{B}}(x)$ at $x=x_r$. The fusion node is thus required to wait for any $m^{2}$ successful worker nodes, and then it can interpolate $p_{\mathbf{A}\otimes \mathbf{B}}(x)$.


The computational complexity of the fusion node is $\Theta\left(\frac{N^4}{m^2}m^2 \log^2 (m^2) \right) = \Theta\left(N^4 \log^2 (m) \right),$ which is of the same order as the complexity of the entire tensor product, \textit{i.e.}, $\Theta(N^4)$, and thus, in scaling sense, same or higher than per-processor complexity. Ideally, we would like the computational complexity at the fusion node to be negligible in a scaling sense as compared to the computational complexity at each worker node. 



More generally, let $g_e(P)$ and $g_d(P)$ denote upper bounds (may be loose but are functions of $P$ alone) on the respective encoding and decoding complexities of the code that we are choosing when the length of the codewords are $P$. To determine whether coded computing is a viable option for a computation which can be coded, we can conceptually classify computations that can be coded into different categories as follows:
\begin{itemize}
\item[(a)] Size of the inputs and outputs are negligible in scaling sense compared to the overall computational complexity: some examples are convolutions (Input Size, Output Size $=\Theta(N)$, Computational Complexity $=\Theta(N\log_2{N})$) and matrix-matrix products (Input Size, Output Size $=\Theta(N^2)$, Computational Complexity$=\Theta(N^3)$). Note that, existing strategies (including the ones in this paper) encode groups of symbols of the size of only a fraction of the input, e.g., row-blocks or column-blocks of input matrices $\mathbf{A}$ and $ \mathbf{B}$. For computations of this category, the encoding complexity is \textbf{at most} $\mathcal{O}(\text{Input Size}\times  g_e(P))$ which is the complexity of encoding groups of symbols of the size of the whole input. The communication from the master is also at most $\mathcal{O}(\text{Input Size}\times P)$ which is the cost of sending the whole input to all the $P$ workers. Thus, the encoding complexity and initial communication complexity can be made negligible as compared to the per-worker computational complexity if the number of workers, \textit{i.e.}, $P$ satisfies the following condition (may be loose): $$ \text{Input Size} \times(g_e(P)+P) = o \left( \frac{\text{Computational Complexity}}{ P } \right).$$
Similarly, the decoding complexity is at most $\mathcal{O}(\text{Output Size}\times  g_d(P))$ which is the cost of decoding once for every symbol of the output and the communication to the fusion node is at most $\mathcal{O}(\text{Output Size} \times P)$. Thus, the decoding complexity can also be made negligible if $P$ satisfies: $$ \text{Output Size} \times (g_d(P)+P) = o \left( \frac{\text{Computational Complexity}}{ P } \right). $$
As an example, for coding matrix-matrix products, we require $P g_e(P), \ Pg_d(P)=o(N)$ for encoding and decoding to be done online with negligible overhead as compared to the per-node computational complexity. For convolutions, a tighter result of this form is derived in \cite{SanghamitraISIT2017}. Thus coded computing is provably useful in certain scaling regimes of $P$ and $N$.
\item[(b)] Size of the input is comparable or much larger in scaling sense than the overall computational complexity but the size of outputs is much smaller than computational complexity. Examples include matrix-vector products (Input Size, Computational Complexity$=\Theta(N^2)$, Output Size $=\Theta(N)$). For such computations, the encoding cost at the master node is \textbf{at least} of the same order as the input size since every symbol in the input has to be used in the encoding operation at least once. Thus, it is best if we are able to encode in advance and amortize costs over multiple computations. Decoding, on the other hand, can be performed online with negligible additional overhead if
$$ \text{Output Size} \times (g_d(P)+P) = o \left( \frac{\text{Computational Complexity}}{ P } \right).   $$
Thus coded computing is most useful when the input is known in advance and the encoding cost can be amortized.

\item[(c)] Size of the outputs is comparable to the overall computational complexity. One example is computing tensor products (Input Size $=\Theta(N^2)$, Computational Complexity, Output Size $=\Theta(N^4)$). For linear operations where the size of the output is comparable to the computational complexity, the decoding complexity could be as high as the computational complexity making it difficult to decode online. It would be interesting to find applications or regimes where coded computing strategies can still be useful for such problems. Interestingly, for tensor products, \textit{encoding} can be done online as the size of the input is smaller in scaling sense than the overall computational complexity.
\end{itemize}

\subsection{Fully-Decentralized Implementations}
It will also be useful to obtain fully decentralized realizations of coded computing techniques with no centralized master node. This often avoids a ``single source of failure'', particularly if the encoder or decoder are themselves prone to straggling or errors. We refer interested readers to \cite{jeongFFT,mds_codenet,DNNPaperISIT,yang2014can,Tay_Bel_68,Yaoqing2017ITTrans} for works on completely decentralized implementations.

\section*{Acknowledgments}
We thank Mayank Bakshi and Yaoqing Yang for helpful discussions. We acknowledge support from NSF CNS-1702694, CCF-1553248, CCF-1464336, and CCF-1350314. This work was also supported in part by Systems on Nanoscale Information fabriCs (SONIC), one of the six SRC STARnet Centers, sponsored by MARCO and DARPA.

\bibliographystyle{IEEEtran}
\bibliography{IEEEabrv,sample}

\appendices
\section{Proof of Theorem~\ref{thm:nmat}} \label{app:proof1}
We will first prove Lemmas \ref{usefuleven} and \ref{usefulodd} which provide properties of coefficients of products of polynomials. Using Lemmas \ref{usefuleven} and \ref{usefulodd}, we show Claims \ref{cl:nmat1} and \ref{cl:nmat2} which demonstrate that the product $\mathbf{C}$ is contained in a set of coefficients of the matrix polynomial  $\Pi_{i=1}^{\lceil\frac{n}{2}\rceil}p_{\mathbf{C}_{i}}(x^{m^{i-1}})$, where $p_{\mathbf{C}_{i}}(x)$ is as defined in  (\ref{eq:pc}), for $i \in \{1, \cdots, \lceil n/2 \rceil\}$. Finally, we provide a proof of Theorem \ref{thm:nmat} using Claims \ref{cl:nmat1} and \ref{cl:nmat2}.

\begin{lemma}\label{usefuleven}
    If $p(x)=\sum_{j=0}^{2d^{i-1}-2}p_jx^j$  is a polynomial with degree $2d^{i-1}-2$ for some $i\geq 2$, and $q(x)=\sum_{j=0}^{2d-2}q_jx^j$  is any other polynomial with degree $2d-2$, then  $p_{d^{i-1}-1}q_{d-1}$ is the coefficient of $x^{d^i-1}$ in $p(x)q(x^{d^{i-1}})$.
    \end{lemma}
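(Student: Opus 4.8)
The plan is to expand the product and read off which monomials land on $x^{d^i-1}$. Writing out
\[
p(x)\,q\!\left(x^{d^{i-1}}\right)=\left(\sum_{j=0}^{2d^{i-1}-2}p_j x^j\right)\left(\sum_{k=0}^{2d-2}q_k x^{k d^{i-1}}\right)=\sum_{j=0}^{2d^{i-1}-2}\sum_{k=0}^{2d-2}p_j q_k\, x^{\,j+k d^{i-1}},
\]
the coefficient of $x^{d^i-1}$ is the sum of $p_j q_k$ over all admissible pairs $(j,k)$ with $0\le j\le 2d^{i-1}-2$, $0\le k\le 2d-2$, and $j+k d^{i-1}=d^i-1$. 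So the entire content of the lemma is the claim that the only such pair is $(j,k)=(d^{i-1}-1,\,d-1)$.

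First I would verify that $(d^{i-1}-1,\,d-1)$ is admissible: it satisfies $(d^{i-1}-1)+(d-1)d^{i-1}=d^i-1$, and it lies in the required ranges since $0\le d^{i-1}-1\le 2d^{i-1}-2$ and $0\le d-1\le 2d-2$. Next, for uniqueness, I would reduce the exponent equation $j+k d^{i-1}=d^i-1$ modulo $d^{i-1}$. Since $k d^{i-1}\equiv 0$ and $d^i-1\equiv d^{i-1}-1\pmod{d^{i-1}}$, we get $j\equiv d^{i-1}-1\pmod{d^{i-1}}$. The non-negative integers in this residue class are $d^{i-1}-1,\ 2d^{i-1}-1,\ 3d^{i-1}-1,\dots$, and only the first of these is $\le 2d^{i-1}-2=\deg p$ (the next candidate $2d^{i-1}-1$ strictly exceeds $\deg p$). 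Hence $j=d^{i-1}-1$ is forced, and substituting back gives $k d^{i-1}=d^i-d^{i-1}=(d-1)d^{i-1}$, so $k=d-1$. Therefore the coefficient of $x^{d^i-1}$ collapses to the single term $p_{d^{i-1}-1}q_{d-1}$, which is exactly the assertion.

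I do not anticipate a real obstacle here: the statement is a divisibility/counting fact about the exponents, and the only care needed is the range bookkeeping — in particular that $\deg p = 2d^{i-1}-2$ \emph{exactly} (so that $2d^{i-1}-1$ is genuinely out of range) and that $i\ge 2$ (used only so that the hypotheses are non-degenerate; the same argument in fact goes through for $i=1$, where $p$ is a constant). If desired one can also record that in the intended application $d\ge 2$, though this is not needed for the proof.
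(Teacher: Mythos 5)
Your proof is correct. Where the paper splits $p$ and $q$ into three pieces around the terms $p_{d^{i-1}-1}x^{d^{i-1}-1}$ and $q_{d-1}x^{d-1}$ and then rules out every cross product by comparing the largest and smallest degrees that each of the four "wrong" products can produce, you instead write the product as a double sum and show directly that the exponent equation $j+kd^{i-1}=d^i-1$ has a unique admissible solution, by reducing modulo $d^{i-1}$: the congruence forces $j\equiv d^{i-1}-1$, the degree bound $j\le 2d^{i-1}-2$ then pins down $j=d^{i-1}-1$, and $k=d-1$ follows. This is a genuinely different (and arguably cleaner) route: it replaces the paper's four-case degree bookkeeping with a one-line uniqueness-of-representation argument, and it makes transparent that the only hypothesis really used is the strict bound $\deg p\le 2d^{i-1}-2<2d^{i-1}-1$ (exact equality of the degree is not needed, contrary to your parenthetical caution, since the summation form already guarantees $p_j=0$ beyond that range). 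It also exposes the common mechanism behind this lemma, Lemma~\ref{usefulodd}, and the bijection in Lemma~\ref{lem:bijection} (uniqueness of mixed-radix digits), so it would generalize with no extra work; the paper's argument, by contrast, is more elementary in the sense of using only degree comparisons, but is longer and must be repeated almost verbatim for the odd case.
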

\begin{proof}
We first expand out $p(x)$ and $q(x)$ as following:
\begin{align}p(x)=\underbrace{\sum_{j=0}^{d^{i-1}-2}p_jx^j}_{\tilde{p}_1(x)}+p_{d^{i-1}-1}x^{d^{i-1}-1}+\underbrace{\sum_{j=d^{i-1}}^{2d^{i-1}-2}p_jx^j}_{\tilde{p}_2(x)}\label{l1}\end{align}
\begin{align}q(x)=\underbrace{\sum_{j=0}^{d-2}q_jx^j}_{\tilde{q}_1(x)}+q_{d-1}x^{d-1}+\underbrace{\sum_{j=d}^{2d-2}q_jx^j}_{\tilde{q}_2(x)}\label{l2}\end{align}

We show that the term of degree $d^i-1$ in $p(x)q(x^{d^{i-1}})$ is only generated by multiplication of  the term of degree $d^{i-1}-1$ in $p(x)$ and the term of degree $d^{i-1}(d-1)$ in $q(x^{d^{i-1}})$. For this purpose, we  consider following terms:
\begin{itemize}
\item[1.] Consider the multiplication of two lowest degree  terms in $\tilde{p}_1(x)$ and $\tilde{q}_2(x^{d^{i-1}})$ of equations (\ref{l1}) and (\ref{l2}).
That is, $q_{d^i}x^{d^i} p_0=p_0q_{d^i}x^{d^i}$ which has higher degree in comparison to $x^{d^i-1}$. Consequently, the degree of any term in the multiplication of  $\tilde{p}_1(x)$ and $\tilde{q}_2(x^{d^{i-1}})$ will be strictly greater than $d^{i}-1$.
\item[2.] Consider the multiplication of two highest degree of terms in $\tilde{p}_2(x)$ and $\tilde{q}_1(x^{d^{i-1}})$ of equations (\ref{l1}) and (\ref{l2}),  $$q_{d-2}x^{d^{i-1}(d-2)}p_{2d^{i-1}-2}x^{2d^{i-1}-2}=q_{d-2}p_{2d^{i-1}-2}x^{d^{i}-2}$$ is less than $d^i-1$. Consequently, the degree of any term in the multiplication of  $\tilde{p}_2(x)$ and $\tilde{q}_1(x^{d^{i-1}})$ will be strictly less than $d^{i}-1$.
\item[3.] Since the degree of any term in the multiplication of  $\tilde{p}_2(x)$ and $\tilde{q}_1(x^{d^{i-1}})$ is strictly less than $d^{i}-1$,  and any term in $\tilde{p}_1(x)$ has degree less than the degree of any term in $\tilde{p}_2(x)$, we conclude that any term in the multiplication of $\tilde{p}_1(x)$ and $\tilde{q}_1(x^{d^{i-1}})$ has degree strictly less than $d^i-1$.
\item[4.] Since the degree of any term in the multiplication of  $\tilde{p}_1(x)$ and $\tilde{q}_2(x^{d^{i-1}})$ is strictly greater than $d^{i}-1$, and any term in $\tilde{p}_2(x)$ has degree larger than the degree of any term in $\tilde{p}_1(x)$, we conclude that any term in $\tilde{p}_2(x)\tilde{q}_2(x^{d^{i-1}})$ has degree strictly greater than $d^i-1$ which completes the proof.
\end{itemize}
\end{proof}

 \begin{lemma}\label{usefulodd}
    If $p(x)=\sum_{j=0}^{2d^{i-1}-2}p_jx^j$  is a polynomial with degree $2d^{i-1}-2$ for some $i\geq 2$, and $q(x)=\sum_{j=0}^{d-1}q_jx^j$ is any other polynomial with  degree $d-1$, then,  for $0\leq j\leq d-1$, $p_{d^{i-1}-1}q_{j}$ are the coefficients of $x^{(j+1)d^{i-1}-1}$ in $p(x)q(x^{d^{i-1}})$.
    \end{lemma}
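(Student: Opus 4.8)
The plan is to mimic the structure of the proof of Lemma~\ref{usefuleven}, splitting $p(x)$ around its middle coefficient and then tracking which pairs of monomials can land on the target degree. Concretely, I would write
\begin{equation*}
p(x) = \underbrace{\sum_{j=0}^{d^{i-1}-2} p_j x^j}_{\tilde p_1(x)} \; + \; p_{d^{i-1}-1}\, x^{d^{i-1}-1} \; + \; \underbrace{\sum_{j=d^{i-1}}^{2d^{i-1}-2} p_j x^j}_{\tilde p_2(x)},
\end{equation*}
so that $\tilde p_1$ has degree at most $d^{i-1}-2$ and every monomial of $\tilde p_2$ has degree at least $d^{i-1}$. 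The key structural feature to exploit is that $q(x^{d^{i-1}}) = \sum_{k=0}^{d-1} q_k\, x^{k d^{i-1}}$ is supported only on the exponents $0, d^{i-1}, 2d^{i-1},\ldots,(d-1)d^{i-1}$, i.e. on multiples of $d^{i-1}$, so its monomials are spread far apart.

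Next I would fix $j \in \{0,\ldots,d-1\}$ and count the pairs $(a,k)$ with $0 \le a \le 2d^{i-1}-2$ and $0 \le k \le d-1$ contributing a monomial $p_a q_k x^{a + k d^{i-1}}$ of degree exactly $(j+1)d^{i-1}-1$. Setting $\ell = j+1-k$, the degree equation forces $a = \ell d^{i-1} - 1$; then $a \ge 0$ forces $\ell \ge 1$, while $a \le 2d^{i-1}-2$ forces $\ell d^{i-1} \le 2d^{i-1}-1$, hence $\ell \le 1$ (using $d^{i-1}\ge 1$). Thus $\ell = 1$, i.e. $k = j$ and $a = d^{i-1}-1$, which lies in the middle term of $p(x)$. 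Consequently the only contribution to the coefficient of $x^{(j+1)d^{i-1}-1}$ in $p(x)q(x^{d^{i-1}})$ is $p_{d^{i-1}-1}\, q_j$, as claimed. If one prefers to match the bulleted style of Lemma~\ref{usefuleven}'s proof, the same computation can be repackaged as: no cross term from $\tilde p_1(x)$ (resp. $\tilde p_2(x)$) times a monomial of $q(x^{d^{i-1}})$ can have degree $(j+1)d^{i-1}-1$, since such a degree would force $a = \ell d^{i-1}-1$ with $\ell \le 0$ (resp. $\ell \ge 2$), contradicting the range of $a$.

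I do not anticipate a genuine obstacle; this is essentially bookkeeping on exponents. The only points requiring mild care are the boundary cases $k = j\pm 1$, which are ruled out precisely because they would need $a = -1$ or $a = 2d^{i-1}-1$, just outside the valid index range for $p$, and the standing hypothesis $d \ge 2$ (so that $q$ genuinely has degree $d-1$ and the index range $0 \le j \le d-1$ behaves as stated). Since $q(x^{d^{i-1}})$ already carries gaps of size $d^{i-1}$, the single counting observation above subsumes all the case distinctions, so the write-up should be shorter than that of Lemma~\ref{usefuleven}.
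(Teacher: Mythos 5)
Your argument is correct. It rests on the same underlying observation as the paper's proof --- that $q(x^{d^{i-1}})$ is supported only on exponents that are multiples of $d^{i-1}$, so only the middle coefficient $p_{d^{i-1}-1}$ of $p$ can combine with $q_j x^{jd^{i-1}}$ to reach degree $(j+1)d^{i-1}-1$ --- but you execute it differently. The paper splits $p$ into low/middle/high parts $\tilde p_1$, $p_{d^{i-1}-1}x^{d^{i-1}-1}$, $\tilde p_2$ and likewise splits $q$ around $q_j x^j$, then rules out the four cross-products $\tilde p_a \tilde q_b$ by comparing extreme degrees in a bulleted case analysis. You instead solve the degree equation $a + k d^{i-1} = (j+1)d^{i-1}-1$ directly: writing $\ell = j+1-k$ forces $a = \ell d^{i-1}-1$, and the constraint $0 \le a \le 2d^{i-1}-2$ pins down $\ell = 1$, hence $k=j$ and $a = d^{i-1}-1$. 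This single counting step subsumes all of the paper's cases and yields a shorter write-up; the paper's version, in exchange, mirrors the structure of its companion Lemma on the even case, which makes the two proofs visually parallel. Your side remarks are fine: $k=j$ automatically lies in the admissible range $\{0,\dots,d-1\}$, and the hypothesis $d\ge 2$ is not actually needed for the statement (the argument degenerates harmlessly when $d=1$).
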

    \begin{proof}
   First, we expand out $p(x)$ as in (\ref{l1}), and expand $q(x)$ as follows:
\begin{align}q(x)=\underbrace{\sum_{k=0}^{j-1}q_k x^k}_{\tilde{q}_1(x)}+q_{j}x^{j}+\underbrace{\sum_{k=j+1}^{d-1}q_jx^j}_{\tilde{q}_2(x)}\label{eql2}\end{align}
    In order to prove Lemma \ref{usefulodd}, we show that  $x^{(j+1)d^{i-1}-1}$ term in $p(x)q(x^{d^{i-1}})$ is produced solely by the multiplication of the term $p_{d^{i-1}-1} x^{d^{i-1}-1}$ in $p(x)$ with the term $q_j x^{j(d^{i-1})}$ in $q(x^{d^{i-1}})$. First, it is clear that the product of the term $p_{d^{i-1}-1} x^{d^{i-1}-1}$ in $p(x)$ with the term $q_j x^{j(d^{i-1})}$ in $q(x^{d^{i-1}})$ has degree ${(j+1)d^{i-1}-1}$. Thus, to complete the proof, we  show that no other terms in $p(x)$ produce $x^{(j+1)d^{i-1}-1}$ term when multiplied with any term in $q(x^{d^{i-1}})$. To do so, we consider the following terms:

    \begin{itemize}
\item[1.] Consider the multiplication of two lowest degree  terms in $\tilde{p}_1(x)$ and $\tilde{q}_2(x^{d^{i-1}})$ as defined in (\ref{l1}) and (\ref{eql2}).
That is, $ p_0q_{j+1}x^{(j+1)d^{i-1}}$ which has higher degree in comparison to $x^{(j+1)d^{i-1}-1}$. Consequently, the degree of any term in the multiplication of  $\tilde{p}_1(x)$ and $\tilde{q}_2(x^{d^{i-1}})$ will be strictly greater than $(j+1)d^{i-1}-1$.
\item[2.] Consider the multiplication of two highest degree of terms in $\tilde{p}_2(x)$ and $\tilde{q}_1(x^{d^{i-1}})$,  the product $$q_{j-1}x^{(j-1)d^{i-1}}p_{2d^{i}-2}x^{2d^{i-1}-2}=q_{j-1}p_{2d^{i}-2}x^{(j+1)d^{i-1}-2}$$ has degree less than $(j+1)d^{i-1}-1$. Consequently, the degree of any term in the multiplication of  $\tilde{p}_2(x)$ and $\tilde{q}_1(x^{d^{i-1}})$ will be strictly less than $(j+1)d^{i-1}-1$.
\item[3.]  Since the degree of any term in the multiplication of  $\tilde{p}_2(x)$ and $\tilde{q}_1(x^{d^{i-1}})$ is strictly less than $(j+1)d^{i-1}-1$, and the degree of any term in   $\tilde{p}_1(x)$ is less than the degree of any term in $\tilde{p}_2(x)$, we conclude  that any term in the multiplication of $\tilde{p}_1(x)$ and $\tilde{q}_1(x^{d^{i-1}})$ has degree strictly less than $(j+1)d^{i-1}-1$.
\item[4.] Since the degree of any term in the multiplication of  $\tilde{p}_1(x)$ and $\tilde{q}_2(x^{d^{i-1}})$ is strictly greater than $(j+1)d^{i-1}-1$, and  the degree of any term in   $\tilde{p}_2(x)$ is larger than the degree of any term in $\tilde{p}_1(x)$, we conclude that any term in $\tilde{p}_2(x)\tilde{q}_2(x^{d^{i-1}})$ has degree strictly greater than $(j+1)d^{i-1}-1$ which completes the proof.
    \end{itemize}
    \end{proof}

Now, we are able to state the following claims.

\begin{claim}\label{cl:nmat1}
   The coefficient of $x^{m^{\lfloor \frac{n}{2}\rfloor}-1}$ in $\prod_{i=1}^{\lfloor \frac{n}{2}\rfloor} p_{\mathbf{C}^{(i)}}(x^{m^{i-1}})$ is $\prod_{i=1}^{\lfloor \frac{n}{2} \rfloor} \mathbf{A}^{(i)} \mathbf{B}^{(i)}$, where, for $i \in \{1, \cdots, \lfloor \frac{n}{2}\rfloor\}$, $p_{\mathbf{C}^{(i)}}(x)$ is as defined  in (\ref{eq:pc}).
 \end{claim}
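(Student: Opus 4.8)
The plan is to prove Claim~\ref{cl:nmat1} by induction on the number of index-pairs, using Lemma~\ref{usefuleven} as the single-step tool. For $1 \le k \le \lfloor n/2 \rfloor$ set $P_k(x) := \prod_{i=1}^{k} p_{\mathbf{C}^{(i)}}(x^{m^{i-1}})$; I will show that $P_k(x)$ has degree $2m^k-2$ and that the coefficient of $x^{m^k-1}$ in $P_k(x)$ equals $\mathbf{A}^{(1)}\mathbf{B}^{(1)}\mathbf{A}^{(2)}\mathbf{B}^{(2)}\cdots\mathbf{A}^{(k)}\mathbf{B}^{(k)}$, i.e.\ $\prod_{i=1}^k \mathbf{A}^{(i)}\mathbf{B}^{(i)}$ taken in increasing order of $i$. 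Specializing to $k=\lfloor n/2\rfloor$ then gives the claim.

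First I would record two elementary facts. Since $p_{\mathbf{A}^{(i)}}(x)=\sum_{j=1}^m \mathbf{A}^{(i)}_j x^{j-1}$ and $p_{\mathbf{B}^{(i)}}(x)=\sum_{j=1}^m \mathbf{B}^{(i)}_j x^{m-j}$ each have degree $m-1$, the polynomial $p_{\mathbf{C}^{(i)}}(x)=p_{\mathbf{A}^{(i)}}(x)p_{\mathbf{B}^{(i)}}(x)$ has degree $2m-2$, and — matching equal powers exactly as in the MatDot dot-product identity (\ref{eq:dot}) — its coefficient of $x^{m-1}$ is $\sum_{j=1}^m \mathbf{A}^{(i)}_j \mathbf{B}^{(i)}_j = \mathbf{A}^{(i)}\mathbf{B}^{(i)}$ (this is precisely Remark~\ref{rmk:nmat}). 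Hence $p_{\mathbf{C}^{(i)}}(x^{m^{i-1}})$ has degree $(2m-2)m^{i-1}$ with the coefficient of $x^{(m-1)m^{i-1}}$ equal to $\mathbf{A}^{(i)}\mathbf{B}^{(i)}$. Summing the geometric series, $\deg P_k=\sum_{i=1}^k (2m-2)m^{i-1}=2(m^k-1)$, which settles the degree assertion for every $k$ at once; in particular $\deg P_{k-1}=2m^{k-1}-2$, which is exactly the hypothesis Lemma~\ref{usefuleven} needs.

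The base case $k=1$ is the first of these facts. For the inductive step ($k\ge 2$), assume the coefficient of $x^{m^{k-1}-1}$ in $P_{k-1}(x)$ is $\prod_{i=1}^{k-1}\mathbf{A}^{(i)}\mathbf{B}^{(i)}$, and write $P_k(x)=P_{k-1}(x)\,p_{\mathbf{C}^{(k)}}(x^{m^{k-1}})$. I would apply Lemma~\ref{usefuleven} with $d=m$ and with its ``$i$'' equal to $k$, taking $p=P_{k-1}$ (degree $2m^{k-1}-2$) and $q=p_{\mathbf{C}^{(k)}}$ (degree $2m-2$): the lemma says the coefficient of $x^{m^k-1}$ in $P_{k-1}(x)\,p_{\mathbf{C}^{(k)}}(x^{m^{k-1}})$ is the product of the middle coefficient $p_{m^{k-1}-1}$ of $P_{k-1}$ and the middle coefficient $q_{m-1}$ of $p_{\mathbf{C}^{(k)}}$. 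By the induction hypothesis and the second elementary fact this is $\bigl(\prod_{i=1}^{k-1}\mathbf{A}^{(i)}\mathbf{B}^{(i)}\bigr)\mathbf{A}^{(k)}\mathbf{B}^{(k)}=\prod_{i=1}^{k}\mathbf{A}^{(i)}\mathbf{B}^{(i)}$, closing the induction.

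The one point requiring care — and the thing I would flag as the main (mild) obstacle — is that Lemma~\ref{usefuleven} is phrased for scalar polynomials, whereas here the coefficients are matrices and multiplication is noncommutative. This causes no real trouble: the proof of Lemma~\ref{usefuleven} is a pure degree-counting argument that never swaps the order of the two factors, so it transfers verbatim to matrix-valued polynomials as long as one keeps $p=P_{k-1}$ on the left and $q=p_{\mathbf{C}^{(k)}}$ on the right — which is exactly what makes the extracted coefficient come out as the ordered product $\mathbf{A}^{(1)}\mathbf{B}^{(1)}\cdots\mathbf{A}^{(k)}\mathbf{B}^{(k)}$ and not some permutation of it. Apart from that, everything is the exponent bookkeeping already carried out above.
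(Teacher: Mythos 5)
Your proof is correct and follows essentially the same route as the paper: the paper also establishes the claim by repeatedly applying Lemma~\ref{usefuleven} (with $d=m$) to fold in one $p_{\mathbf{C}^{(i)}}(x^{m^{i-1}})$ factor at a time, using Remark~\ref{rmk:nmat} for the middle coefficient of each factor; your version merely packages that iteration as a formal induction with the degree bookkeeping $\deg P_k = 2m^k-2$ made explicit. The remark about order-preservation of the matrix coefficients is a fair point of care, but nothing beyond what the paper's argument already implicitly uses.
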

    \begin{proof}
    We prove  the claim iteratively.  Since $p_{\mathbf{C}^{(1)}}(x)$ has degree $2m^{i-1}-2$ with $i=2$, and $p_{\mathbf{C}^{(2)}}(x)$ has degree $2m-2$, we have, by Lemma \ref{usefuleven}, that  the coefficient of $x^{m^2-1}$  in $p_{\mathbf{C}^{(1)}}(x)p_{\mathbf{C}^{(2)}}(x^{m})$ is the product of the coefficient of $x^{m-1}$ in $p_{\mathbf{C}^{(1)}}(x)$  and the coefficient of $x^{m^2-m}$ in $p_{\mathbf{C}^{(2)}}(x^m)$. However,  from Remark \ref{rmk:nmat}, we already know that $\mathbf{A}^{(1)} \mathbf{B}^{(1)}$ is the coefficient of $x^{m-1}$ in $p_{\mathbf{C}^{(1)}}(x)$ and that  $\mathbf{A}^{(2)} \mathbf{B}^{(2)}$ is the coefficient of $x^{m^2-m}$ in $p_{\mathbf{C}^{(2)}}(x^m)$. Therefore, $\mathbf{A}^{(1)} \mathbf{B}^{(1)} \mathbf{A}^{(2)} \mathbf{B}^{(2)}$ is the coefficient of  $x^{m^2-1}$  in $p_{\mathbf{C}^{(1)}}(x)p_{\mathbf{C}^{(2)}}(x^{m})$.

    Similarly, consider the two polynomials $p'(x)=p_{\mathbf{C}^{(1)}}(x)p_{\mathbf{C}^{(2)}}(x^{m})$ and $p_{\mathbf{C}^{(3)}}(x)$. Notice that $p'(x)$ has degree $2m^{i-1}-2$ with $i=3$, and $p_{\mathbf{C}^{(3)}}(x)$ has degree $2m-2$, therefore, from Lemma \ref{usefuleven},   the coefficient of $x^{m^3-1}$  in $p'(x)p_{\mathbf{C}^{(3)}}(x^{m^2})$ is the product of the coefficient of $x^{m^2-1}$ in $p'(x)$  and the coefficient of $x^{m^3-m^2}$ in $p_{\mathbf{C}^{(3)}}(x^{m^2})$. However,  from the previous step, we already know that $\mathbf{A}^{(1)} \mathbf{B}^{(1)} \mathbf{A}^{(2)} \mathbf{B}^{(2)}$ is the coefficient of $x^{m^2-1}$ in $p'(x)$. In addition, from Remark \ref{rmk:nmat},  we already know that  $\mathbf{A}^{(3)} \mathbf{B}^{(3)}$ is the coefficient of $x^{m^3-m^2}$ in $p_{\mathbf{C}^{(3)}}(x^{m^2})$. Therefore, $\mathbf{A}^{(1)} \mathbf{B}^{(1)} \mathbf{A}^{(2)} \mathbf{B}^{(2)} \mathbf{A}^{(3)} \mathbf{B}^{(3)}$ is the coefficient of  $x^{m^3-1}$  in $p'(x)p_{\mathbf{C}^{(3)}}(x^{m^2})=$ $p_{\mathbf{C}^{(1)}}(x)p_{\mathbf{C}^{(2)}}(x^{m}) p_{\mathbf{C}^{(3)}}(x^{m^2})$.

    Repeating the same procedure, we conclude that   $\prod_{i=1}^{\lfloor \frac{n}{2} \rfloor} \mathbf{A}^{(i)} \mathbf{B}^{(i)}$ is the coefficient of $x^{m^{\lfloor \frac{n}{2}\rfloor}-1}$ in $\prod_{i=1}^{\lfloor \frac{n}{2}\rfloor} p_{\mathbf{C}^{(i)}}(x^{m^{i-1}})$.
    \end{proof}

\begin{claim}\label{cl:nmat2}
If $n \geq 3$ and odd, then, for any $j\in \{1, \cdots, m\}$, $\left(\prod_{i=1}^{\lfloor \frac{n}{2} \rfloor} \mathbf{A}^{(i)} \mathbf{B}^{(i)} \right) \mathbf{A}^{(\lceil\frac{n}{2}\rceil)}_{j}$ is the coefficient of $x^{j m^{\lfloor \frac{n}{2}\rfloor}-1}$ in $\prod_{i=1}^{\lceil \frac{n}{2}\rceil} p_{\mathbf{C}^{(i)}}(x^{m^{i-1}})$, where, for $i \in \{1, \cdots, \lceil \frac{n}{2}\rceil\}$, $p_{\mathbf{C}^{(i)}}(x)$ is as defined  in (\ref{eq:pc}).
\end{claim}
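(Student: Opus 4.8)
The plan is to reduce the statement to a single application of Lemma~\ref{usefulodd}, using Claim~\ref{cl:nmat1} as a black box to identify one of the two polynomial coefficients that appear.

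First I would set $p'(x) = \prod_{i=1}^{\lfloor n/2\rfloor} p_{\mathbf{C}^{(i)}}(x^{m^{i-1}})$ and compute its degree. Each factor $p_{\mathbf{C}^{(i)}}(x) = p_{\mathbf{A}^{(i)}}(x)p_{\mathbf{B}^{(i)}}(x)$ has degree $2m-2$, so $p_{\mathbf{C}^{(i)}}(x^{m^{i-1}})$ has degree $(2m-2)m^{i-1}$ and hence $\deg p' = \sum_{i=1}^{\lfloor n/2\rfloor}(2m-2)m^{i-1} = 2\left(m^{\lfloor n/2\rfloor}-1\right)$. Writing $d = m$ and $i^\ast = \lceil n/2\rceil$, and using that $n$ odd gives $i^\ast - 1 = \lfloor n/2\rfloor$, this degree is exactly $2d^{i^\ast - 1}-2$, so $p'$ is eligible to play the role of the polynomial ``$p$'' in Lemma~\ref{usefulodd} with exponent $i^\ast$. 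The remaining factor is $q(x) := p_{\mathbf{C}^{(\lceil n/2\rceil)}}(x) = p_{\mathbf{A}^{(\lceil n/2\rceil)}}(x) = \sum_{k=1}^{m}\mathbf{A}^{(\lceil n/2\rceil)}_k x^{k-1}$, which has degree $m-1 = d-1$, so $q$ plays the role of ``$q$.'' Since $m^{\lceil n/2\rceil - 1} = m^{\lfloor n/2\rfloor} = d^{i^\ast - 1}$, we have $\prod_{i=1}^{\lceil n/2\rceil} p_{\mathbf{C}^{(i)}}(x^{m^{i-1}}) = p'(x)\, q(x^{d^{i^\ast - 1}})$, which is precisely the polynomial Lemma~\ref{usefulodd} analyzes.

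Applying Lemma~\ref{usefulodd}, for each $0 \le \ell \le d-1$ the coefficient of $x^{(\ell+1)d^{i^\ast-1}-1}$ in $p'(x)q(x^{d^{i^\ast-1}})$ equals $p'_{d^{i^\ast-1}-1}\, q_\ell$, where $p'_{d^{i^\ast-1}-1}$ is the coefficient of $x^{m^{\lfloor n/2\rfloor}-1}$ in $p'$ and $q_\ell$ is the coefficient of $x^\ell$ in $q$. By Claim~\ref{cl:nmat1}, $p'_{d^{i^\ast-1}-1} = \prod_{i=1}^{\lfloor n/2\rfloor}\mathbf{A}^{(i)}\mathbf{B}^{(i)}$, and from the expansion of $q$ we read off $q_\ell = \mathbf{A}^{(\lceil n/2\rceil)}_{\ell+1}$. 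Re-indexing by $j = \ell+1 \in \{1,\dots,m\}$ converts the exponent $(\ell+1)d^{i^\ast-1}-1$ into $j\,m^{\lfloor n/2\rfloor}-1$ and yields the coefficient $\left(\prod_{i=1}^{\lfloor n/2\rfloor}\mathbf{A}^{(i)}\mathbf{B}^{(i)}\right)\mathbf{A}^{(\lceil n/2\rceil)}_j$, which is exactly the claim. The only points requiring care are checking that $\deg p'$ is \emph{exactly} $2d^{i^\ast-1}-2$ so that the hypothesis of Lemma~\ref{usefulodd} is met, and tracking the off-by-one between the lemma's $0$-indexed coefficients and the construction's $1$-indexed blocks $\mathbf{A}^{(i)}_k$; neither is a genuine obstacle, so the argument is essentially bookkeeping once the two lemmas and Claim~\ref{cl:nmat1} are in place.
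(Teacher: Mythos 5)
Your proposal is correct and follows essentially the same route as the paper's own proof: compute the degree $2m^{\lfloor n/2\rfloor}-2$ of $\prod_{i=1}^{\lfloor n/2\rfloor} p_{\mathbf{C}^{(i)}}(x^{m^{i-1}})$, apply Lemma~\ref{usefulodd} with that product as ``$p$'' and $p_{\mathbf{C}^{(\lceil n/2\rceil)}}(x)$ (degree $m-1$) as ``$q$'', and then identify the two coefficients via Claim~\ref{cl:nmat1} and the definition of $p_{\mathbf{A}^{(\lceil n/2\rceil)}}$. The only differences are notational (your $d$, $i^\ast$, and the $0$-to-$1$ index shift), so the arguments are interchangeable.
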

\begin{proof}
First, notice that since the degree of $p_{\mathbf{C}^{(i)}}(x)$ is $2m-2$ for all $i \in \{1, \cdots, \lfloor \frac{n}{2} \rfloor\}$, the degree of $\Pi_{i=1}^{\lfloor\frac{n}{2}\rfloor}p_{\mathbf{C}^{(i)}}(x^{m^{i-1}})$ is $(2m-2) \sum_{i=1}^{\lfloor \frac{n}{2} \rfloor} m^{i-1}= 2m^{\lfloor \frac{n}{2} \rfloor}-2$. In addition, the matrix polynomial  $p_{\mathbf{C}^{(\lceil \frac{n}{2} \rceil)}}(x)$ has degree $m-1$. Therefore, from Lemma \ref{usefulodd}, for $1\leq j \leq m$, the product of the coefficient of $x^{m^{\lfloor \frac{n}{2} \rfloor}-1}$ in $\Pi_{i=1}^{\lfloor\frac{n}{2}\rfloor}p_{\mathbf{C}^{(i)}}(x^{m^{i-1}})$ and the coefficient of $x^{j-1}$ in $p_{\mathbf{C}^{(\lceil \frac{n}{2} \rceil)}}(x)$ is the coefficient of $x^{j m^{\lfloor \frac{n}{2} \rfloor}-1}$ in $\Pi_{i=1}^{\lceil\frac{n}{2}\rceil} p_{\mathbf{C}^{(i)}}(x^{m^{i-1}})$. However, we already know, from Claim \ref{cl:nmat1}, that the  coefficient of $x^{m^{\lfloor \frac{n}{2} \rfloor}-1}$ in $\Pi_{i=1}^{\lfloor\frac{n}{2}\rfloor} p_{\mathbf{C}^{(i)}}(x^{m^{i-1}})$ is $\prod_{i=1}^{\lfloor \frac{n}{2} \rfloor} \mathbf{A}^{(i)} \mathbf{B}^{(i)}$, also, by definition, the coefficient of $x^{j-1}$ in $p_{\mathbf{C}^{(\lceil \frac{n}{2} \rceil)}}(x)$ is $\mathbf{A}^{(\lceil \frac{n}{2} \rceil)}_{j}$. Thus, $\left(\prod_{i=1}^{\lfloor \frac{n}{2} \rfloor} \mathbf{A}^{(i)} \mathbf{B}^{(i)}\right) \mathbf{A}^{(\lceil\frac{n}{2}\rceil)}_{j}$ is the coefficient of $x^{j m^{\lfloor \frac{n}{2}\rfloor}-1}$ in $\prod_{i=1}^{\lceil \frac{n}{2}\rceil} p_{\mathbf{C}^{(i)}}(x^{m^{i-1}})$.
\end{proof}

Now, we  prove Theorem \ref{thm:nmat}.
 \begin{proof} [Proof of Theorem \ref{thm:nmat}]
 To prove the theorem, it suffices to show that for Construction \ref{con:nmatcod}, the fusion node is able to construct $\mathbf{C}$ from any $2m^{n/2}-1$ worker nodes if $n$ is even or from any $(m+1)m^{\lfloor \frac{n}{2}\rfloor}-1$ if $n$ is odd.

 First, for the case in which $n$ is even, we need to compute $\mathbf{C}=\prod_{i=1}^{\frac{n}{2}}\mathbf{A}^{(i)}\mathbf{B}^{(i)}$. Notice, from Claim \ref{cl:nmat1}, that the desired matrix product $\mathbf{C}$ is the coefficient of $x^{m^{n/2}-1}$ in $\prod_{i=1}^{ \frac{n}{2}} p_{\mathbf{C}^{(i)}}(x^{m^{i-1}})$. Thus, it is sufficient to compute this coefficient at the fusion node as the computation output for successful computation.  Now, because the polynomial $\prod_{i=1}^{ \frac{n}{2}} p_{\mathbf{C}^{(i)}}(x^{m^{i-1}})$ has degree $2m^{n/2}-2$, evaluation of the polynomial at any $2m^{n/2}-1$ distinct points is sufficient to compute all of the coefficients of powers of $x$ in $\prod_{i=1}^{ \frac{n}{2}} p_{\mathbf{C}^{(i)}}(x^{m^{i-1}})$ using polynomial interpolation. This includes $\mathbf{C}$, the coefficient of $x^{m^{n/2}-1}$.

Now, for the case in which $n$ is odd, we need to compute $\mathbf{C}=\left(\prod_{i=1}^{\lfloor{\frac{n}{2}}\rfloor}\mathbf{ A}^{(i)} \mathbf{B}^{(i)} \right) \mathbf{A}^{(\lceil{\frac{n}{2}}\rceil)}$. First, notice that $\mathbf{C}$ is a concatenation of the matrices $\left(\prod_{i=1}^{\lfloor{\frac{n}{2}}\rfloor}\mathbf{A}^{(i)} \mathbf{B}^{(i)}\right) \mathbf{A}^{(\lceil{\frac{n}{2}}\rceil)}_{j}$, $j \in \{1, \cdots,m\}$ as follows:
\begin{align}\label{eq:conc}
  &\mathbf{C}=\left( \prod_{i=1}^{\lfloor{\frac{n}{2}}\rfloor}\mathbf{ A}^{(i)} \mathbf{B}^{(i)} \right) \mathbf{A}^{(\lceil{\frac{n}{2}}\rceil)}=\notag\\
  &\left[\left(\prod_{i=1}^{\lfloor{\frac{n}{2}}\rfloor}\mathbf{ A}^{(i)} \mathbf{B}^{(i)} \right) \mathbf{A}^{(\lceil{\frac{n}{2}}\rceil)}_{1} \hspace{2pt} \right| \hspace{2pt} \cdots \hspace{2pt} \left| \hspace{2pt} \left(\prod_{i=1}^{\lfloor{\frac{n}{2}}\rfloor}\mathbf{A}^{(i)} \mathbf{B}^{(i)} \right) \mathbf{A}^{(\lceil{\frac{n}{2}}\rceil)}_{m}\right].
\end{align}
From Claim \ref{cl:nmat2}, for all $j \in \{1, \cdots, m\}$, the product $\left(\prod_{i=1}^{\lfloor{\frac{n}{2}}\rfloor}\mathbf{A}^{(i)} \mathbf{B}^{(i)} \right) \mathbf{A}^{(\lceil{\frac{n}{2}}\rceil)}_{j}$ is the coefficient of $x^{j m^{\lfloor \frac{n}{2}\rfloor}-1}$ in $\prod_{i=1}^{ \lceil\frac{n}{2}\rceil} p_{\mathbf{C}^{(i)}}(x^{m^{i-1}})$. Thus, it is sufficient to compute these coefficients, for all  $j \in \{1, \cdots, m\}$, at the fusion node as the computation output for successful computation.  Now, because the polynomial $\prod_{i=1}^{ \lceil\frac{n}{2}\rceil} p_{\mathbf{C}^{(i)}}(x^{m^{i-1}})$ has degree $m^{\lfloor\frac{n}{2}\rfloor}(m+1)-2$, evaluation of the polynomial at any $m^{\lfloor\frac{n}{2}\rfloor}(m+1)-1$ distinct points is sufficient to compute all of the coefficients of powers of $x$ in $\prod_{i=1}^{\lceil \frac{n}{2}\rceil} p_{\mathbf{C}^{(i)}}(x^{m^{i-1}})$ using polynomial interpolation. This includes  the coefficients of $x^{j m^{\lfloor \frac{n}{2}\rfloor}-1}$, i.e. $\left(\prod_{i=1}^{\lfloor{\frac{n}{2}}\rfloor}\mathbf{ A}^{(i)} \mathbf{B}^{(i)} \right) \mathbf{A}^{(\lceil{\frac{n}{2}}\rceil)}_{j}$, for all $j \in \{1, \cdots, m\}$.
\end{proof}

\section{Proof of Theorem~\ref{thm:nmat2}} \label{app:proof2}

\begin{proof} [Proof of Theorem~\ref{thm:nmat2}]
    We will first show that the maximum degree of $p_{\mathbf{C}}(x)$ is  $k(n,s,t)-1$, and we will show that we can recover $\mathbf{C}$ from the polynomial $p_{\mathbf{C}}(x)$. In this proof, we will only show for even $n$ as proving for odd $n$ only has mechanical differences.

    Let us first rewrite $p_{\mathbf{C}}(x)$:
    \begin{equation}\label{eq:nmat_pcx}
    p_{\mathbf{C}}(x) = \sum_{\substack{i_1=1\cdots t, \cdots, i_n=1\cdots s\\ j_1=1\cdots s, \cdots, j_n=1\cdots t}} \mathbf{A}^{(1)}_{i_1, j_1} \mathbf{B}^{(1)}_{i_2, j_2} \cdots \mathbf{A}^{(n/2)}_{i_{n-1}, j_{n-1}} \mathbf{B}^{(n/2)}_{i_n, j_n} x^{i_1 + t (s-1+j_1-i_2)
    +\cdots + s^{n/2-1}t^{n/2}(s-1+j_{n-1} - i_n) + s^{n/2}t^{n/2} j_n}.
    \end{equation}

    We get the maximum degree when $i_1 = t-1, s-1+j_1-i_2 = 2s-2, \cdots, j_n = t-1$. Hence,
    \begin{align}
    \textnormal{maximum degree of } p_{\mathbf{C}}(x) &= t-1 + t(2s-2) + st(2t-2) + \cdots s^{n/2-1}t^{n/2} (2s-2) + s^{n/2}t^{n/2}(t-1) \\
    &= t -1 -2t + 2 s^{n/2} t^{n/2} + s^{n/2}t^{n/2+1} -s^{n/2}t^{n/2} \\
    &= s^{n/2}t^{n/2+1} + s^{n/2} t^{n/2} -t -1 \\
    &= k(n,s,t) -1.
    \end{align}

    We now want to show that the coefficient of $x^{d(n,i,j)}$ is $\textbf{C}_{i,j}$, the $(i,j)$-th entry of $\mathbf{C}$,   where $d(n,i,j)$ is defined as follows:
    \begin{align}
        {d(n,i,j)=}\left\{ \begin{array}{ll}
            i + t(s-1) + \cdots + s^{n/2-1}t^{n/2}(s-1) + j (s^{n/2}t^{n/2}) & \text{if $n$ is even,} \\
            i + t(s-1) + \cdots + s^{(n-1)/2}t^{(n-1)/2}(t-1) + j (s^{(n-1)/2}t^{(n+1)/2}) &\text{if $n$ is odd}
            \end{array}\right.\label{eq:dij}
    \end{align}

    Note that $\mathbf{C}_{i,j} = \sum_{j_1, j_2, \cdots, j_{n-1}} \mathbf{A}^{(1)}_{i,j_1} \mathbf{B}^{(1)}_{j_1, j_2} \mathbf{A}^{(2)}_{j_2, j_3} \mathbf{B}^{(2)}_{j_3, j_4} \cdots \mathbf{A}^{(n/2)}_{j_{n-2}, j_{n-1}} \mathbf{B}^{(n/2)}_{j_{n-1}, j}$. Among the terms in the sum in (\ref{eq:nmat_pcx}), $\mathbf{C}_{i,j}$ is the sum of terms that are from the $i$-th row of the first matrix $\mathbf{A}^{(1)}$ and the $j$-th column on the last matrix $\mathbf{B}^{(n/2)}$, and that have the second index and the first index of two adjacent matrices matching, i.e., $j_1 = i_2$ and $j_2 = i_3$. We can now show that from (\ref{eq:nmat_pcx}), we obtain $x^{d(n,i,j)}$ terms only when these conditions are satisfied:
    \begin{enumerate}[label={\roman*})]
    \item $i_1 = i$ \label{cond1}
    \item $j_1 = i_2, \cdots, j_{n-1} = i_n$ \label{cond2}
    \item $j_n = j$ \label{cond3}
    \end{enumerate}

    Let $d$ be the degree of $x$ in (\ref{eq:nmat_pcx})
    \begin{equation}\label{eq:d_radix}
    d = i_1 + t (s-1+j_1-i_2)+\cdots + s^{n/2-1}t^{n/2}(s-1+j_{n-1} - i_n) + s^{n/2}t^{n/2} j_n,
    \end{equation}
    and let us use an alternative representation of $d$ as follows:
    \begin{equation*}
    d = d_0 + d_1 \cdot t +d_2 \cdot st + \cdots + d_{n+1} \cdot s^{n/2}t^{n/2+1}
    \end{equation*}
    where
    \begin{align*}
    d_0 &= d \mod t \\
    d_1 &= (d-d_0)/t \mod s \\
    d_2 &= (d-d_0-d_1\cdot t)/st \mod t \\
    \vdots \\
    d_{n+1} &= (d-d_0-d_1\cdot t - d_n\cdot s^{n/2}t^{n/2})/s^{n/2}t^{n/2+1} \mod s.
    \end{align*}
    We can think of this representation as a mixed radix system $\mathcal{D}$ with $n+2$ digits, $(d_0, d_1, \cdots, d_{n+1})$, which has an alternating radix $(t, s, t, s, \cdots, t, s)$.
    By substituting $d_0 = t-1, d_1 = s-1, \cdots, d_{n+1} = s-1$, we can confirm that  the biggest number we can represent with (\ref{eq:d_radix}) is $s^{n+1} t^{n+1} -1  > k(n,s,t)-1$. Also, from its construction, any number between $0$ and $s^{n+1} t^{n+1} -1$ can be uniquely determined by the pair $(d_0, d_1, \cdots, d_{n+1})$ (for more explanation, see Theorem 1 in \cite{fraenkel1985systems}). Hence, any $0 \le d \le k(n,s,t)-1$ can be uniquely represented with $(d_0, d_1, \cdots, d_{n+1})$.

    Now, note that using (\ref{eq:d_radix}), $d(n,i,j)$ can be represented as $(d_0 = i, d_1 = t-1, \cdots, d_{n-1} = s-1, d_{n} = j, d_{n+1} = 0)$. Let us examine the condition for $d$ to have such digits. Let $d$ be any integer between $0$ and $k(n,s,t)-1$. Then, $d_0 = d \mod t = i_1$. Hence, $i_1 = i$ to have $d_0 = i$. Also, $d_1$ can be written as $d_1 = (j_2 - i_1 + s-1) \textnormal{ mod } s$. Since $j_2-i_1$ varies from $-s+1$ to $s-1$, $d_1 = s-1$ only when $j_2 = i_1$. By repeating this argument, we can prove the condition \ref{cond2}. Lastly, $d_n = j_n + 1$ if $s-1+j_{n-1}-i_n \ge s$ and $d_n = j_n$ if $s-1+j_{n-1}-i_n < s$. From our previous conditions, $j_{n-1}-i_n + s-1 = s-1$. Hence,  we can see that $d_n = j$ only when $j_n =j$. This completes the proof for $n$ even.
    \end{proof}

\end{document}